\documentclass[11pt]{article} 

\usepackage{fullpage} 
\usepackage{graphicx}
\usepackage{subcaption}
\usepackage{upref}
\usepackage{enumerate}
\usepackage{latexsym}
\usepackage{color,graphics}
\usepackage{relsize}
\usepackage{float}
\restylefloat{table}
\usepackage{algorithm}
\usepackage[redeflists]{IEEEtrantools}
\usepackage{centernot}
\usepackage{mathtools}
\usepackage{stmaryrd}
\usepackage{amsmath,amsthm,amssymb,algorithmic,epsfig,graphicx, tikz}
\usepackage{amsfonts}
\usepackage{varioref}
\usepackage[ansinew]{inputenc}
\usepackage{qcircuit}
\usepackage{amsmath}
\usepackage{amsfonts}
\usepackage{tikz} 
\usepackage{multirow}
\usepackage{bbm}
\usepackage{amssymb}
\usepackage{varioref}
\usepackage[ansinew]{inputenc}
\usepackage{authblk}
\usepackage{multicol}
\usepackage{textcomp}
\usepackage{booktabs}
\usepackage{multirow}
\newcommand{\norm}[1]{\left\lVert#1\right\rVert}
\newtheorem{theorem}{Theorem}[section]
\newtheorem{lemma}[theorem]{Lemma}

\newtheorem{result}{Result}
\newtheorem{assumption*}{Assumption}
\newtheorem{definition}{Definition}
\newcommand{\R}{\mathbb{R}}

\newcommand{\E}{\mathbb{E}}
\newcommand{\PP}{\mathbb{P}}

\def\ket#1{\mathinner{|{#1}\rangle}}
\newcommand{\braket}[2]{\langle #1|#2\rangle}
\renewcommand{\part}[2]{\frac{\partial #1}{\partial #2}}

\DeclarePairedDelimiter\ceil{\lceil}{\rceil}
\DeclarePairedDelimiter\floor{\lfloor}{\rfloor}

\makeatletter
\newcommand{\xMapsto}[2][]{\ext@arrow 0599{\Mapstofill@}{#1}{#2}}
\def\Mapstofill@{\arrowfill@{\Mapstochar\Relbar}\Relbar\Rightarrow}
\makeatother

\newcommand{\thmref}[1]{\hyperref[#1]{{Theorem~\ref*{#1}}}}
\newcommand{\lemref}[1]{\hyperref[#1]{{Lemma~\ref*{#1}}}}
\newcommand{\remref}[1]{\hyperref[#1]{{Remark~\ref*{#1}}}}
\newcommand{\corref}[1]{\hyperref[#1]{{Corollary~\ref*{#1}}}}
\newcommand{\eqnref}[1]{\hyperref[#1]{{Equation~(\ref*{#1})}}}
\newcommand{\claimref}[1]{\hyperref[#1]{{Claim~\ref*{#1}}}}
\newcommand{\remarkref}[1]{\hyperref[#1]{{Remark~\ref*{#1}}}}
\newcommand{\propref}[1]{\hyperref[#1]{{Proposition~\ref*{#1}}}}
\newcommand{\factref}[1]{\hyperref[#1]{{Fact~\ref*{#1}}}}
\newcommand{\defref}[1]{\hyperref[#1]{{Definition~\ref*{#1}}}}
\newcommand{\exampleref}[1]{\hyperref[#1]{{Example~\ref*{#1}}}}
\newcommand{\hypref}[1]{\hyperref[#1]{{Hypothesis~\ref*{#1}}}}
\newcommand{\secref}[1]{\hyperref[#1]{{Section~\ref*{#1}}}}
\newcommand{\chapref}[1]{\hyperref[#1]{{Chapter~\ref*{#1}}}}
\newcommand{\apref}[1]{\hyperref[#1]{{Appendix~\ref*{#1}}}}

\newcommand\blfootnote[1]{
  \begingroup
  \renewcommand\thefootnote{}\footnote{#1}
  \addtocounter{footnote}{-1}
  \endgroup
}

\AtBeginDocument{}

\begin{document}
\bstctlcite{IEEEexample:BSTcontrol}

\title{Quantum Algorithms for Deep Convolutional Neural Networks} 


\author[1]{Iordanis Kerenidis}
\author[1]{Jonas Landman}
\author[1]{Anupam Prakash}
\affil[1]{CNRS, IRIF, Universit\'e Paris Diderot, Paris, France}

\maketitle

\begin{abstract}
Quantum computing is a new computational paradigm that promises applications in several fields, including  machine learning. In the last decade, deep learning, and in particular Convolutional neural networks (CNN), have become essential for applications in signal processing and image recognition \cite{YLeCun, HandbookCNN}. Quantum deep learning, however remains a challenging problem, as it is difficult to implement non linearities with quantum unitaries \cite{schuld2014quest}. In this paper we propose a quantum algorithm for applying and training deep convolutional neural networks with a potential speedup. The quantum CNN (QCNN) is a shallow circuit, reproducing completely the classical CNN, by allowing non linearities and pooling operations. The QCNN is particularly interesting for deep networks and could allow new frontiers in image recognition, by using more or larger convolution kernels, larger or deeper inputs.
We introduce a new quantum tomography algorithm with $\ell_{\infty}$ norm guarantees, and new applications of probabilistic sampling in the context of information processing. 
We also present numerical simulations for the classification of the MNIST dataset to provide practical evidence for the efficiency of the QCNN.
\end{abstract} 

\blfootnote{Email: landman@irif.fr}

\thispagestyle{empty}

\setcounter{page}{1}

\section{Introduction}
The growing importance of deep learning in industry and in our society will require extreme computational power as the dataset sizes and the complexity of these algorithms are expected to increase. Quantum computers are a good candidate to answer this challenge.  The recent progress in the physical realization of quantum processors and the advances in quantum algorithms increase more than ever the need to understand their capabilities and limits. In particular, the field of quantum machine learning has witnessed many innovative algorithms that offer speedups over their classical counterparts \cite{qmeans, LMR13, lloyd2014quantum, KP17, WKS14}. 

Quantum deep learning, the problem of creating quantum circuit that enhance the operations of neural networks, has been studied in several works \cite{kerenidis2018neural, rebentrost2018quantum, wiebe2014quantum}. It however remains a challenging problem as as it is difficult to implement non linearities with quantum unitaries \cite{schuld2014quest}. Convolutional neural networks (CNN) are a type of deep learning architecture well suited for visual recognition, signal processing and time series. In this work we propose a quantum algorithm to perform a complete convolutional neural network (QCNN) that offers potential speedups over classical CNNs. We also provide  results of numerical simulations to evaluate the  running time and accuracy of the QCNN. 

Convolutional neural networks were originally developed by Y. LeCun \cite{YLeCun} in the 1980's. They have achieved many practical successes over the last decade, due to the novel computational power of GPUs. They are now the most widely used algorithms for image recognition tasks \cite{krizhevsky2012imagenet}. Their capacities have been used in various domains such as autonomous cars \cite{bojarski2016visualbackprop} or gravitational wave detection \cite{george2018deep}. CNNs have also opened the path to Generative Adversarial Networks (GANs) which are used to generate or reconstruct images \cite{goodfellow2014generative}. 
Despite these successes, CNNs suffer from a computational bottleneck that make deep CNNs resource expensive in practice.

On the other side, the growing interest on quantum computing has lead researchers to develop different variants of Quantum Neural Networks (QNN). This is a challenging problem for quantum computing due to the modular layer architecture of the neural networks and the presence of non linearities, pooling, and other non unitary operations \cite{schuld2014quest}. Several strategies have been proposed to implement a quantum neural networks \cite{kerenidis2018neural, wiebe2014quantum, beer2019efficient} and even for quantum convolutional neural networks \cite{cong2018quantum} designed to solve specific problems. Another path to design QNN on near term devices is the use of variational quantum circuits \cite{farhi2018classification, henderson2019quanvolutional, killoran2018continuous}. These approaches are interesting in the context of NISQ (Noisy Intermediate Scale Quantum) devices \cite{preskill2018quantum}, however more work is needed in order to provide evidence that such techniques can outperform classical neural networks.

\section{Main results}
In this paper, we design a quantum algorithm for a complete CNN, with a modular architecture that allows any number of layers, any number and size of kernels,  and that includes a large variety of non linearity and pooling methods. We introduce a quantum convolution product and a specific quantum sampling technique well suited for information recovery in the context of CNN. We also propose a quantum algorithm for backpropagation that allows an efficient training of our quantum CNN.

As explained in Section \ref{classicalCNN}, a single layer $\ell$ of the classical CNN does the following operations: from an input image $X^{\ell}$ seen as a 3D tensor, and a kernel $K^{\ell}$ seen as a 4D tensor, it performs a convolution $X^{\ell+1}=X^{\ell} * K^{\ell}$, followed by a non linear function, and followed by pooling. In the quantum case, we will obtain a quantum state corresponding to this output, approximated with error $\epsilon > 0$. To retrieve a classical description from this quantum state, we will apply an $\ell_{\infty}$ tomography (see Theorem \ref{thm:tom}) 
and sample the elements with high values, in order to reduce the computation while still getting the information that matters (see Section \ref{QCNNForward}). 

The QCNN can be directly compared to the classical CNN as it has the same inputs and outputs. 
We show that it offers a speedup compared to the classical CNN for both the forward pass and 
for training using backpropagation in certain cases.  For each layer, on the forward pass (Algorithm \ref{QCNNLayer}), the speedup is exponential in the size of the layer (number of kernels) and almost quadratic on the spatial dimension of the input. We next state informally the speedup for the forward pass, the formal version appears as Theorem \ref{theorem1}. 

\begin{result}\label{r1}{(Quantum Convolution Layer)\\}
Let $X^{\ell}$ be the input and $K^{\ell}$ be the kernel for layer $\ell$ of a convolutional neural network, and $f : \mathbb{R} \mapsto [0,C]$ with $C > 0$ be a non linear function so that 
$f(X^{\ell+1}) := f(X^{\ell}*K^{\ell})$ is the output for layer $\ell$. 

Given $X^{\ell}$ and $K^{\ell}$ stored in QRAM, there is a quantum algorithm that, for any precision parameters $\epsilon > 0$ and $\eta > 0$, creates quantum state $\ket{f(\overline{X}^{\ell+1})}$ 
such that $ \norm{ f(\overline{X}^{\ell+1}) - f(X^{\ell+1}) }_{\infty}  \leq 2M\epsilon$ and retrieves classical tensor $\mathcal{X}^{\ell+1}$ such that for each pixel $j$,

\begin{equation}
	\begin{cases}
	|\mathcal{X}^{\ell+1}_{j} - f(X^{\ell+1}_j)| \leq 2\epsilon \quad \text{if} \quad f(\overline{X}^{\ell+1}_{j}) \geq \eta\\
	\mathcal{X}^{\ell+1}_{j} = 0 \qquad\qquad\qquad \text{if} \quad f(\overline{X}^{\ell+1}_{j}) < \eta\\
	\end{cases}
\end{equation}
This algorithm runs in time 
 \begin{equation}
 \widetilde{O}\left(\frac{1}{\epsilon \eta^2 } \cdot \frac{M \sqrt{C}}{\sqrt{\mathbb{E}(f(\overline{X}^{\ell+1}))}}\right)
 \end{equation}
 where $\mathbb{E}(f(\overline{X}^{\ell+1}))$ represents the average value of $f(\overline{X}^{\ell+1})$, and $\widetilde{O}$ hides factors poly-logarithmic in the size of $X^{\ell}$ and $K^{\ell}$ and the parameter $M$ (defined in Equation (\ref{Mdefinition})) is the maximum product of norms from subregions of $X^{\ell}$ and $K^{\ell}$.

\end{result} 

We see that the number of  kernels, contribute only poly-logarithmic factors to the running time, allowing the QCNN to work with larger and in particular exponentially deeper kernels. The contribution from the input size  is hidden in the precision parameter $\eta$. Indeed, a sufficiently large fraction of pixels must be sampled from the output of the quantum convolution to retrieve the meaningful information. In the Numerical Simulations (Section \ref{NumericalSimulations}) we give estimates for $\eta$. The cost of generating the output $\mathcal{X}^{\ell+1}$ of the quantum convolutional layer 
can thus be much smaller than that for the classical CNN in certain cases, Section \ref{runningtime} provides a detailed comparison to the classical running time.

Following the forward pass, a loss function $\mathcal{L}$ is computed for the output of a classical CNN. The backpropagation algorithm is then used to calculate, layer by layer, the gradient of this loss with respect to the elements of the kernels $K^{\ell}$, in order to update them through gradient descent. The formal version of our quantum backpropagation algorithm is given as Theorem \ref{theorem2}

\begin{result}{(Quantum Backpropagation for Quantum CNN)\\}\label{r2}
Given the forward pass quantum algorithm in Result \ref{r1}, and given the kernel matrix $F^{\ell}$, the input matrices $A^{\ell}$ and $Y^{\ell}$, stored in the QRAM for each layer $\ell$, and a loss function $\mathcal{L}$, there is a quantum backpropagation algorithm that estimates each element of the gradient tensor $\frac{\partial \mathcal{L}}{\partial F^{\ell}}$ within additive error $\delta \norm{ \frac{\partial \mathcal{L}}{\partial F^{\ell}} } $, and updates them to perform a gradient descent. The running time of a single layer $\ell$ for quantum backpropagation is given by
\begin{equation}
O\left(\left(\left(\mu(A^{\ell})+\mu(\frac{\partial \mathcal{L}}{\partial Y^{\ell+1}})\right)\kappa(\frac{\partial \mathcal{L}}{\partial F^{\ell}})+\left(\mu(\frac{\partial \mathcal{L}}{\partial Y^{\ell+1}})+\mu(F^{\ell})\right)\kappa(\frac{\partial \mathcal{L}}{\partial Y^{\ell}})\right) \frac{\log{1/\delta}}{\delta^{2}}\right)
\end{equation}
where for a matrix $V \in \R^{n\times n}$, $\kappa(V)$ is the condition number and $\mu(V)\leq \sqrt{n}$ is a matrix dependent parameter defined in Equation (\ref{mudefinition}).
\end{result}

Details concerning the tensors and their matrix expansion or reshaping are given in Section \ref{tensors}, and a summary of all variables with their meaning and dimension is given in Section \ref{variable_summary_section}. Note that $X^{\ell}$, $Y^{\ell}$ and $A^{\ell}$ are different forms of the same input. Similarly $K^{\ell}$ and $F^{\ell}$ both refer to the kernels.

For the quantum back-propagation algorithm, we introduce a quantum tomography algorithm with $\ell_{\infty}$ norm guarantees, that could be of independent interest. It is exponentially faster than 
tomography with $\ell_2$ norm guarantees and is given as Theorem \ref{thm:tom} in Section \ref{proof:tom}. Numerical simulations on classifying the MNIST dataset show that our quantum CNN achieves a similar classification accuracy as the classical CNN. 

The paper is organized as follows: we first introduce notation and describe the mathematical details for the classical CNN in Section \ref{classicalCNN}. Then, we introduce some necessary quantum preliminaries (Section \ref{quantumpreliminaries}) before explaining our quantum algorithm in two parts: the forward quantum convolution layer (Section \ref{QCNNForward}) and the quantum backpropagation (Section \ref{backpropagation}). The final part presents the results of our numerical simulations (Section \ref{NumericalSimulations}) and our conclusions (Section \ref{conclusions}). A summary of the variables is given as Section \ref{variable_summary_section}, and the two algorithms for the forward and backpropagation phase of the QCNN are given as Algorithm \ref{QCNNLayer} and Algorithm \ref{QBackpropagation}.\\

\section{Classical Convolutional neural network (CNN)}\label{classicalCNN}

CNN is a specific type of neural network, designed in particular for image processing or time series. It uses the \emph{Convolution Product} as a main procedure for each layer. We will focus on image processing with a tensor framework for all elements of the network. Our goal is to explicitly describe the CNN procedures in a form that can be translated in the context of quantum algorithms. 
As a regular neural network, a CNN should learn how to classify any input, in our case images. The training consists of optimizing parameters, learned on the inputs and their corresponding labels. 

\subsection{Tensor representation}
Images, or more generally layers of the network, can be seen as tensors. A tensor is a generalization of a matrix to higher dimensions. For instance an image of height $H$ and width $W$ can be seen as a matrix in $\R^{H\times W}$, where every pixel is a greyscale value between 0 ans 255 (8 bit). However the three channels of color (RGB: Red Green Blue) must be taken into account, by stacking three times the matrix for each color. The whole image is then seen as a 3 dimensional tensor in $\R^{H\times W \times D}$ where $D$ is the number of channels. We will see that the Convolution Product in the CNN can be expressed between 3-tensors (input) and 4-tensors (convolution \emph{filters} or \emph{kernels}), the output being a 3-tensor of different dimensions (spatial size and number of channels).

\begin{figure}[H]
\centering
\includegraphics[height=35mm] {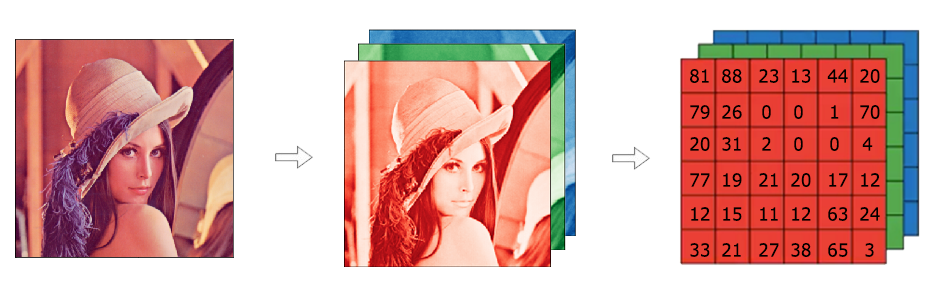} 
\captionsetup{justification=raggedright, margin=1cm}
\caption{RGB decomposition, a colored image is a 3-tensor.}\label{RGB}
\end{figure} 

\subsection{Architecture}

A CNN is composed of 4 main procedures, compiled and repeated in any order : Convolution layers, most often followed by an Activation Function, Pooling Layers and some Fully Connected layers at the end. We will note $\ell$ the current layer.

\paragraph{Convolution Layer :} The $\ell^{th}$ layer is convolved by a set of filters called \emph{kernels}. The output of this operation is the $(\ell+1)^{th}$ layer. A convolution by a single kernel can be seen as a feature detector, that will screen over all regions of the input. If the feature represented by the kernel, for instance a vertical edge, is present in some part of the input, there will be a high value at the corresponding position of the output. The output is called the \emph{feature map} of this convolution.

\paragraph{Activation Function :} As in regular neural network, we insert some non linearities also called \emph{activation functions}. These are mandatory for a neural network to be able to learn any function. In the case of a CNN, each convolution is often followed by a Rectified Linear Unit function, or \emph{ReLu}. This is a simple function that puts all negative values of the output to zero, and lets the positive values as they are. 

\paragraph{Pooling Layer :} This downsampling technique reduces the dimensionality of the layer, in order to improve the computation. Moreover, it gives to the CNN the ability to learn a representation invariant to small translations. Most of the time, we apply a Maximum Pooling or an Average Pooling. The first one consists of replacing a subregion of $P\times P$ elements only by the one with the maximum value. The second does the same by averaging all values. Recall that the value of a pixel corresponds to how much a particular feature was present in the previous convolution layer.  

\paragraph{Fully Connected Layer :} After a certain number of convolution layers, the input has been sufficiently processed so that we can apply a fully connected network. Weights connect each input to each output, where inputs are all element of the previous layer. The last layer should have one node per possible label. Each node value can be interpreted as the probability of the initial image to belong to the corresponding class.

\begin{figure}[!h]
\centering
\includegraphics[height=68mm] {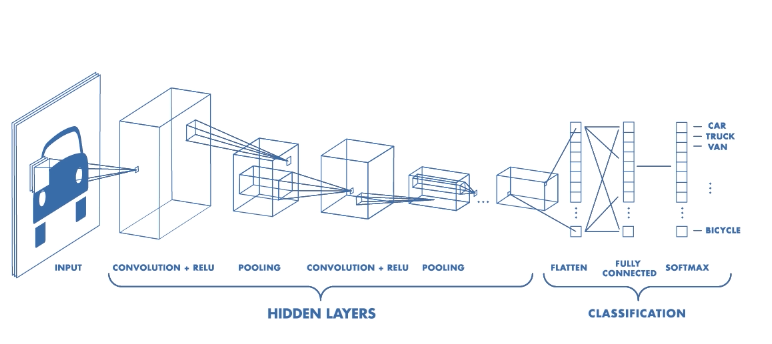} 
\captionsetup{justification=raggedright, margin=1cm}
\caption[Caption for LOF]{Representation of a CNN layers and procedures \footnotemark}
\label{CNN-image}
\end{figure} 
\footnotetext{source: https://fr.mathworks.com/solutions/deep-learning/convolutional-neural-network.html}

\subsection{Convolution Product as a Tensor Operation}\label{tensors}
Most of the following mathematical formulations have been very well detailed by J.Wu in \cite{CNNIntro}. At layer $\ell$, we consider the convolution of a multiple channels image, seen as a 3-tensor $X^{\ell} \in \R^{H^{\ell}\times W^{\ell} \times D^{\ell}}$. Let's consider a single kernel in $\R^{H\times W \times D^{\ell}}$. Note that its third dimension must match the number of channels of the input, as in Figure \ref{volume-convolution}. The kernel passes over all possible regions of the input and outputs a value for each region, stored in the corresponding element of the output. Therefore the output is 2 dimensional, in $\R^{H^{\ell+1}\times W^{\ell+1}}$

\begin{figure}[H]
\centering
\includegraphics[scale = 0.7] {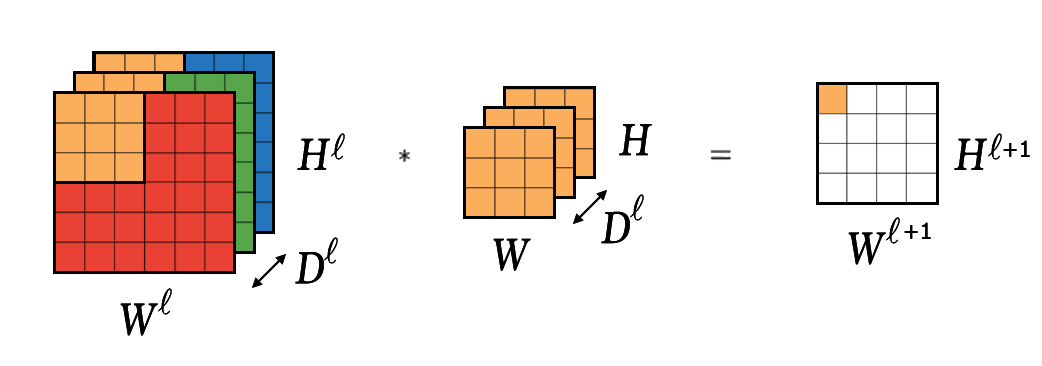} 
\captionsetup{justification=raggedright, margin=1cm}
\caption{Convolution of a 3-tensor input (Left) by one 3-tensor kernel (Center). The ouput (Right) is a matrix for which each entry is a inner product between the kernel and the corresponding overlapping region of the input.}\label{volume-convolution}
\end{figure} 

In a CNN, the most general case is to apply several convolution products to the input, each one with a different 3-tensor kernel. Let's consider an input convolved by $D^{\ell+1}$ kernels. We can globally see this process as a whole, represented by one 4-tensor kernel $K^{\ell} \in \R^{H\times W \times D^{\ell} \times D^{\ell+1}}$. As $D^{\ell+1}$ convolutions are applied, there are $D^{\ell+1}$ outputs of 2 dimensions, equivalent to a 3-tensor $ X^{\ell+1} \in \R^{H^{\ell+1}\times W^{\ell+1} \times D^{\ell+1}}$

\begin{figure}[H]
\centering
\includegraphics[scale = 0.7] {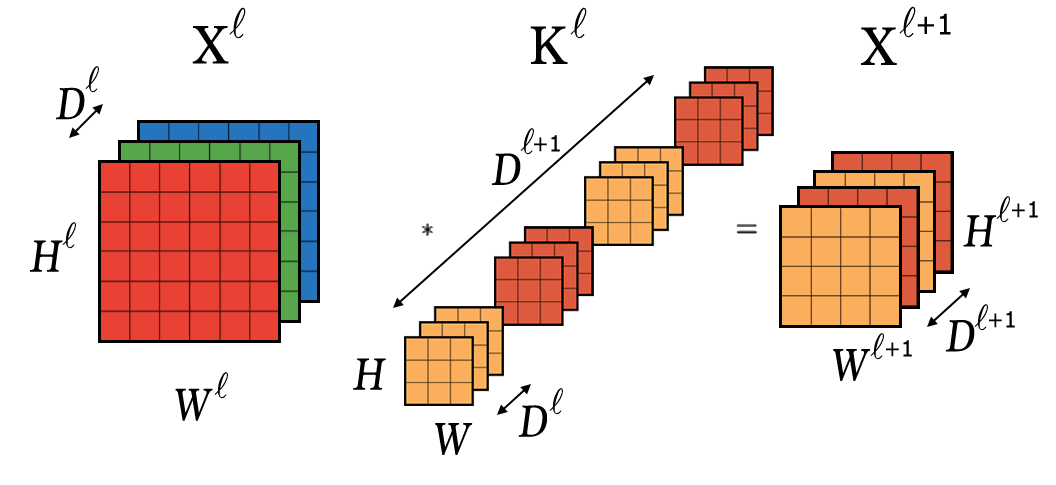} 
\captionsetup{justification=raggedright, margin=1cm}
\caption{Convolutions of the 3-tensor input $X^{\ell}$ (Left) by one 4-tensor kernel $K^{\ell}$ (Center). Each channel of the output $X^{\ell+1}$ (Right) corresponds to the output matrix of the convolution with one of the 3-tensor kernel.}\label{volume-convolution-tensor}
\end{figure} 

This tensor convention explains why Figure \ref{CNN-image} is represented with layers as volumes of different shapes. Indeed we can see on Figure \ref{volume-convolution-tensor} that the output's dimensions are modified given the following rule:

\begin{equation}
  \begin{cases}
    H^{\ell+1} = H^{\ell}-H+1\\
    W^{\ell+1} = W^{\ell}-W+1\\
  \end{cases}
\end{equation}

We omit to detail the use of \emph{Padding} and \emph{Stride}, two parameters that control how the kernel moves through the input, but these can easily be incorporated in the algorithms.

An element of $X^{\ell}$ is determined by 3 indices $(i^{\ell},j^{\ell},d^{\ell})$, while an element of the kernel $K^{\ell}$ is determined by 4 indices $(i,j,d,d')$. For an element of $X^{\ell+1}$ we use 3 indices $(i^{\ell+1},j^{\ell+1},d^{\ell+1})$. We can express the value of each element of the output $X^{\ell+1}$ with the relation

\begin{equation}\label{analytic-expression}
X^{\ell+1}_{i^{\ell+1},j^{\ell+1},d^{\ell+1}} = \sum_{i=0}^{H}\sum_{j=0}^{W}\sum_{d=0}^{D^{\ell}}K^{\ell}_{i,j,d,d^{\ell+1}}X^{\ell}_{i^{\ell+1}+i, j^{\ell+1}+j, d}
\end{equation}

\subsection{Matrix Expression}

\begin{figure}[H]
\centering
\includegraphics[scale=0.65] {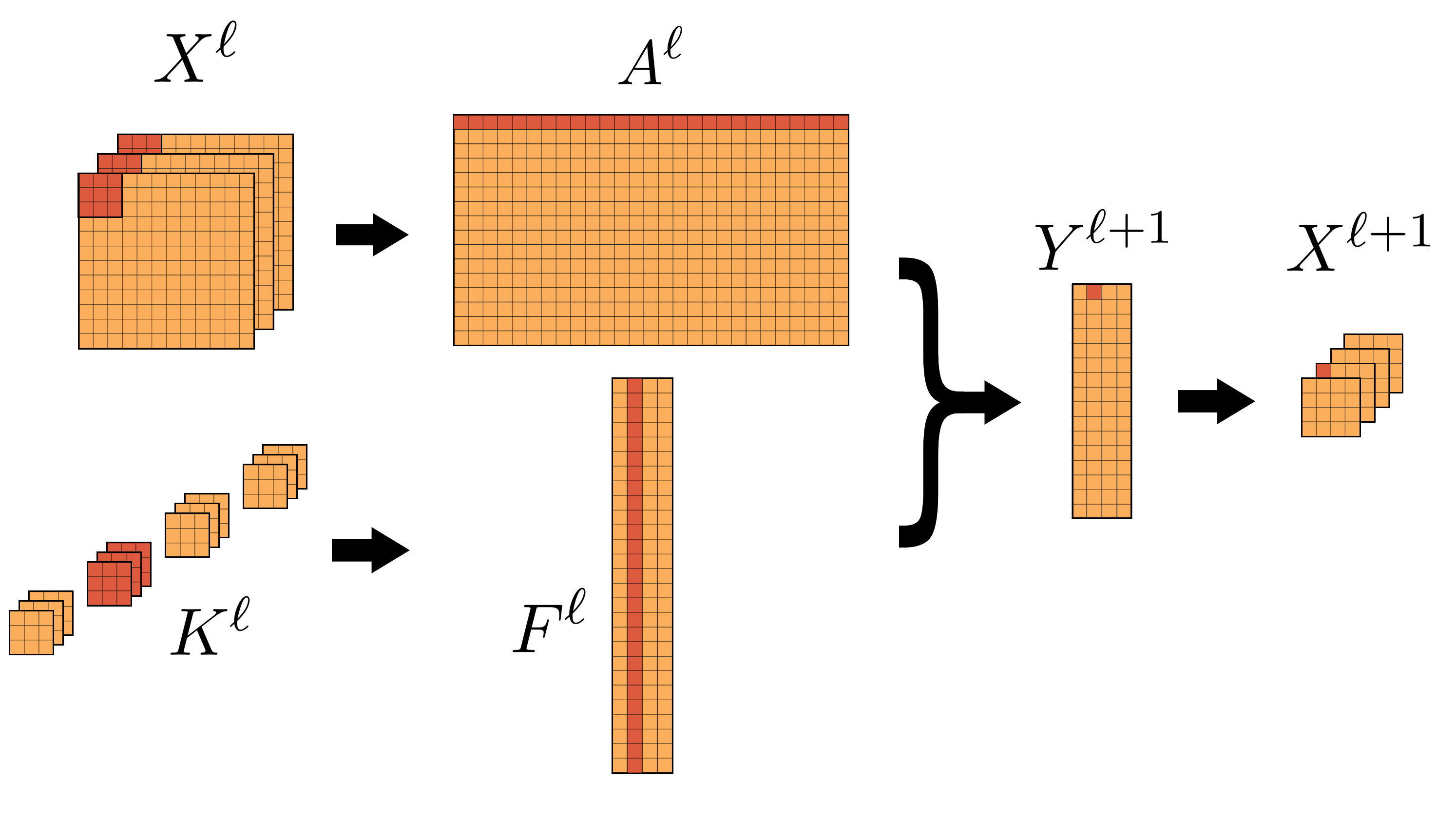} 
\captionsetup{justification=raggedright, margin=1cm}
\caption{A convolution product is equivalent to a matrix-matrix multiplication.}
\label{CNN-tensors-figure}
\end{figure}

It is possible to reformulate Equation (\ref{analytic-expression}) as a matrix product. For this we have to reshape our objects. We expand the input $X^{\ell}$ into a matrix $A^{\ell} \in \R^{(H^{\ell+1}W^{\ell+1})\times(HWD^{\ell})}$. Each row of $A^{\ell}$ is a vectorized version of a subregion of $X^{\ell}$. This subregion is a volume of the same size as a single kernel volume $H\times W\times D^{\ell}$. Hence each of the $H^{\ell+1}\times W^{\ell+1}$ rows of $A^{\ell}$ is used for creating one value in $X^{\ell+1}$. Given such a subregion of $X^{\ell}$,  the rule for creating the row of $A^{\ell}$ is to stack, channel by channel, a column first vectorized form of each matrix. Then, we reshape the kernel tensor $K^{\ell}$ into a matrix $F^{\ell} \in \R^{(HWD^{\ell})\times D^{\ell+1}}$, such that each column of $F^{\ell}$ is a column first vectorized version of one of the $D^{\ell+1}$ kernels.

As proved in \cite{CNNIntro}, the convolution operation $X^{\ell} * K^{\ell} = X^{\ell+1}$ is equivalent to the following matrix multiplication 
\begin{equation}
A^{\ell}F^{\ell} = Y^{\ell+1},
\end{equation}
where each column of $Y^{\ell+1} \in \R^{(H^{\ell+1}W^{\ell+1})\times D^{\ell+1}}$ is a column first vectorized form of one of the $D^{\ell+1}$ channels of $X^{\ell+1}$. Note that an element $Y^{\ell+1}_{p,q}$ is the inner product between the $p^{th}$ row of $A^{\ell}$ and the $q^{th}$ column of $F^{\ell}$. It is then simple to convert $Y^{\ell+1}$ into $X^{\ell+1}$ 
The indices relation between the elements $Y^{\ell+1}_{p,q}$ and $X^{\ell+1}_{i^{\ell+1},j^{\ell+1},d^{\ell+1}}$ is given by:
\begin{equation}\label{YtoX}
  \begin{cases}
    d^{\ell+1} = q\\
    j^{\ell+1} = \floor{\frac{p}{ H^{\ell+1}}}\\
    i^{\ell+1} = p - H^{\ell+1}\floor{\frac{p}{ H^{\ell+1}}}\
  \end{cases}
\end{equation}

A summary of all variables along with their meaning and dimensions is given in Section \ref{variable_summary_section}.

\section{Quantum Preliminaries}\label{quantumpreliminaries}

A single qubit can be $\ket{0}$, $\ket{1}$ or in a superposition state $\alpha\ket{0}+\beta\ket{1}$ with amplitudes $(\alpha,\beta)\in \mathbb{C}$ such that $|\alpha|^2+|\beta|^2=1$. A vector state $\ket{v}$ for $v \in \mathbb{R}^d$ is defined as a quantum superposition on $\ceil{\log(d)}$ qubits $\ket{v} = \frac{1}{\norm{v} } \sum_{i \in [d] } v_i \ket{i}$, where $\ket{i}$ represents the $i^{th}$ vector in the standard basis. Every quantum states will be modified by quantum gates defined as unitary matrices. The measurement of a qubit yields to 0 or 1, each with probability equals to the square of the module of the respective amplitude.  
We will assume at some steps that a matrix $V$ is stored in suitable QRAM data structures which are described in \cite{KP16}. To prove our results, we are going to use the following  tools:

\begin{theorem}[Amplitude estimation \cite{BHMT00}]\label{theoremamplitudeamplification}
	Given a quantum algorithm $$A:\ket{0} \to \sqrt{p}\ket{y,1} + \sqrt{1-p}\ket{G,0}$$ where $\ket{G}$ is some garbage state, then for any positive integer $P$, the amplitude estimation algorithm outputs $\tilde{p}$ $(0 \le \tilde p \le 1)$ such that
	$$
	|\tilde{p}-p|\le 2\pi \frac{\sqrt{p(1-p)}}{P}+\left(\frac{\pi}{P}\right)^2
	$$
	with probability at least $8/\pi^2$. It uses exactly $P$ iterations of the algorithm $A$. 
	If $p=0$ then $\tilde{p}=0$ with certainty, and if $p=1$ and $P$ is even, then $\tilde{p}=1$ with certainty.\label{thm:ampest}
\end{theorem}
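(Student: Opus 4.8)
The plan is to reduce amplitude estimation to quantum phase estimation applied to a Grover-type iterate built from $A$, so that the unknown amplitude $p$ is recovered from an estimated eigenphase.

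First I would define the amplitude amplification operator $Q = -A\, S_0\, A^{-1}\, S_\chi$, where $S_\chi$ is the reflection applying a phase $-1$ to every basis state whose flag qubit equals $1$ (the ``good'' states $\ket{y,1}$) and $S_0$ applies a phase $-1$ to $\ket{0}$ while fixing all other basis states. Writing $p = \sin^2\theta$ with $\theta \in [0,\pi/2]$, the output of $A$ decomposes as $A\ket{0} = \sin\theta\,\ket{\psi_1} + \cos\theta\,\ket{\psi_0}$, with $\ket{\psi_1} = \ket{y,1}$ and $\ket{\psi_0} = \ket{G,0}$. The key structural step is then to verify that $Q$ preserves the two-dimensional subspace $\mathcal{H}_\theta = \mathrm{span}\{\ket{\psi_0},\ket{\psi_1}\}$ and acts on it as a rotation by angle $2\theta$ (being a product of two reflections restricted to this plane). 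Hence on $\mathcal{H}_\theta$ the operator $Q$ has eigenvalues $e^{\pm 2i\theta}$ with eigenvectors $\ket{\psi_\pm} = \tfrac{1}{\sqrt 2}(\ket{\psi_1} \pm i\ket{\psi_0})$, and $A\ket{0}$ lies entirely in $\mathcal{H}_\theta$ as a balanced combination of $\ket{\psi_+}$ and $\ket{\psi_-}$. This turns the unknown amplitude $p$ into the unknown eigenphase $2\theta$.

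Next I would run phase estimation on $Q$ with the initial state $A\ket{0}$, using a control register that implements the controlled powers $Q^0,\dots,Q^{P-1}$ followed by an inverse Fourier transform over $\mathbb{Z}_P$; since each $Q$ invokes $A$ and $A^{-1}$ once, this accounts for the claimed iteration count. Because $A\ket{0}$ is supported on the two eigenvectors $\ket{\psi_\pm}$, measuring the outcome $y \in \{0,\dots,P-1\}$ and setting $\tilde\theta = \pi y/P$ yields an estimate of $\theta$ or of $\pi-\theta$; in either case $\tilde p = \sin^2\tilde\theta$ estimates $p$. The standard phase-estimation tail bound guarantees $|\tilde\theta - \theta| \le \pi/P$ with probability at least $8/\pi^2$. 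Finally I would convert this phase error into the stated amplitude bound via the identity $\tilde p - p = \sin^2\tilde\theta - \sin^2\theta = \sin(2\theta)\cos\Delta\,\sin\Delta + \cos(2\theta)\sin^2\Delta$, where $\Delta = \tilde\theta - \theta$. Using $\sin(2\theta) = 2\sqrt{p(1-p)}$ together with $|\sin\Delta| \le |\Delta| \le \pi/P$ and $|\cos\Delta|,|\cos(2\theta)| \le 1$ gives
\[
|\tilde p - p| \le 2\sqrt{p(1-p)}\,\frac{\pi}{P} + \left(\frac{\pi}{P}\right)^2,
\]
which is exactly the claimed estimate. The degenerate cases follow immediately: $p=0$ forces $\theta = 0$, so phase estimation returns $y=0$ deterministically, while $p=1$ forces $\theta = \pi/2$, whose phase $2\theta = \pi$ sits exactly on the estimation grid when $P$ is even, giving $\tilde p = 1$ with certainty.

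The main obstacle is the phase-estimation probability bound: establishing that the measured outcome lands in the correct window with probability at least $8/\pi^2$ requires the Fourier-analytic tail estimate on the Dirichlet-type sum $\bigl|\tfrac{1}{P}\sum_{k=0}^{P-1} e^{ik\phi}\bigr|$ evaluated at the residual phase $\phi$, and handling the two-eigenvector superposition so that whichever branch is sampled still yields a valid estimate of $p$. The geometric reduction (product of reflections is a rotation by $2\theta$) and the final trigonometric conversion are comparatively routine once the eigenphases have been pinned to $\pm 2\theta$.
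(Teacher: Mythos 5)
The paper does not prove this statement at all---it imports it verbatim from the cited reference \cite{BHMT00}---and your proposal correctly reconstructs the canonical argument of that reference: phase estimation on the Grover iterate $Q = -A S_0 A^{-1} S_\chi$, whose restriction to the two-dimensional invariant subspace has eigenphases $\pm 2\theta$ with $p = \sin^2\theta$, the $8/\pi^2$ success probability from the phase-estimation tail bound, and the conversion $|\sin^2\tilde\theta - \sin^2\theta| \le 2\sqrt{p(1-p)}\,|\Delta| + \Delta^2$ via $\sin^2\tilde\theta - \sin^2\theta = \sin(2\theta)\cos\Delta\sin\Delta + \cos(2\theta)\sin^2\Delta$. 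Your treatment of the balanced superposition over the two eigenvectors (so that either measured branch yields the same $\sin^2$ estimate) and of the degenerate cases $p=0$ and $p=1$ with $P$ even matches the standard proof, so the proposal is correct and takes essentially the same route as the cited source.
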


\noindent In addition to amplitude estimation, we will make use of a tool developed in \cite{WKS14} to boost the probability of getting a good estimate 
for the inner product required for the quantum convolution algorithm. In high level, we take multiple copies of the estimator from the amplitude estimation procedure, compute the median, and reverse the circuit to get rid of the garbage. Here we provide a theorem with respect to time and not query complexity.

\begin{theorem}[Median Evaluation  \cite{WKS14}]\label{median}
	Let $\mathcal{U}$ be a unitary operation that maps 
	$$\mathcal{U}:\ket{0^{\otimes n}}\mapsto \sqrt{a}\ket{x,1}+\sqrt{1-a} \ket{G,0}$$
	for some $1/2 < a \le 1$ in time $T$. Then there exists a quantum algorithm that, for any $\Delta>0$ and for any $1/2<a_0 \le a$, produces a state $\ket{\Psi}$ such that $\|\ket{\Psi}-\ket{0^{\otimes nL}}\ket{x}\|\le \sqrt{2\Delta}$ for some integer $L$, in time  
	$$
	2T\left\lceil\frac{\ln(1/\Delta)}{2\left(|a_0|-\frac{1}{2} \right)^2}\right\rceil.
	$$
	\label{lem:median}
\end{theorem}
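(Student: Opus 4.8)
The plan is to boost the success probability by parallel repetition together with a coherent median computation, following the majority-voting paradigm but keeping every operation reversible so that the ancilla registers can afterwards be cleaned. Throughout I treat the value register as encoding the estimate $x$ as a computational-basis value, so that ``median'' is well defined (this is exactly the situation in the amplitude-estimation application). First I would run $\mathcal{U}$ on $L$ independent registers, with the integer $L$ fixed only at the end, producing
$$\ket{\Phi} \;=\; \mathcal{U}^{\otimes L}\ket{0^{\otimes nL}} \;=\; \bigl(\sqrt{a}\,\ket{x,1}+\sqrt{1-a}\,\ket{G,0}\bigr)^{\otimes L}.$$
Expanding $\ket{\Phi}$ in the computational basis indexes each branch by the subset $S\subseteq[L]$ of copies whose flag qubit is $1$: the good copies in such a branch hold the value register in state $\ket{x}$, while the remaining copies hold some basis state of the garbage $\ket{G}$.

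Next I would append a fresh output register and apply a unitary $W$ that coherently computes the median of the $L$ value registers into it. The combinatorial engine is the elementary fact that if strictly more than half of a list of numbers lie in an interval then so does their median; in particular, in any basis branch with $|S|>L/2$ strictly more than half of the value registers equal $x$, so $W$ writes exactly $\ket{x}$ on that branch. Writing $\ket{\Phi}=\ket{\Phi_{\mathrm{good}}}+\ket{\Phi_{\mathrm{bad}}}$ for the projection onto the majority-$1$ branches and its complement, this yields $W\ket{\Phi_{\mathrm{good}}}\ket{0}=\ket{\Phi_{\mathrm{good}}}\ket{x}$, hence
$$W\ket{\Phi}\ket{0} - \ket{\Phi}\ket{x} \;=\; W\ket{\Phi_{\mathrm{bad}}}\ket{0} - \ket{\Phi_{\mathrm{bad}}}\ket{x},$$
so that $\norm{W\ket{\Phi}\ket{0}-\ket{\Phi}\ket{x}}$ is controlled purely by $\norm{\ket{\Phi_{\mathrm{bad}}}}$.

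The quantitative step is to bound $\norm{\ket{\Phi_{\mathrm{bad}}}}^2$, which is exactly the probability that a $\mathrm{Binomial}(L,a)$ variable fails to exceed $L/2$. Since the true success probability satisfies $a\ge a_0>\tfrac12$, a Hoeffding/Chernoff bound gives $\norm{\ket{\Phi_{\mathrm{bad}}}}^2\le \exp\!\bigl(-2L(a_0-\tfrac12)^2\bigr)$, and the stated choice $L=\bigl\lceil \tfrac{\ln(1/\Delta)}{2(a_0-1/2)^2}\bigr\rceil$ drives this below the threshold needed to make $\norm{W\ket{\Phi}\ket{0}-\ket{\Phi}\ket{x}}\le \sqrt{2\Delta}$. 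Using $a_0$ rather than the unknown $a$ is legitimate precisely because $a\ge a_0$ only improves the concentration.

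Finally I would uncompute the copies by applying $(\mathcal{U}^{\dagger})^{\otimes L}$, which acts only on the $nL$ copy registers and leaves the median register fixed, so that $(\mathcal{U}^{\dagger})^{\otimes L}\ket{\Phi}\ket{x}=\ket{0^{\otimes nL}}\ket{x}$. Setting $\ket{\Psi}:=(\mathcal{U}^{\dagger})^{\otimes L}\,W\,\mathcal{U}^{\otimes L}\ket{0^{\otimes nL}}\ket{0}$ and invoking unitary invariance of the norm, the previous bound transfers verbatim to $\norm{\ket{\Psi}-\ket{0^{\otimes nL}}\ket{x}}\le\sqrt{2\Delta}$. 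The running time is the forward pass plus the reverse pass, i.e. $2TL=2T\bigl\lceil \ln(1/\Delta)/(2(a_0-\tfrac12)^2)\bigr\rceil$, the median circuit $W$ being a negligible lower-order cost. The main obstacle is not any single inequality but making the garbage-removal rigorous: one must check that computing the median into a fresh ancilla and then running $\mathcal{U}^{\otimes L}$ in reverse genuinely disentangles the copy registers up to the small-norm $\ket{\Phi_{\mathrm{bad}}}$ contribution, which is exactly where keeping $W$ reversible and exploiting $a_0\le a$ are essential.
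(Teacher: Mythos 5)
Your proposal follows exactly the route the paper intends: the paper does not prove this theorem itself but imports it from \cite{WKS14}, summarizing the proof in one sentence (take multiple copies, compute the median coherently, reverse the circuit to remove garbage), and your combination of parallel repetition, a reversible median gate $W$, a Hoeffding bound using $a \ge a_0 > \tfrac12$, and uncomputation by $(\mathcal{U}^{\dagger})^{\otimes L}$ with unitary invariance of the norm is precisely that argument. The branch decomposition $\ket{\Phi}=\ket{\Phi_{\mathrm{good}}}+\ket{\Phi_{\mathrm{bad}}}$, the identity $W\ket{\Phi}\ket{0}-\ket{\Phi}\ket{x}=W\ket{\Phi_{\mathrm{bad}}}\ket{0}-\ket{\Phi_{\mathrm{bad}}}\ket{x}$, and the concentration bound $\norm{\ket{\Phi_{\mathrm{bad}}}}^2\le e^{-2L(a_0-1/2)^2}\le\Delta$ are all correct.

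There is, however, a genuine quantitative gap at the last step. You assert that the error is ``controlled purely by $\norm{\ket{\Phi_{\mathrm{bad}}}}$'' and that the stated $L$ drives it below $\sqrt{2\Delta}$; but both terms in the difference have norm $\norm{\ket{\Phi_{\mathrm{bad}}}}$, so the triangle inequality only gives $2\norm{\ket{\Phi_{\mathrm{bad}}}}\le 2\sqrt{\Delta}=\sqrt{4\Delta}$, which is strictly weaker than the claimed $\sqrt{2\Delta}$ --- and you cannot repair this by enlarging $L$, since $L$ is pinned by the stated running time $2TL$. The missing observation is to expand the square instead:
\begin{align*}
\norm{W\ket{\Phi_{\mathrm{bad}}}\ket{0}-\ket{\Phi_{\mathrm{bad}}}\ket{x}}^2
= 2\norm{\ket{\Phi_{\mathrm{bad}}}}^2 - 2\,\mathrm{Re}\left(\bra{\Phi_{\mathrm{bad}}}\bra{x}\, W\, \ket{\Phi_{\mathrm{bad}}}\ket{0}\right),
\end{align*}
and to note that the cross term is automatically non-negative: writing $\ket{\Phi_{\mathrm{bad}}}=\sum_b c_b \ket{b}$ over computational basis branches, $W$ acts as $W\ket{b}\ket{0}=\ket{b}\ket{m_b}$ with $m_b$ the deterministically computed median of branch $b$, so the cross term equals $\sum_b |c_b|^2 \braket{x}{m_b}$, and each $\braket{x}{m_b}$ is either $0$ or $1$. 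Hence $\norm{W\ket{\Phi}\ket{0}-\ket{\Phi}\ket{x}}^2 \le 2\norm{\ket{\Phi_{\mathrm{bad}}}}^2 \le 2\Delta$, which is exactly the stated bound. Incidentally, this shows that the part you flagged as the main obstacle --- rigorous garbage removal --- is the easy part (pure unitary invariance); the only real subtlety in the whole proof is the sign of this cross term.
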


We also need some state preparation procedures. These subroutines are needed for encoding vectors in $v_{i} \in \R^{d}$ into quantum states $\ket{v_{i}}$. An efficient state preparation procedure is provided by the QRAM data structures. 
\begin{theorem}[QRAM data structure \cite{KP16}]\label{qram}
	Let $V \in \mathbb{R}^{n \times d}$, there is a data structure to store the rows of $V$ such that, 
	\begin{enumerate}
		\item The time to insert, update or delete a single entry $v_{ij}$ is $O(\log^{2}(n))$. 	
		\item A quantum algorithm with access to the data structure can perform the following unitaries in time $T=O(\log^{2}n)$. 
		\begin{enumerate} 
			\item $\ket{i}\ket{0} \to \ket{i}\ket{v_{i}} $ for $i \in [n]$. 
			\item $\ket{0} \to \sum_{i \in [n]} \norm{v_{i}}\ket{i}$. 
		\end{enumerate}
	\end{enumerate}
\end{theorem}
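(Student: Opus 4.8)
The plan is to build, for the matrix $V$, a family of binary trees that store the entries together with their partial sums of squares, and then to realize both unitaries by controlled rotations that walk down the paths of these trees.

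First I would build, for each row $v_i \in \R^d$, a binary tree $B_i$ with $d$ leaves: the leaf indexed by $j$ stores $v_{ij}^2$ together with the sign $\mathrm{sgn}(v_{ij})$, and every internal node stores the sum of the values held by its two children. Consequently the value at an internal node equals the sum of $v_{ij}^2$ over all leaves $j$ in its subtree, and the root of $B_i$ holds $\norm{v_i}^2$. In addition I would maintain one further tree $B$ over the $n$ row norms, whose leaf $i$ stores $\norm{v_i}^2$ and whose internal nodes again store the sums of their children, so that the root of $B$ holds $\sum_i \norm{v_i}^2 = \norm{V}_F^2$. For part (1), updating a single entry $v_{ij}$ only affects leaf $j$ of $B_i$ and its $O(\log d)$ ancestors, and then the value $\norm{v_i}^2$ at leaf $i$ of $B$ together with its $O(\log n)$ ancestors; each node touched takes $O(\log n)$ time to locate and to update its stored real value, and there are $O(\log d)+O(\log n) = O(\log n)$ such nodes once we treat $\log d = O(\log n)$, for a total cost $O(\log^2 n)$. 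Insertion and deletion are handled identically by allocating or zeroing the relevant leaf and repairing the path to the root.

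For part 2(a) I would prepare $\ket{v_i}$ by descending $B_i$ one level at a time. Conditioned on the address $i$ and on the node currently reached, an internal node storing value $b = b_\ell + b_r$ with left-child value $b_\ell$ and right-child value $b_r$ lets us apply a controlled rotation sending $\ket{0}$ to $\frac{1}{\sqrt{b}}\bigl(\sqrt{b_\ell}\ket{0} + \sqrt{b_r}\ket{1}\bigr)$ on the next qubit; the rotation angle is a function of the data at the node and can be read in $O(\log n)$ time. Applying such rotations through all $\log d$ levels produces the superposition $\frac{1}{\norm{v_i}}\sum_j \lvert v_{ij}\rvert \ket{j}$, and a final conditional phase using the signs stored at the leaves converts $\lvert v_{ij}\rvert$ into $v_{ij}$, yielding exactly $\ket{v_i}$. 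Part 2(b) is the same procedure applied to $B$, starting from $\ket{0}$ and producing $\frac{1}{\norm{V}_F}\sum_i \norm{v_i}\ket{i}$, which is the stated state up to the global normalization $\norm{V}_F$. Each of the $O(\log n)$ levels costs $O(\log n)$ to fetch the node data and apply the rotation, so the total running time is $T = O(\log^2 n)$.

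The main obstacle I expect is the correctness and efficient implementation of these controlled rotations: one must argue that conditioning jointly on the row index $i$ and on the partial path in the tree can be carried out coherently, so that the superposition over $j$ is built in parallel across all branches, that the stored angles suffice to implement each rotation to the required precision, and that the signs can be reinstated at the leaves without disturbing the amplitudes. The partial-sum structure of the tree is precisely what makes the amplitudes multiply correctly along each root-to-leaf path, and the fact that every tree has depth $O(\log n)$ is what keeps both the update time and the query time polylogarithmic.
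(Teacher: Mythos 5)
Your proposal is correct: the paper does not prove this theorem at all, but simply imports it from \cite{KP16}, and your construction---per-row binary trees storing squared entries with subtree partial sums and signs at the leaves, a top-level tree over the row norms, updates repairing a single root-to-leaf path, and state preparation via controlled rotations conditioned on node values---is essentially the standard proof given in that reference. The subtleties you flag (coherent conditioning on the row index and the path, computing rotation angles from stored partial sums, reinstating signs at the leaves) are exactly the points handled in \cite{KP16}, so nothing is missing.
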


In our algorithm we will also use subroutines for quantum linear algebra. 
For a symmetric matrix $M \in \R^{d\times d}$ with spectral norm $\norm{M}$ stored in the QRAM, 
the running time of these algorithms depends linearly on the condition number $\kappa(M)$ of the matrix, that can be replaced by $\kappa_\tau(M)$, a condition threshold where we keep only the singular values bigger than $\tau$, and the parameter $\mu(M)$, a matrix dependent parameter defined as
\begin{equation}\label{mudefinition}
\mu(M)=\min_{p\in [0,1]} \left(\frac{\norm{M}_{F}}{\norm{M}}, \frac{\sqrt{s_{2p}(M)s_{(1-2p)}(M^{T})}}{\norm{M}}\right)
\end{equation}

for $s_{p}(M) = \max_{i \in [n]} \sum_{j \in [d]} M_{ij}^{p}$.
The different terms in the minimum in the definition of $\mu(M)$ correspond to different choices for the data structure for storing $M$, as detailed in \cite{KP17}.  Note that $\mu(M) \leq \frac{\norm{M}_{F}}{{\norm{M}}} \leq \sqrt{d}$.

\begin{theorem}[Quantum linear algebra \cite{CGJ18} ]\label{QuantumLinearAlgebra}  Let $M \in \mathbb{R}^{d \times d}$ and $x \in \mathbb{R}^d$. Let $\delta_1,\delta_2>0$. If $M$ is stored in appropriate QRAM data structures and the time to prepare $\ket{x}$ is $T_{x}$, then there exist quantum algorithms that with probability at least $1-1/poly(d)$ return
    \begin{enumerate}
        \item A state $\ket{z}$ such that $\norm{ \ket{z} - \ket{Mx}}_{2} \leq \delta_1$ in time $\widetilde{O}((\kappa(M)\mu(M) + T_{x} \kappa(M)) \log(1/\delta_1))$.  \\
        Note that this also implies $\norm{ \ket{z} - \ket{Mx}}_{\infty} \leq \delta_1$
        \item Norm estimate $z \in (1 \pm \delta_2)\norm{Mx}_{2}$, with relative error $\delta_2$, in time $\widetilde{O}(T_{x} \frac{\kappa(M)\mu(M)}{\delta_2}\log(1/\delta_1))$.
    \end{enumerate}
    
\end{theorem}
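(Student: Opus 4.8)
The plan is to reduce both claims to one primitive built from the QRAM data structure: a block-encoding of $M$ to which we apply the state $\ket x$ once, followed by amplitude amplification (for the output state) or amplitude estimation (for the norm). First I would invoke \thmref{qram} to compose the two QRAM primitives $\ket i\ket 0\mapsto\ket i\ket{v_i}$ and $\ket 0\mapsto\sum_i\norm{v_i}\ket i$, in $\widetilde O(1)$ time, into a unitary $U_M$ whose top-left block equals $M/(\mu(M)\norm M)$ up to an error $\epsilon_{BE}$ that can be driven down at only logarithmic cost; the minimisation over the three data-structure choices in \eqnref{mudefinition} is exactly the freedom to pick the representation minimising this subnormalisation. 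Since we only need to apply $M$ itself, no singular-value transformation of $U_M$ is required here, in contrast with matrix inversion.

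Second, I would prepare $\ket x$ in time $T_x$, apply $U_M$, and obtain a state of the form $\sqrt p\,\ket 0\ket{Mx}+\sqrt{1-p}\,\ket G$, where $\ket{Mx}$ is the desired normalised direction, $\ket G$ is flagged as failure, and the success amplitude is $\sqrt p=\norm{M\ket x}/(\mu(M)\norm M)=\norm{Mx}/(\mu(M)\norm M\norm x)$. The one quantitative input is the bound $\norm{Mx}\ge\sigma_{\min}(M)\norm x$, which yields $1/\sqrt p\le\mu(M)\,\norm M/\sigma_{\min}(M)=\mu(M)\kappa(M)$ and converts the unknown amplitude into the worst-case factor $\kappa(M)$.

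For the first claim I would run amplitude amplification to raise $p$ to a constant, producing $\ket z$ with $\norm{\ket z-\ket{Mx}}_2\le\delta_1$; the $\ell_\infty$ bound then follows from $\norm{\cdot}_\infty\le\norm{\cdot}_2$. For the second claim I would feed the same unitary into amplitude estimation (\thmref{thm:ampest}), which after $P$ iterations returns $\tilde p$ with additive error $O(\sqrt p/P)$; taking $P=\widetilde O(\mu(M)\kappa(M)/\delta_2)$ makes the relative error on $\sqrt p$, hence on the recovered $\norm{Mx}_2=\mu(M)\norm M\norm x\,\sqrt p$, at most $\delta_2$. Each iteration costs one $U_M$ plus one preparation of $\ket x$, i.e. $\widetilde O(1)+T_x$, and I would boost the $8/\pi^2$ success probability of amplitude estimation to $1-1/\mathrm{poly}(d)$ using the median procedure of \thmref{lem:median}, at a further $\log(1/\delta_1)$ overhead; this reproduces the stated $\widetilde O\left(T_x\,\kappa(M)\mu(M)\delta_2^{-1}\log(1/\delta_1)\right)$.

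The main obstacle is the running time of the first claim, $\widetilde O\left(\kappa(M)\mu(M)+T_x\kappa(M)\right)$. The naive count of $\mu(M)\kappa(M)$ amplification rounds, each re-preparing $\ket x$, gives the weaker bound $\widetilde O\left(\mu(M)\kappa(M)(1+T_x)\right)$, i.e. a spurious factor $\mu(M)$ multiplying $T_x$, because standard amplitude amplification reflects about the full input $\ket 0\ket x$ and so re-invokes the state preparation on every round. Removing this factor is the delicate part, and it is where I would invoke variable-time amplitude amplification: splitting the computation into $O(\log\kappa(M))$ stages and amplifying each branch as soon as it succeeds charges the expensive preparation of $\ket x$ only $O(\kappa(M))$ times rather than $O(\mu(M)\kappa(M))$ times, yielding the claimed $\kappa(M)(\mu(M)+T_x)$. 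Finally I would track error propagation, fixing $\epsilon_{BE}=\widetilde O(\delta_1/\kappa(M))$ so that the errors accumulated through amplification stay below $\delta_1$ while all dependence on $\delta_1$ remains logarithmic.
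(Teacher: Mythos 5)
First, a point of order: the paper never proves this theorem. It is a preliminary imported verbatim, with citation, from \cite{CGJ18}, so there is no internal proof to compare against; your sketch has to be judged against the proof in that reference. Measured that way, your overall architecture is the right one and matches \cite{CGJ18}: QRAM data structures give a block-encoding of $M$ with subnormalization $\mu(M)$ at polylogarithmic cost (with the minimization in the definition of $\mu$ corresponding to the choice of data structure), applying it to $\ket{x}$ yields success amplitude $\norm{Mx}/(\mu(M)\norm{M}\norm{x}) \geq 1/(\mu(M)\kappa(M))$, item 1 follows by amplitude amplification and item 2 by amplitude estimation with $P = \widetilde{O}(\kappa(M)\mu(M)/\delta_2)$ iterations plus median boosting, and the $\ell_\infty$ remark is the trivial $\norm{\cdot}_\infty \leq \norm{\cdot}_2$. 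You also correctly isolate the one nontrivial point: naive amplitude amplification reflects about the initial state, so it re-invokes the preparation of $\ket{x}$ in each of the $O(\kappa(M)\mu(M))$ rounds, giving $\widetilde{O}(\kappa(M)\mu(M)(1+T_x))$ instead of the claimed $\widetilde{O}(\kappa(M)\mu(M) + \kappa(M)T_x)$.

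The gap is in how you remove that spurious factor. Variable-time amplitude amplification is the wrong tool here, and as described it does not go through: VTAA requires a variable-stopping-time computation, i.e.\ a decomposition into branches that succeed at different times, which is what happens in matrix \emph{inversion} (components of $\ket{x}$ on large singular values are handled by cheap, low-precision stages; only small-singular-value components need long computation). That is precisely where \cite{CGJ18} uses VTAA. Applying $M$ has no such structure: there is a single global success amplitude $\norm{Mx}/(\mu(M)\norm{M})$, nothing "succeeds early", and "splitting the computation into $O(\log\kappa(M))$ stages" does not correspond to any well-defined decomposition of this circuit. What \cite{CGJ18} actually does for applying a block-encoded matrix is \emph{uniform singular value amplification} (preamplification) of the block-encoding itself: using singular value transformation, the block-encoding with subnormalization $\mu(M)$ is converted into one with subnormalization $O(\norm{M})$, at a multiplicative cost of $\widetilde{O}(\mu(M))$ calls to $U_M$ --- crucially, this amplification acts only on the block-encoding register and never touches the preparation of $\ket{x}$. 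The amplified encoding applied to $\ket{x}$ then succeeds with amplitude $\Omega(1/\kappa(M))$ (assuming, as the paper implicitly does via its $\kappa_\tau$ thresholding, that $x$ is supported on singular values at least $\norm{M}/\kappa(M)$), so only $O(\kappa(M))$ rounds of ordinary amplitude amplification are needed, each costing $T_x + \widetilde{O}(\mu(M))$, which yields the stated $\widetilde{O}\left(\left(\kappa(M)\mu(M) + T_x\kappa(M)\right)\log(1/\delta_1)\right)$. The remainder of your sketch --- the amplitude bound via $\sigma_{\min}(M)$, the norm estimation with relative error $\delta_2$, the median boosting to $1-1/\mathrm{poly}(d)$ success, and the error budget $\epsilon_{BE} = \widetilde{O}(\delta_1/\kappa(M))$ --- is sound and consistent with both the statement and the cited source.
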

The linear algebra procedures above can also be applied to any rectangular matrix $V \in \mathbb{R}^{n \times d}$ by considering instead the symmetric matrix $ \overline{V} = \left ( \begin{matrix}
0  &V \\ 
V^{T} &0 \\
\end{matrix}  \right )$.  

Finally, we present a logarithmic time algorithm for vector state tomography that will be used to recover classical information from the quantum states with $\ell_{\infty}$ norm guarantee. Given a unitary $U$ that produces a quantum state $\ket{x}$, by calling $O(\log{d}/\delta^{2})$ times $U$, the tomography algorithm is able to reconstruct a vector $\widetilde{X}$ that approximates $\ket{x}$ with $\ell_{\infty}$ norm guarantee, such that $\norm{ \ket{ \widetilde{X}} - \ket{x} }_{\infty} \leq \delta$, or equivalently that $\forall i \in [d], |x_i -  \widetilde{X}_i | \leq \delta$. Such a tomography is of interest when the components $x_i$ of a quantum state are not the coordinates of an meaningful vector in some linear space, but just a series of values, such that we don't want an overall guarantee on the vector (which is the case with usual $\ell_2$ tomography) but a similar error guarantee for each component in the estimation.

\begin{theorem}[$\ell_{\infty}$ Vector state tomography]\label{thm:tom}
Given access to unitary $U$ such that $U\ket{0} = \ket{x}$ and its controlled version in time $T(U)$, there is a tomography algorithm with time complexity $O(T(U) \frac{ \log d  }{\delta^{2}})$ that produces unit vector $\widetilde{X} \in \R^{d}$ such that $\norm{\widetilde{X}  - x }_{\infty} \leq \delta$ with probability at least $(1-1/poly(d))$. 
\end{theorem}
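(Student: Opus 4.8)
The plan is to follow the Kerenidis--Prakash vector-state tomography approach, but to track per-coordinate errors instead of an aggregate $\ell_2$ error; this is precisely what collapses the $\widetilde{O}(d)$ sample cost of $\ell_2$ tomography to $\widetilde{O}(1)$, since the union bound over $d$ coordinates costs only a $\log d$ factor. The algorithm has two phases. In the first phase I would estimate the magnitudes $|x_i|$ by measuring $N = O(\log d/\delta^2)$ copies of $\ket{x}$ in the computational basis, setting $\hat p_i = n_i/N$ (with $n_i$ the number of times outcome $i$ is observed) and $\hat a_i = \sqrt{\hat p_i}$ as the estimate of $|x_i| = \sqrt{p_i}$. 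In the second phase I would recover the sign of each $x_i$ using the controlled version of $U$ together with a reference state assembled from the magnitudes of phase one. The final estimate is $\widetilde X_i = \sigma_i \hat a_i$, where $\sigma_i \in \{+1,-1\}$ is the recovered sign, optionally renormalized to a unit vector.

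For the magnitude phase, the key point, which I expect to be the main obstacle, is showing that $O(\log d/\delta^2)$ rather than $O(\log d/\delta^4)$ samples suffice to guarantee $|\hat a_i - |x_i|| \le O(\delta)$ simultaneously for all $i$. A naive bound $|\sqrt{\hat p_i}-\sqrt{p_i}| \le \sqrt{|\hat p_i - p_i|}$ combined with Hoeffding would force $|\hat p_i - p_i| \le \delta^2$ and hence $N = O(\log d/\delta^4)$. Instead I would use a variance-aware (Bernstein / multiplicative Chernoff) bound, exploiting that $\mathrm{Var}(\hat p_i) \le p_i/N$, split into two regimes. When $p_i \ge \delta^2$ one has $|\sqrt{\hat p_i}-\sqrt{p_i}| \le |\hat p_i - p_i|/\sqrt{p_i}$, so it suffices to control $|\hat p_i-p_i|$ at the scale $\delta\sqrt{p_i}$, which Bernstein achieves with failure probability $\exp(-\Omega(N\delta^2))$. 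When $p_i < \delta^2$ the true value $|x_i|$ is already below $\delta$, so I only need to rule out a gross overestimate $\hat p_i \gg \delta^2$, again with failure probability $\exp(-\Omega(N\delta^2))$. A union bound over the $d$ coordinates then gives the claim for $N = O(\log d/\delta^2)$.

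For the sign phase, I would prepare the state $\frac{1}{\sqrt2}\left(\ket{0}\ket{x} + \ket{1}\ket{x'}\right)$, where $\ket{x'} = \sum_i \hat a_i \ket{i}$ is the nonnegative reference vector built from the phase-one estimates (loadable in polylogarithmic time from the QRAM, using one controlled call to $U$ for the $\ket{x}$ branch), and apply a Hadamard to the first qubit. Measuring, the outcome probabilities on index $i$ are $\tfrac14|x_i+\hat a_i|^2$ for ancilla $0$ and $\tfrac14|x_i-\hat a_i|^2$ for ancilla $1$, whose difference is exactly $x_i\,\hat a_i$; thus the more frequent ancilla value reveals $\mathrm{sign}(x_i)$. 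The crucial observation is that a sign error can only occur when this gap is small, i.e. when $|x_i|$ is small, and in that regime $|\widetilde X_i - x_i| \le \hat a_i + |x_i| = O(\delta)$ regardless of the sign. For coordinates with $|x_i| \ge \delta$, the gap is $\ge \Omega(\delta^2)$ while the relevant fluctuations scale like $|x_i|/\sqrt{N}$, so a Chernoff bound shows the sign is recovered correctly with failure probability $\exp(-\Omega(N\delta^2))$, and a union bound again fixes $N = O(\log d/\delta^2)$.

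Finally I would combine the two phases: when the sign is correct $|\widetilde X_i - x_i| = |\hat a_i - |x_i|| \le O(\delta)$, and when it is wrong (only for small $|x_i|$) the error is still $O(\delta)$, so $\norm{\widetilde X - x}_\infty \le O(\delta)$ in all cases after rescaling $\delta$ by a constant; the overall success probability is $1 - 1/\mathrm{poly}(d)$ by the union bounds. Each of the $O(\log d/\delta^2)$ samples uses $O(1)$ (controlled) applications of $U$ plus polylogarithmic QRAM overhead, giving total time $O(T(U)\log d/\delta^2)$ as claimed. The load-bearing step is the variance-aware concentration in the magnitude phase; the sign phase is comparatively routine once one notes that sign mistakes are harmless exactly where they are likely to occur.
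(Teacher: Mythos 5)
Your proposal is correct and follows essentially the same route as the paper's proof: magnitude estimation by measuring $O(\log d/\delta^2)$ copies in the computational basis, sign recovery via a Hadamard test against a QRAM-loaded reference state built from those estimates, a case split between coordinates with $|x_i|$ above and below $\delta$ (where sign errors are harmless), and a variance-aware multiplicative Chernoff bound to avoid the naive $O(\log d/\delta^4)$ sample count. The only cosmetic differences are your majority rule for the sign (the paper thresholds $n(0,i)$ against $0.4Np_i$) and your use of Bernstein in place of the paper's multiplicative Chernoff, both of which are equivalent in effect.
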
 

The proof of this theorem (see Appendix, Section \ref{proof:tom}) is similar to the proof of the $\ell_2$-norm tomography from \cite{KP18}. However the $\ell_{\infty}$ norm tomography introduced in this paper depends only logarithmically and not linearly in the dimension $d$. Note that in our case, $T(U)$ will be logarithmic in the dimension.

\section{Forward pass for the QCNN}\label{QCNNForward}
 
In this section we will design quantum procedures for the usual operations in a CNN layer. We start by describing the main ideas before providing the details. The forward pass algorithm for the QCNN is given as Algorithm \ref{QCNNLayer}. 

First, to perform a convolution product between an input and a kernel, we use the mapping between convolution of tensors and matrix multiplication from Section \ref{tensors}, that can further be reduced to inner product estimation between vectors. The output will be a quantum state representing the result of the convolution product, from which we can sample to retrieve classical information to feed the next layer. This is stated in the following Theorem:

\begin{theorem}\label{theorem1}{(Quantum Convolution Layer) \\}
Given 3D tensor input $X^{\ell} \in \mathbb{R}^{H^{\ell}\times W^{\ell}\times D^{\ell}}$ and 4D tensor kernel $K^{\ell} \in \mathbb{R}^{H\times W\times D^{\ell}\times D^{\ell+1}}$ stored in QRAM, and precision parameters $\epsilon, \Delta >0$, there is a quantum algorithm that computes a quantum states $\Delta$-close to $\ket{f(\overline{X}^{\ell+1})}$ where $X^{\ell+1} = X^{\ell}*K^{\ell}$ and $f : \mathbb{R} \mapsto [0,C]$ is a non linear function. 
A classical approximation such that 
\begin{equation}
\norm{f(\overline{X}^{\ell+1}) - f(X^{\ell+1})}_{\infty} \leq \epsilon
\end{equation}
The time complexity of this procedure is given by
$\widetilde{O}\left( M/\epsilon\right)$, where $M$ is the maximum norm of a product between one of the $D^{\ell+1}$ kernels, and one of the regions of $X^{\ell}$ of size $HWD^{\ell}$ 
and $\widetilde{O}$ hides factors poly-logarithmic in $\Delta$ and in the size of $X^{\ell}$ and $K^{\ell}$.
\end{theorem}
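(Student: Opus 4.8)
The plan is to reduce the convolution to a collection of inner products and estimate them coherently. By the matrix reformulation of \secref{tensors}, the output satisfies $A^{\ell}F^{\ell}=Y^{\ell+1}$ with $Y^{\ell+1}_{p,q}=\langle A^{\ell}_{p}, F^{\ell}_{q}\rangle$, where $A^{\ell}_{p}$ is the vectorized subregion of $X^{\ell}$ feeding pixel $p$ and $F^{\ell}_{q}$ is the $q$-th vectorized kernel, so the convolved value at pixel $j\sim(p,q)$ is exactly one such inner product. It therefore suffices to (i) estimate every $Y^{\ell+1}_{p,q}$ to additive error $\epsilon$ in superposition, (ii) apply $f$, and (iii) encode the results as amplitudes to obtain $\ket{f(\overline{X}^{\ell+1})}$. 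For (i), using the QRAM state preparation of \thmref{qram} I would build the maps $\ket{p}\ket{0}\mapsto\ket{p}\ket{A^{\ell}_{p}}$ and $\ket{q}\ket{0}\mapsto\ket{q}\ket{F^{\ell}_{q}}$ on the normalized rows and columns, together with retrieval of the stored norms $\norm{A^{\ell}_{p}}$ and $\norm{F^{\ell}_{q}}$. A standard Hadamard-test-style circuit then produces, controlled on $\ket{p,q}$, a flagged state whose acceptance probability is an affine function of the normalized inner product $\langle \widehat{A^{\ell}_{p}}, \widehat{F^{\ell}_{q}}\rangle$, so amplitude estimation (\thmref{thm:ampest}) recovers that normalized inner product; multiplying by the retrieved norms yields an estimate $\overline{Y}_{p,q}$ of $Y^{\ell+1}_{p,q}$.

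The running time follows from the precision bookkeeping. Writing $Y^{\ell+1}_{p,q}=\norm{A^{\ell}_{p}}\,\norm{F^{\ell}_{q}}\,\langle \widehat{A^{\ell}_{p}}, \widehat{F^{\ell}_{q}}\rangle$ and bounding every product of norms by $M$, an additive error $\epsilon$ on $\overline{Y}_{p,q}$ is guaranteed once the normalized inner product is estimated to error $\epsilon/M$. By \thmref{thm:ampest} this needs $P=\widetilde{O}(M/\epsilon)$ iterations of the state-preparation unitary, each of cost polylogarithmic in the sizes of $X^{\ell}$ and $K^{\ell}$, giving the claimed $\widetilde{O}(M/\epsilon)$. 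To make the per-pixel estimate correct with high probability simultaneously and, crucially, to disentangle the value register from the amplitude-estimation garbage, I would wrap the estimation in the median-evaluation procedure of \thmref{median}; its logarithmic overhead is absorbed into $\widetilde{O}$, and a union bound over the polynomially many pixels contributes only further logarithmic factors.

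With a clean register $\ket{p,q}\ket{\overline{Y}_{p,q}}$ in hand, step (ii) applies $f$ reversibly on the value register to get $\ket{p,q}\ket{f(\overline{Y}_{p,q})}$, and step (iii) performs a controlled rotation by an angle with sine proportional to $f(\overline{Y}_{p,q})/C$ (legitimate since $f$ maps into $[0,C]$), after which I uncompute the value register by reversing the estimation circuit. Starting from the uniform superposition over pixels, the flagged branch is proportional to $\sum_{p,q} f(\overline{Y}_{p,q})\ket{p,q}$; amplitude amplification on the flag then produces a state $\Delta$-close to $\ket{f(\overline{X}^{\ell+1})}$, where the accumulated $\ell_{2}$ error from \thmref{median} and the rotation is kept below $\Delta$ by choosing the internal precisions polylogarithmically in $1/\Delta$. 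Finally, feeding this state-preparation unitary into the $\ell_{\infty}$ tomography of \thmref{thm:tom} returns the classical approximation; since the nonlinearities in use (for instance ReLU) are $1$-Lipschitz, applying $f$ to arguments that agree to within $\epsilon$ transfers the guarantee entrywise and yields the stated bound $\norm{f(\overline{X}^{\ell+1}) - f(X^{\ell+1})}_{\infty}\leq\epsilon$.

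The main obstacle I expect is the precision/uncomputation interface in step (i): amplitude estimation only returns a noisy scalar entangled with garbage, so turning a superposition of such estimates into clean amplitudes proportional to $f(\overline{Y}_{p,q})$ requires the median-evaluation cleanup of \thmref{median} and a careful propagation of the per-pixel $\epsilon$-error and the global $\Delta$-error through the nonlinearity, the conditional rotation, and the final normalization. The appearance of $M$ in the complexity is intrinsic here, since an unnormalized inner product of magnitude up to $M$ must be pinned down to absolute accuracy $\epsilon$, forcing relative accuracy $\epsilon/M$ on the quantity that amplitude estimation actually measures.
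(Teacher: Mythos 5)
Your core argument coincides with the paper's proof. The reduction of the convolution to inner products $Y^{\ell+1}_{p,q}=\langle A^{\ell}_{p},F^{\ell}_{q}\rangle$ via the reshaping of \secref{tensors}, the Hadamard-test circuit whose acceptance probability is $\frac{1+\braket{A_{p}}{F_{q}}}{2}$, the use of amplitude estimation (\thmref{thm:ampest}) wrapped in median evaluation (\thmref{median}) to write a clean $\epsilon$-accurate estimate into a register in superposition, the two QRAM calls to recover $\norm{A_{p}}$ and $\norm{F_{q}}$ followed by reversible arithmetic and a boolean circuit for $f$, and the precision bookkeeping (estimate the normalized inner product to error $\epsilon/M$, hence $\widetilde{O}(M/\epsilon)$ iterations, with the $1$-Lipschitz transfer through $f$) are exactly the paper's steps. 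Note that the paper's proof of \thmref{theorem1} stops there: the state asserted by the theorem is the register-encoded superposition $\frac{1}{K}\sum_{p,q}\ket{p}\ket{q}\ket{f(\overline{Y}^{\ell+1}_{p,q})}\ket{g_{pq}}$ of Equation~(\ref{afternonlinearity}), and the ``classical approximation'' in the statement refers to the encoded values $f(\overline{X}^{\ell+1})$ being entrywise $\epsilon$-close to $f(X^{\ell+1})$, not to a classically read-out tensor.

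The gap is in your step (iii) and final paragraph, where you fold the conditional rotation, amplitude amplification, and the $\ell_{\infty}$ tomography of \thmref{thm:tom} into the algorithm of \thmref{theorem1} while still claiming total cost $\widetilde{O}(M/\epsilon)$. That accounting does not close: amplitude amplification on the flag multiplies the cost by a number of rounds scaling like an inverse square root of the mean of $f$ (the paper's $\sqrt{C}/\sqrt{\mathbb{E}(f(\overline{X}^{\ell+1}))}$ factor), and the tomography requires $\widetilde{O}(1/\eta^{2})$ further repetitions of the whole unitary; neither factor is polylogarithmic. This is precisely why, in the paper, those steps belong to Result~\ref{r1} and Algorithm~\ref{QCNNLayer}, whose running time is $\widetilde{O}\left(\frac{1}{\epsilon\eta^{2}}\cdot\frac{M\sqrt{C}}{\sqrt{\mathbb{E}(f(\overline{X}^{\ell+1}))}}\right)$ rather than $\widetilde{O}(M/\epsilon)$. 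Moreover, the classical tensor produced by sampling does not satisfy the clean bound $\norm{\cdot}_{\infty}\leq\epsilon$ you assert at the end: pixels whose value falls below the tomography threshold are set to $0$, so the read-out obeys the two-case guarantee of Result~\ref{r1}, not the entrywise bound of \thmref{theorem1}. A further mismatch is that you uncompute the value register before tomography and rotate by $f(\overline{Y}_{p,q})/C$, so the values survive only in amplitudes and must be inferred from measurement statistics, losing the paper's device of reading the exact value from the register upon a single sample. If you simply delete step (iii) and the tomography claim, what remains is a correct proof of \thmref{theorem1}.
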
 
 Recall that a convolution can be seen as a pattern detection on the input image, where the pattern is the kernel. The output values correspond to ``how much" the pattern was present in the corresponding region of the input. Low value pixels in the output indicate the absence of the pattern in the input at the corresponding regions. Therefore, by sampling according to these output values, where the high value pixels are sampled with more probability, we could retrieve less but only meaningful information for the neural network to learn. It is a singular use case where amplitudes of a quantum state are proportional to the importance of the information they carry, giving a new utility to the probabilistic nature of quantum sampling. Numerical simulations in Section \ref{NumericalSimulations} provide an empirical estimate of the 
 sampling rate to achieve good classification accuracy.

\begin{algorithm} 
\caption{QCNN Layer} \label{QCNNLayer}
\begin{algorithmic}[1]

\REQUIRE  Data input matrix $A^{\ell}$ and kernel matrix $F^{\ell}$ stored in QRAM. Precision parameters $\epsilon$ and $\eta$, a non linearity function $f : \mathbb{R} \mapsto [0,C]$. 
\ENSURE Outputs the data matrix $A^{\ell+1}$ for the next layer, result of the convolution between the input and the kernel, followed by a non linearity and pooling.\\
\vspace{10pt} 

\STATE {\bf Step 1: Quantum Convolution}\\

{\bf 1.1: Inner product estimation}\\
Perform the following mapping, using QRAM queries on rows $A^{\ell}_{p}$ and columns $F^{\ell}_{q}$, along with Theorems \ref{theoremamplitudeamplification} and \ref{median} to obtain
\begin{equation}
\frac{1}{K} \sum_{p,q} \ket{p}\ket{q}
\mapsto 
\frac{1}{K} \sum_{p,q} \ket{p}\ket{q}\ket{\overline{P}_{pq}}\ket{g_{p q}}, 
\end{equation}
where $\overline{P}_{p q}$ is $\epsilon$-close to $P_{p q} = \frac{1+\braket{A^{\ell}_{p}}{F^{\ell}_{q}}}{2}$ and $K = \sqrt{H^{\ell+1}W^{\ell+1}D^{\ell+1}}$ is a normalisation factor. $\ket{g_{p q}}$ is some garbage quantum state. 

{\bf 1.2: Non linearity}\\
Use an arithmetic circuit and two QRAM queries to obtain $\overline{Y}^{\ell+1}$, an $\epsilon$-approximation of the convolution output $Y^{\ell+1}_{p, q} = (A^{\ell}_{p},F^{\ell}_{q})$ and apply the non-linear function $f$ as a boolean circuit to obatin 
\begin{equation}
\frac{1}{K} \sum_{p,q} \ket{p}\ket{q}\ket{f(\overline{Y}^{\ell+1}_{p, q})}\ket{g_{p q}}. 
\end{equation}

\STATE {\bf Step 2: Quantum Sampling} \\
Use Conditional Rotation and Amplitude Amplification to obtain the state 
\begin{equation}
\frac{1}{K} \sum_{p,q} \alpha'_{pq} \ket{p}\ket{q}\ket{f(\overline{Y}^{\ell+1}_{pq})}\ket{g_{p q}}.
\end{equation}
Perform $\ell_{\infty}$ tomography from Theorem \ref{thm:tom} with precision $\eta$, and obtain classically all positions and values $(p,q,f(\overline{Y}^{\ell+1}_{pq}))$ such that, with high probability, values above $\eta$ are known exactly, while others are set to 0.

\STATE {\bf Step 3: QRAM Update and Pooling} \\
Update the QRAM for the next layer $A^{\ell+1}$ while sampling. The implementation of pooling (Max, Average, etc.) can be done by a specific update or the QRAM data structure described in Section \ref{pooling}.

\end{algorithmic}
\end{algorithm}

\subsection{Single Quantum Convolution Layer}\label{singlequantumlayer}
In order to develop a quantum algorithm to perform the convolution as described above, we will make use of quantum linear algebra procedures. We will use quantum states proportional to the rows of $A^{\ell}$, noted $\ket{A_{p}}$, and the columns of $F^{\ell}$, noted $\ket{F_{q}}$ (we omit the $\ell$ exponent in the quantum states to simplify the notation). These states are given by
\begin{equation}
\ket{A_{p}} = \frac{1}{\norm{A_{p}}}\sum_{r=0}^{HWD^{\ell}-1}A_{pr}\ket{r}
\end{equation}
\begin{equation}
\ket{F_{q}} = \frac{1}{\norm{F_{q}}}\sum_{s=0}^{D^{\ell+1}-1}F_{sq}\ket{s}
\end{equation}

\noindent We suppose we can load these vectors in quantum states by performing the following queries:
\begin{equation}
\begin{cases}
\ket{p}\ket{0} \mapsto \ket{p}\ket{A_{p}}\\
\ket{q}\ket{0} \mapsto \ket{q}\ket{F_{q}}
\end{cases}
\end{equation}
Such queries, in time poly-logarithmic in the dimension of the vector, can be implemented with a Quantum Random Access Memory (QRAM). See Section \ref{qramupdate} for more details on the QRAM update rules and its integration layer by layer.\\

\subsubsection{Inner Product Estimation}\label{innerproduct}
The following method to estimate inner products is derived from previous work \cite{qmeans}. With the initial state $\ket{p}\ket{q}\frac{1}{\sqrt{2}}(\ket{0}+\ket{1})\ket{0}$ we apply the queries detailed above in a controlled fashion, followed simply by a Hadamard gate to extract the inner product $\braket{A_{p}}{F_{q}}$ in an amplitude.  
\begin{equation}
\frac{1}{\sqrt{2}}\left(\ket{p}\ket{q}\ket{0}\ket{0}+\ket{p}\ket{q}\ket{1}\ket{0}\right) \mapsto \frac{1}{\sqrt{2}}\left(\ket{p}\ket{q}\ket{0}\ket{A_{p}}+\ket{p}\ket{q}\ket{1}\ket{F_{q}}\right)
\end{equation}
By applying a Hadamard gate on the third register we obtain the following state,
\begin{equation}
\frac{1}{2}\ket{p}\ket{q}\Big(\ket{0}(\ket{A_{p}}+\ket{F_{q}}) + \ket{1}(\ket{A_{p}}-\ket{F_{q}})\Big)
\end{equation}
The probability of measuring $0$ on the third register is given by $P_{p q} = \frac{1+\braket{A_{p}}{F_{q}}}{2}$. Thus we can rewrite the previous state as
\begin{equation}
\ket{p}\ket{q}\Big( \sqrt{P_{p q}}\ket{0, y_{p q}} + \sqrt{1-P_{p q}}\ket{1, y^{'}_{pq}} \Big)
\end{equation}
where $\ket{y_{pq}}$ and $\ket{y'_{pq}}$ are some garbage states. \\
We can perform the previous circuit in superposition. Since $A^{\ell}$ has $H^{\ell+1}W^{\ell+1}$ rows, and $F^{\ell}$ has $D^{\ell+1}$ columns, we obtain the state:
\begin{equation}
\ket{u} =  \frac{1}{\sqrt{H^{\ell+1}W^{\ell+1}D^{\ell+1}}} \sum_{p}\sum_{q} \ket{p}\ket{q}\Big( \sqrt{P_{pq}}\ket{0, y_{pq}} + \sqrt{1-P_{pq}}\ket{1, y^{'}_{pq}} \Big)
\end{equation}
Therefore the probability of measuring the triplet $(p,q,0)$ in the first three registers is given by 
\begin{equation}
P_0(p,q) = \frac{P_{pq}}{H^{\ell+1}W^{\ell+1}D^{\ell+1}} = \frac{1+\braket{A_{p}}{F_{q}}}{2H^{\ell+1}W^{\ell+1}D^{\ell+1}}
\end{equation}
Now we can relate to the Convolution product. Indeed, the triplets $(p,q,0)$ that are the most probable to be measured are the ones for which the value $\braket{A_{p}}{F_{q}}$ is the highest. Recall that each element of $Y^{\ell+1}$ is given by $Y^{\ell+1}_{pq}=(A_{p},F_{q})$, where $``(\cdot,\cdot)"$ denotes the inner product. We see here that we will sample most probably the positions $(p,q)$ for the highest values of $Y^{\ell+1}$, that corresponds to the most important points of $X^{\ell+1}$, by the Equation (\ref{YtoX}). Note that the the values of $Y^{\ell+1}$ can be either positive of negative, which is not an issue thanks to the positiveness of $P_0(p,q)$.

A first approach could be to measure indices $(p,q)$ and rely on the fact that pixels with high values, hence a high amplitude, would have a higher probability to be measured. However we have not exactly the final result, since $\braket{A_{p}}{F_{q}} \neq (A_{p},F_{q}) = \norm{A_{p}}\norm{F_{q}}\braket{A_{p}}{F_{q}}$. Most importantly we then want to apply a non linearity $f(Y^{\ell+1}_{pq})$ to each pixel, for instance the ReLu function, which seems not possible with unitary quantum gates if the data is encoded in the amplitudes only. Morever, due to normalization of the quantum amplitudes and the high dimension of the Hilbert space of the input, the probability of measuring each pixel is roughly the same, making the sampling inefficient. Given these facts, we have added steps to the circuit, in order to measure $(p,q,f(Y^{\ell+1}_{pq}))$, therefore know the value of a pixel when measuring it, while still measuring the most important points in priority. 

\subsubsection{Encoding the amplitude in a register}\label{registerencoding}
Let $\mathcal{U}$ be the unitary that map $\ket{0}$ to $\ket{u}$ 
\begin{equation}
\ket{u} =  \frac{1}{\sqrt{H^{\ell+1}W^{\ell+1}D^{\ell+1}}} \sum_{p,q} \ket{p}\ket{q}\Big( \sqrt{P_{pq}}\ket{0, y_{pq}} + \sqrt{1-P_{pq}}\ket{1, y^{'}_{pq}} \Big)
\end{equation}
The amplitude $\sqrt{P_{pq}}$ can be encoded in an ancillary register by using Amplitude Estimation (Theorem \ref{theoremamplitudeamplification}) followed by a Median Evaluation (Theorem \ref{median}). 

For any $\Delta>0$ and $\epsilon>0$, we can have a state $\Delta$-close to 
\begin{equation}
\ket{u^{'}} =  \frac{1}{\sqrt{H^{\ell+1}W^{\ell+1}D^{\ell+1}}} \sum_{p,q} \ket{p}\ket{q}\ket{0}\ket{\overline{P}_{pq}}\ket{g_{p q}}
\end{equation}
with probability at least $1-2\Delta$, where $|P_{pq} - {\overline{P}_{pq}}| \leq \epsilon$ and $\ket{g_{p q}}$ is a garbage state. This requires $O(\frac{\ln(1/\Delta)}{\epsilon})$ queries of $\mathcal{U}$. In the following we discard the third register $\ket{0}$ for simplicity.

The benefit of having $\overline{P}_{pq}$ in a register is to be able to perform operations on it (arithmetic or even non linear). Therefore we can simply obtain a state corresponding to the exact value of the the convolution product. Since we've built a circuit such that  $P_{pq} = \frac{1+\braket{A_{p}}{F_{q}}}{2}$, with two QRAM calls, we can retrieve the norm of the vectors by applying the following unitary:

\begin{equation}
\ket{p}\ket{q}\ket{\overline{P}_{pq}}\ket{g_{p q}}\ket{0}\ket{0}
\mapsto 
\ket{p}\ket{q}\ket{\overline{P}_{pq}}\ket{g_{p q}}\ket{\norm{A_{p}}}\ket{\norm{F_{q}}}
\end{equation}

On the fourth register, we can then write $Y^{\ell+1}_{pq} = \norm{A_{p}}\norm{F_{q}} \braket{A_{p}}{F_{q}}$ using some arithmetic circuits (addition, multiplication by a scalar, multiplication between registers). We then apply a boolean circuit that implements the ReLu function on the same register, in order to obtain an estimate of $f(Y^{\ell+1}_{pq})$ in the fourth register. We finish by inverting the previous computations and obtain the final state

\begin{equation}\label{afternonlinearity}
\ket{f(\overline{Y}^{\ell+1})} = \frac{1}{\sqrt{H^{\ell+1}W^{\ell+1}D^{\ell+1}}} \sum_{p,q} \ket{p}\ket{q}\ket{f(\overline{Y}^{\ell+1}_{pq})}\ket{g_{p q}}
\end{equation}

Because of the precision $\epsilon$ on $\ket{\overline{P}_{pq}}$, our estimation $\overline{Y}^{\ell+1}_{pq} = (2\overline{P}_{pq}-1)\norm{A_{p}}\norm{F_{q}}$, is obtained with error such that
\begin{equation}\label{errorAE}
|\overline{Y}^{\ell+1}_{pq} - Y^{\ell+1}_{pq}| \leq 2\epsilon \norm{A_{p}}\norm{F_{q}}
\end{equation}

In superposition, we can bound this error by $ |\overline{Y}^{\ell+1}_{pq} - Y^{\ell+1}_{pq}| \leq 2M\epsilon$ where we define 
\begin{equation}\label{Mdefinition}
M = \max_{p,q}{\norm{A_{p}}\norm{F_{q}}}
\end{equation}
$M$ is the maximum product between norms of one of the $D^{\ell+1}$ kernels, and one of the regions of $X^{\ell}$ of size $HWD^{\ell}$. Finally, since Equation (\ref{errorAE}) is valid for all pairs $(p,q)$, the overall error committed on the convolution product can be bounded by $\norm{\overline{Y}^{\ell+1}-Y^{\ell+1}}_{\infty} \leq 2M\epsilon$, where $\norm{.}_{\infty}$ denotes the $\ell_{\infty}$ norm. Recall that $Y^{\ell+1}$ is just a reshaped version of $X^{\ell+1}$. Since the non linearity adds no approximation, we can conclude on the final error committed for a layer of our QCNN
\begin{equation}\label{errorAEfinal}
\norm{f(\overline{X}^{\ell+1})-f(X^{\ell+1})}_{\infty} \leq 2M\epsilon
\end{equation}

At this point, we have established Theorem \ref{theorem1} as we have created the quantum state (\ref{afternonlinearity}), with given precision guarantees, in time poly-logarithmic in $\Delta$ and in the size of $X^{\ell}$ and $K^{\ell}$.

We know aim to retrieve classical information from this quantum state. Note that $\ket{Y^{\ell+1}_{pq}}$ is representing a scalar encoded in as many qubits as needed for the precision, whereas $\ket{A_{p}}$ was representing a vector as a quantum state in superposition, where each element $A_{p,r}$ is encoded in one amplitude (See Section \ref{quantumpreliminaries}). The next step can be seen as a way to retrieve both encoding at the same time, that will allow an efficient tomography focus on the values of high magnitude.

\subsubsection{Conditional rotation}
In the following sections, we omit the $\ell+1$ exponent for simplicity.
Garbage states are removed as they will not perturb the final measurement. We now aim to modify the amplitudes, such that the highest values of $\ket{f(\overline{Y})}$ are measured with higher probability. A way to do so consists in applying a conditional rotation on an ancillary qubit, proportionally to $f(\overline{Y}_{pq})$. We will detail the calculation since in the general case $f(\overline{Y}_{pq})$ can be greater than 1. To simplify the notation, we note $x=f(\overline{Y}_{pq})$. 

This step consists in applying the following rotation on a ancillary qubit:
\begin{equation}
\ket{x}\ket{0} \mapsto \ket{x}\left(\sqrt{\frac{x}{\max{x}}}\ket{0}+\beta \ket{1}\right)
\end{equation}
Where $\max{x} = \max_{p,q}{f(\overline{Y}_{pq})}$ and $\beta = \sqrt{1-(\frac{x}{\max{x}})^2}$. Note that in practice it is not possible to have access to $\ket{\max{x}}$ from the state (\ref{afternonlinearity}), but we will present a method to know \emph{a priori} this value or an upper bound in section \ref{capReLu}.

Let's note $\alpha_{pq} = \sqrt{\frac{f(\overline{Y}_{pq})}{\\max_{p,q}(f(\overline{Y}_{pq}))}}$. The ouput of this conditional rotation in superposition on state (\ref{afternonlinearity}) is then 
\begin{equation}
\frac{1}{\sqrt{HWD}} \sum_{p,q} \ket{p}\ket{q}\ket{f(\overline{Y}_{pq})}(\alpha_{pq}\ket{0}+\sqrt{1-\alpha_{pq}^2}\ket{1})
\end{equation}

\subsubsection{Amplitude Amplification}\label{amplitudeamplification}
In order to measure $(p,q,f(\overline{Y}_{pq}))$ with higher probability where $f(\overline{Y}_{pq})$ has high value, we could post select on the measurement of $\ket{0}$ on the last register. Otherwise, we can perform an amplitude amplification on this ancillary qubit. Let's rewrite the previous state as
\begin{equation}
\frac{1}{\sqrt{HWD}} \sum_{p,q} \alpha_{pq}\ket{p}\ket{q}\ket{f(\overline{Y}_{pq})}\ket{0}+\sqrt{1-\alpha_{pq}^2}\ket{g'_{p q}}\ket{1}
\end{equation}
Where $\ket{g'_{p q}}$ is another garbage state. The overall probability of measuring $\ket{0}$ on the last register is $P(0) = \frac{1}{HWD}\sum_{pq}|\alpha_{pq}|^2$. The number of queries required \cite{BHMT00} to amplify the state $\ket{0}$ is $O(\frac{1}{\sqrt{P(0)}})$. Since $f(\overline{Y}_{pq}) \in \mathbb{R}^{+}$, we have $\alpha_{pq}^2 = \frac{f(\overline{Y}_{pq})}{\max_{p,q}(f(\overline{Y}_{pq}))}$. Therefore the number of queries is 
\begin{equation}
O\left(\sqrt{\max_{p,q}(f(\overline{Y}_{pq}))}\frac{1}{\sqrt{\frac{1}{HWD}\sum_{p,q}f(\overline{Y}_{pq})}}\right) = 
O\left(\frac{\sqrt{\max_{p,q}(f(\overline{Y}_{pq}))}}{\sqrt{\mathbb{E}_{p,q}(f(\overline{Y}_{pq}))}}\right)
\end{equation}
Where the notation $\mathbb{E}_{p,q}(f(\overline{Y}_{pq}))$ represents the average value of the matrix $f(\overline{Y})$. It can also be written $\mathbb{E}(f(\overline{X}))$ as in Result \ref{r1}:
\begin{equation}\label{averagevalue}
\mathbb{E}_{p,q}(f(\overline{Y}_{pq})) = \frac{1}{HWD}\sum_{p,q}f(\overline{Y}_{pq})
\end{equation}
At the end of these iterations, we have modified the state to the following: 

\begin{equation}\label{afteramplitudeencoding}
\ket{f(\overline{Y})} = \frac{1}{\sqrt{HWD}} \sum_{p,q} \alpha'_{pq}\ket{p}\ket{q}\ket{f(\overline{Y}_{pq})}
\end{equation}

Where, to respect the normalization of the quantum state, $\alpha'_{pq} = \frac{\alpha_{pq}}{\sqrt{\sum_{p,q}\frac{\alpha^2_{pq}}{HWD}}}$. Eventually, the probability of measuring $(p,q,f(\overline{Y}_{pq}))$ is given by
\begin{equation}
p(p,q,f(\overline{Y}_{pq})) = \frac{(\alpha'_{pq})^2}{HWD} = \frac{(\alpha_{pq})^2}{\sum_{p,q}(\alpha_{pq})^2}=\frac{f(\overline{Y}_{pq})}{\sum_{p,q}{f(\overline{Y}_{pq})}}
\end{equation}

Note that we have used the same type of name $\ket{f(\overline{Y})}$ for both state (\ref{afternonlinearity}) and state (\ref{afteramplitudeencoding}). For now on, this state name will refer only to the latter (\ref{afteramplitudeencoding}). \\

\subsubsection{$\ell_{\infty}$ tomography and probabilistic sampling}\label{tomographyconvolution}
We can rewrite the final quantum state obtained in (\ref{afteramplitudeencoding}) as 
\begin{equation}\label{finalstate}
\ket{f(\overline{Y}^{\ell+1})} = \frac{1}{\sqrt{\sum_{p,q}{f(\overline{Y}^{\ell+1}_{pq})}}} \sum_{p,q} \sqrt{f(\overline{Y}^{\ell+1}_{pq})} \ket{p}\ket{q}\ket{f(\overline{Y}^{\ell+1}_{pq})}
\end{equation}

We see here that $f(\overline{Y}^{\ell+1}_{pq})$, the values of each pixel, are encoded in both the last register and in the amplitude. We will use this property to extract efficiently the exact values of high magnitude pixels. 
For simplicity, we will use instead the notation $f(\overline{X}^{\ell+1}_{n})$ to denote a pixel's value, with $n \in [H^{\ell+1}W^{\ell+1}D^{\ell+1}]$. Recall that $Y^{\ell+1}$ and $X^{\ell+1}$ are reshaped version of the same object.  

The pixels with high values will have more probability of being sampled. Specifically, we perform a tomography with $\ell_{\infty}$ guarantee and precision parameter $\eta > 0$. See Theorem \ref{thm:tom} and Section \ref{proof:tom} for details. The $\ell_{\infty}$ guarantee allows to obtain each pixel with error at most $\eta$, and require $\widetilde{O}(1/\eta^2)$ samples from the state (\ref{finalstate}). 
Pixels with low values $f(\overline{X}^{\ell+1}_{n}) < \eta$ will probably not be sampled due to their low amplitude. Therefore the error committed will be significative and we adopt the rule of setting them to 0.
Pixels with higher values $f(\overline{X}^{\ell+1}_{n}) \geq \eta$, will be sample with high probability, and only one appearance is enough to get the exact register value $f(\overline{X}^{\ell+1}_{n})$ of the pixel, as is it also written in the last register. 

To conclude, let's note $\mathcal{X}^{\ell+1}_{n}$ the resulting pixel values after the tomography, and compare it to the real classical outputs $f(X^{\ell+1}_{n})$. Recall that the measured values $f(\overline{X}^{\ell+1}_{n})$ are approximated with error at most $2M\epsilon$ with $M = \max_{p,q}{\norm{A_{p}}\norm{F_{q}}}$. The algorithm described above implements the following rules:

\begin{equation}
	\begin{cases}
	|\mathcal{X}^{\ell+1}_{n} - f(X^{\ell+1}_n)| \leq 2M\epsilon \quad \text{if} \quad f(\overline{X}^{\ell+1}_{n}) \geq \eta\\
	\mathcal{X}^{\ell+1}_{n} = 0 \qquad\qquad\qquad\qquad \text{if} \quad f(\overline{X}^{\ell+1}_{n}) < \eta\\
	\end{cases}
\end{equation}

Concerning the running time, one could ask what values of $\eta$ are sufficient to obtain enough meaningful pixels. Obviously this highly depends on the output's size $H^{\ell +1}W^{\ell +1}D^{\ell+1}$ and on the output's content itself. But we can view this question from an other perspective, by considering that we sample a constant fraction of pixels given by $\sigma \cdot (H^{\ell +1}W^{\ell +1}D^{\ell+1})$ where $\sigma \in [0,1]$ is a sampling ratio. Because of the particular amplitudes of state (\ref{finalstate}), the high value pixels will be measured and known with higher probability. The points that are not sampled are being set to 0. We see that this approach is equivalent to the $\ell_{\infty}$ tomography, therefore we have 
\begin{equation}\label{simga_eta_equivalence}
\frac{1}{\eta^2} = \sigma \cdot H^{\ell +1}W^{\ell +1}D^{\ell+1}
\end{equation}
\begin{equation}
\eta = \frac{1}{\sqrt{\sigma \cdot H^{\ell +1}W^{\ell +1}D^{\ell+1}}}
\end{equation}

We will use this analogy in the numerical simulations (Section \ref{NumericalSimulations}) to estimate, for a particular QCNN architecture and a particular dataset of images, which values of $\sigma$ are enough to allow the neural network to learn.

\subsubsection{Regularization of the Non Linearity}\label{capReLu}
In the previous steps, we see several appearances of the parameter $\max_{p,q}(f(\overline{Y}^{\ell+1}_{pq}))$.  First for the conditional rotation preprocessing, we need to know this value or an upper bound. Then for the running time, we would like to bound this parameter. Both problems can be solved by replacing the usual ReLu non linearity by a particular activation function, that we note $capReLu$. This function is simply a parametrized ReLu function with an upper threshold, the cap $C$, after which the function remain constant. The choice of $C$ will be tuned for each particular QCNN, as a tradeoff between accuracy and speed. Otherwise, the only other requirement of the QCNN activation function would be not to allow negative values. This is already often the case for most of the classical CNN. In practice, we expect the capReLu to be as good as a usual ReLu, for convenient values of the cap $C$ ($\leq10$). We performed numerical simulations to compare the learning curve of the same CNN with several values of $C$. See the numerical experiments presented in Section \ref{NumericalSimulations} for more details.

\begin{figure}[h]
\minipage{0.5\textwidth}
	\centering
	\includegraphics[width=60mm] {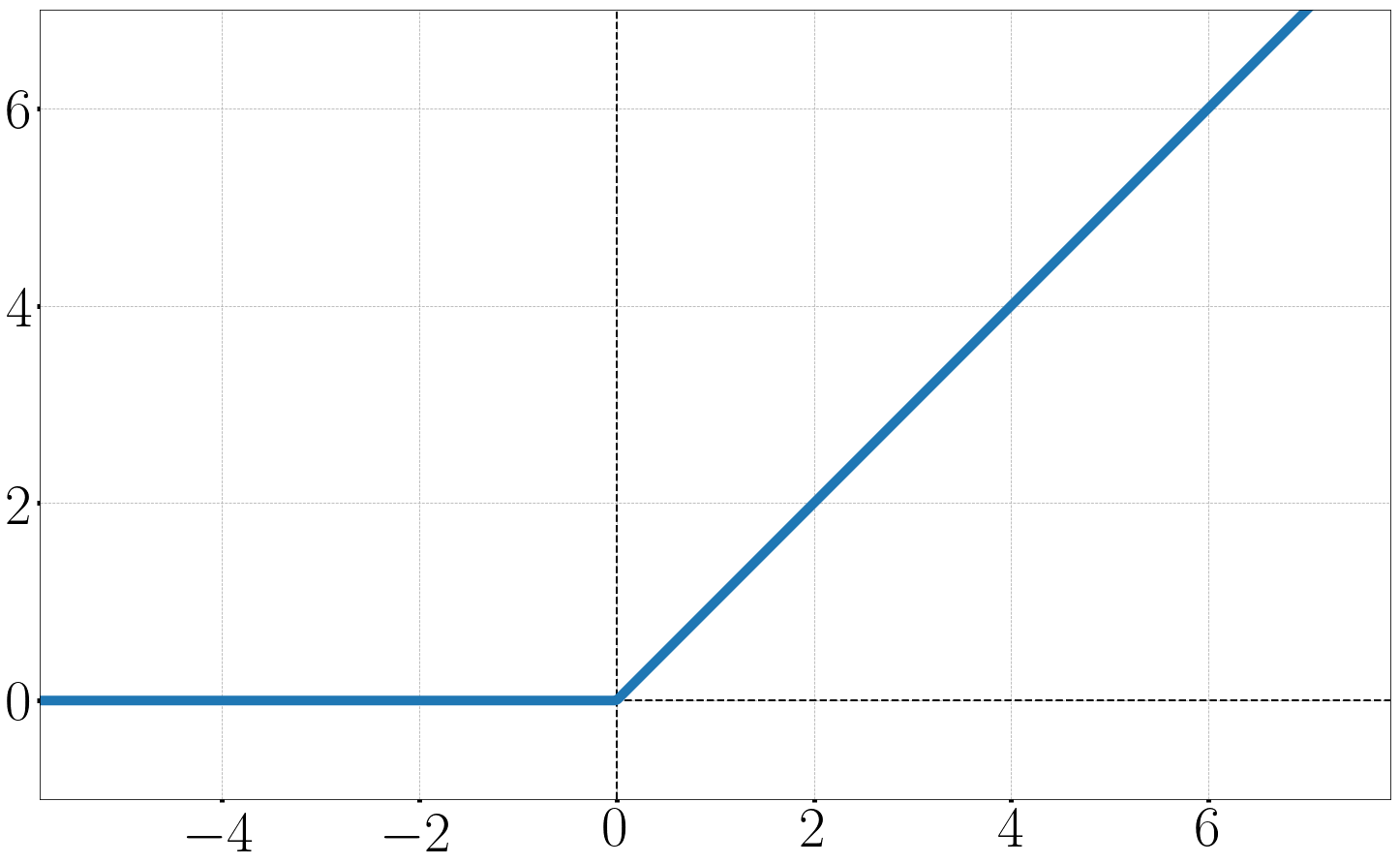} 
\endminipage\hfill
\minipage{0.5\textwidth}
	\centering
	\includegraphics[width=60mm] {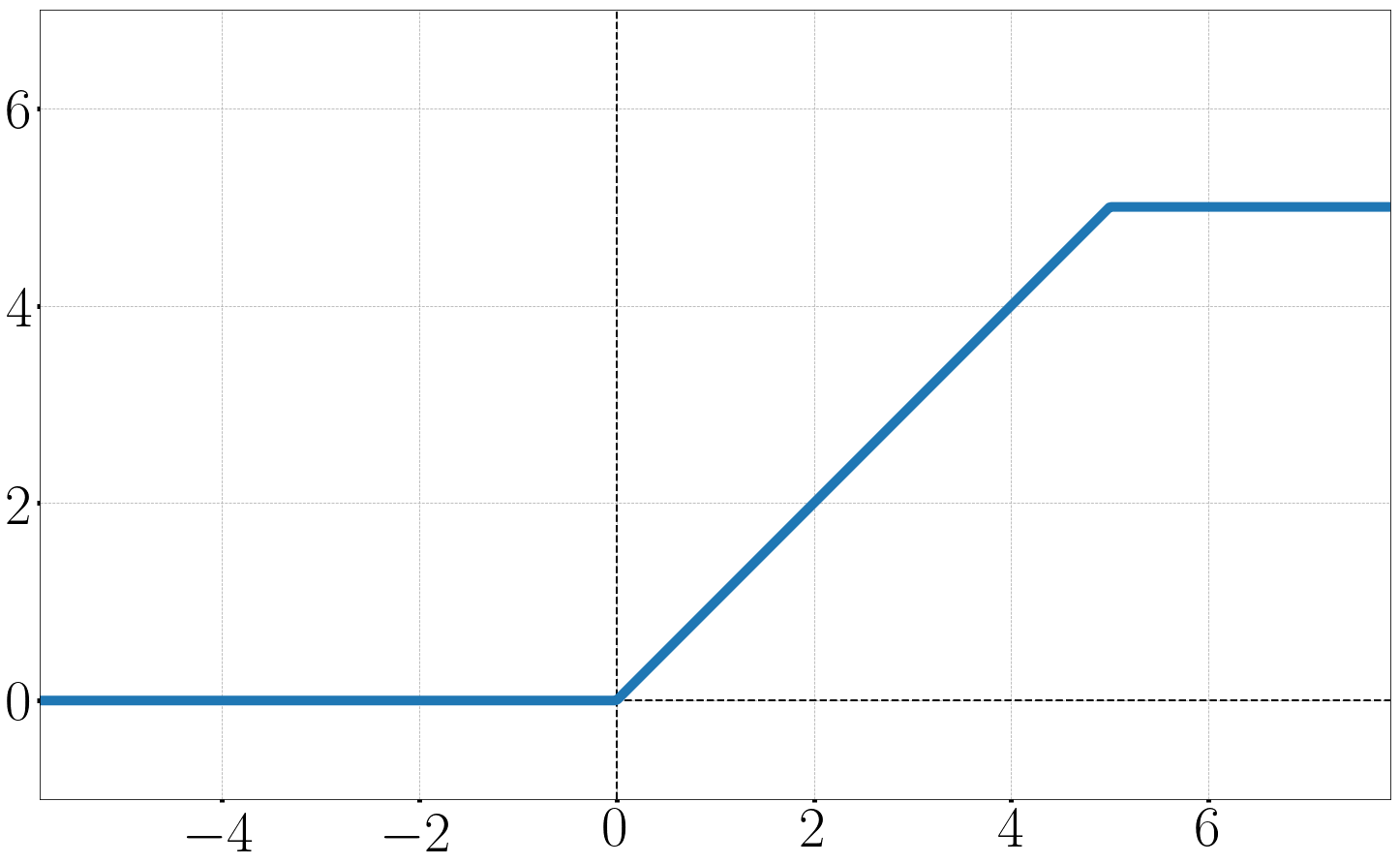} 
\endminipage
\captionsetup{justification=raggedright, margin=1cm}
\caption{Activation functions: ReLu (Left) and capReLu (Right) with a cap $C$ at 5.}
\label{capReLu_5}
\end{figure} 

\subsection{QRAM update}\label{qramupdate}

We wish to detail the use of the QRAM between each quantum convolution layer, and present how the pooling operation can happen during this phase. General results about the QRAM is given as Theorem \ref{qram}. Implementation details can be found in \cite{KP16}. In this section, we will show how to store samples from the output of the layer $\ell$, to create the input of layer $\ell+1$.

\subsubsection{Storing the output values during the sampling}

At the beginning of layer $\ell+1$, the QRAM must store $A^{\ell+1}$, a matrix where each elements is indexed by $(p',r')$, and perform $\ket{p'}\ket{0} \mapsto \ket{p'}\ket{A^{\ell+1}_{p'}}$. The data is stored in the QRAM as a tree structure described in \cite{KP17}. Each row $A^{\ell+1}_{p'}$ is stored in such a tree $T^{\ell+1}_{p'}$. Each leaf $A^{\ell+1}_{p'r'}$ correspond to a value sampled from the previous quantum state $\ket{f(\overline{Y}^{\ell+1})}$, output of the layer $\ell$. The question is to know where to store a sample from $\ket{f(\overline{Y}^{\ell+1})}$ in the tree $T^{\ell+1}_{p'}$.

When a point is sampled from the final state of the quantum convolution, at layer $\ell$, as described in Section \ref{amplitudeamplification}, we obtain a triplet corresponding to the two positions and the value of a point in the matrix $f(\overline{Y}^{\ell+1})$. 
We can know where this point belong in the input of layer $\ell+1$, the tensor $X^{\ell+1}$, by Equation (\ref{YtoX}), since $Y^{\ell}$ is a reshaped version of $X^{\ell}$.

The position in $X^{\ell+1}$, noted $(i^{\ell+1},j^{\ell+1},d^{\ell+1})$, is then matched to several positions $(p',r')$ in $A^{\ell+1}$. For each $p'$, we write in the tree $T^{\ell+1}_{p'}$ the sampled value at leaf $r'$ and update its parent nodes, as required by \cite{KP17}. Note that leaves that weren't updated will be considered as zeros, corresponding to pixels with too low values, or not selected during pooling (see next section). 

Having stored pixels in this way, we can then query $\ket{p'}\ket{0} \mapsto \ket{p'}\ket{A^{\ell}_{p'}}$, using the quantum circuit developed in \cite{KP17}, where we correctly have $\ket{A^{\ell+1}_{p'}} = \frac{1}{\norm{A^{\ell+1}_{p'}}}\sum_{r'}A^{\ell+1}_{p'r'}\ket{r'}$. Note that each tree has a logarithmic depth in the number of leaves, hence the running time of writing the output of the quantum convolution layer in the QRAM gives a marginal multiplicative increase, poly-logarithmic in the number of points sampled from $\ket{f(\overline{Y}^{\ell+1})}$, namely $O(\log(1/\eta^2))$.

\subsubsection{Quantum Pooling}\label{pooling}
As for the classical CNN, a QCNN should be able to perform pooling operations. We first detail the notations for classical pooling. At the end of layer $\ell$, we wish to apply a pooling operation of size P on the output $f(X^{\ell+1})$. We note $\tilde{X}^{\ell+1}$ the tensor after the pooling operation. For a point in $f(X^{\ell+1})$ at position $(i^{\ell+1},j^{\ell+1},d^{\ell+1})$, we know to which \emph{pooling region} it belongs, corresponding to a position $(\tilde{i}^{\ell+1},\tilde{j}^{\ell+1},\tilde{d}^{\ell+1})$ in $\tilde{X}^{\ell+1}$:

\begin{equation}
\begin{cases}
    \tilde{d}^{\ell+1} = d^{\ell+1} \\
    \tilde{j}^{\ell+1} = \floor{\frac{j^{\ell+1}}{P}}\\
    \tilde{i}^{\ell+1} = \floor{\frac{i^{\ell+1}}{P}}
\end{cases}
\end{equation}

\begin{figure}[h]
\centering
\includegraphics[scale=0.8] {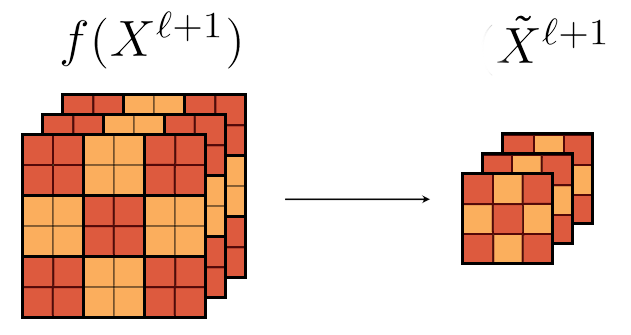} 
\captionsetup{justification=raggedright, margin=1cm}
\caption{A $2\times 2$ tensor pooling. A point in $f(X^{\ell+1})$ (left) is given by its position $(i^{\ell+1},j^{\ell+1},d^{\ell+1})$. A point in $\tilde{X}^{\ell+1}$ (right) is given by its position $(\tilde{i}^{\ell+1},\tilde{j}^{\ell+1},\tilde{d}^{\ell+1})$. Different \emph{pooling regions} in $f(X^{\ell+1})$ have separate colours, and each one corresponds to a unique point in $\tilde{X}^{\ell+1}$.}\label{max-pool}
\end{figure}

We now show how any kind of pooling can be efficiently integrated to our QCNN structure. Indeed the pooling operation will occur during the QRAM update described above, at the end of a convolution layer. At this moment we will store sampled values according to the pooling rules. 

In the quantum setting, the output of layer $\ell$ after tomography is noted $\mathcal{X}^{\ell+1}$. After pooling, we will describe it by $\mathcal{\tilde{X}}^{\ell+1}$, which has dimensions $\frac{H^{\ell+1}}{P} \times \frac{W^{\ell+1}}{P} \times D^{\ell+1}$. $\mathcal{\tilde{X}}^{\ell+1}$ will be effectively used as input for layer $\ell+1$ and its values should be stored in the QRAM to form the trees $\tilde{T}^{\ell+1}_{p'}$, related to the matrix expansion $\tilde{A}^{\ell+1}$. 

However $\mathcal{X}^{\ell+1}$ is not known before the tomography is over. Therefore we have to modify the update rule of the QRAM to implement the pooling in an online fashion, each time a sample from $\ket{f(\overline{X}^{\ell+1})}$ is drawn. Since several sampled values of $\ket{f(\overline{X}^{\ell+1})}$ can correspond to the same leaf $\tilde{A}^{\ell+1}_{p'r'}$ (points in the same \emph{pooling region}), we need an overwrite rule, that will depend on the type of pooling. In the case of Maximum Pooling, we simply update the leaf and the parent nodes if the new sampled value is higher that the one already written. In the case of Average Polling, we replace the actual value by the new averaged value. 

In the end, any pooling can be included in the already existing QRAM update. In the worst case, the running time is increased by $\widetilde{O}(P/\eta^2)$, an overhead corresponding to the number of times we need to overwrite existing leaves, with $P$ being a small constant in most cases. 

As we will see in Section \ref{backpropagation}, the final positions $(p, q)$ that were sampled from $\ket{f(\overline{X}^{\ell+1})}$ and selected after pooling must be stored for further use during the backpropagation phase.

\subsection{Running Time}\label{runningtime}
We will now summarize the running time for one forward pass of convolution layer $\ell$. With $\tilde{O}$ we hide the polylogarithmic factors. We first write the running time of the classical CNN layer, which is given by
\begin{equation}
\widetilde{O}\left(H^{\ell +1}W^{\ell +1}D^{\ell+1} \cdot HWD^{\ell}\right)
\end{equation}

\noindent For the QCNN, the previous steps prove Result \ref{r1} and can be implemented in time
\begin{equation}
\widetilde{O}\left(\frac{1}{\epsilon \eta^2}\cdot \frac{M\sqrt{C}}{\sqrt{\mathbb{E}(f(\overline{X}^{\ell+1}))}} \right)
\end{equation}

\noindent Note that, as explain in Section \ref{tomographyconvolution}, the quantum running time can also be written 
\begin{equation}\label{final_runtime_2}
\widetilde{O}\left(\sigma H^{\ell +1}W^{\ell +1}D^{\ell+1} \cdot \frac{M\sqrt{C}}{\epsilon \sqrt{\mathbb{E}(f(\overline{X}^{\ell+1}))}} \right)
\end{equation}
 with $\sigma \in [0,1]$ being the fraction of sampled elements among $H^{\ell +1}W^{\ell +1}D^{\ell+1}$ of them.

It is interesting to notice that the one quantum convolution layer can also include the ReLu operation and the Pooling operation in the same circuit, for no significant increase in the running time, whereas in the classical CNN each operation must be done on the whole data again. 

Let's go through all the important parameter that appear in the quantum running time:

- The error $\epsilon$ committed during the inner product estimation, is an empirical parameter
Our simulations tend to show that this error can be high without compromising the learning. Indeed the introduction of noise is sometimes interesting in machine learning applications, providing more robust learning \cite{goodfellow2016deep, bishop1995training}.

- The parameter $M = \max_{p,q}{\norm{A_{p}}\norm{F_{q}}}$ as a worst case upper bound during inner product estimation.

- Precision parameter $\eta$ can be related to the fraction of sampled elements in the quantum output $\ket{f(\overline{X}^{\ell+1})}$ of the convolution layer, during $\ell_{\infty}$ tomography.  

- Amplitude amplification adds a multiplicative term $\sqrt{\max{(f(\overline{X}^{\ell+1}))}}$ to the running time, replaced here by $\sqrt{C}$, a constant parameter of order $O(1)$, corresponding to the \emph{cap}, or upper bound, of the activation function. See Section \ref{capReLu} for details. This parameter appears at the conditional rotation step.

- Similarly, the data related value $\mathbb{E}(f(\overline{X}^{\ell+1}))$, appearing during amplitude amplification, denotes the average value in the tensor $f(\overline{X}^{\ell+1})$, as defined in Equation (\ref{averagevalue}).

Finally, in most case, in order to recognize kernel features in the input tensor, the size $H \times W$ of the kernels is a sufficient constant fraction of the input size $H^{\ell} \times W^{\ell}$. Since $H^{\ell+1} = H^{\ell} - H  + 1$, the classical running time can be seen as quadratic in the input size, whereas the quantum algorithm is almost linear. 

\subsection{Variable Summary}\label{variable_summary_section}

We recall the most important variables for layer $\ell$. They represent tensors, their approximations, and their reshaped versions.

\begin{table}[H]
\centering
\begin{tabular}{|c|c|c|c|}
\hline
Data                                & Variable             & Dimensions                                                           & Indices                                              \\ \hline
\multirow{3}{*}{Input}       & $X^{\ell}$           & $H^{\ell}\times W^{\ell}\times D^{\ell}$                             & $(i^{\ell},j^{\ell},d^{\ell})$                       \\ \cline{2-4} 
                                       & $Y^{\ell}$           & $(H^{\ell}W^{\ell})\times D^{\ell}$                                  & -                                                    \\ \cline{2-4} 
                                       & $A^{\ell}$           & $(H^{\ell+1}W^{\ell+1})\times(HWD^{\ell})$                           & $(p,r)$                                              \\ \hline
\multirow{2}{*}{Kernel}    & $K^{\ell}$           & $H\times W \times D^{\ell}\times D^{\ell+1}$                         & $(i,j,d,d')$                                         \\ \cline{2-4} 
                                       & $F^{\ell}$           & $(HWD^{\ell})\times D^{\ell+1}$                                      & $(s,q)$                                             \\ \hline
\end{tabular}
\caption{Summary of input variables for the $\ell^{th}$ layer, along with their meaning, dimensions and corresponding notations. These variables are common for both \emph{quantum} and \emph{classical} algorithms. We have omitted indices for $Y^{\ell}$ which don't appear in our work.}
\label{variable_summary_1}
\end{table}

\begin{table}[H]
\centering
\begin{tabular}{|c|c|c|c|}
\hline
Data                                   & Variable             & Dimensions                                                           & Indices                                              \\ \hline

\multirow{2}{*}{Output of Quantum Convolution} 
				     & $f(\overline{Y}^{\ell+1})$         & $(H^{\ell+1}W^{\ell+1})\times D^{\ell+1}$                            & $(p,q)$                               \\ \cline{2-4} 
				     & $f(\overline{X}^{\ell+1})$         & $H^{\ell+1}\times W^{\ell+1}\times D^{\ell+1}$                    & $(i^{\ell+1},j^{\ell+1},d^{\ell+1})$                    \\ \hline

Output of Quantum Tomography   
				     & $\mathcal{X}^{\ell+1}$ & $H^{\ell+1}\times W^{\ell+1}\times D^{\ell+1}$ & 				$(i^{\ell+1},j^{\ell+1},d^{\ell+1})$        		     \\ \hline

Output of Quantum Pooling           & $\tilde{\mathcal{X}}^{\ell+1}$ & $\frac{H^{\ell+1}}{P} \times \frac{W^{\ell+1}}{P} \times D^{\ell+1}$ &   $(\tilde{i}^{\ell+1},\tilde{j}^{\ell+1},\tilde{d}^{\ell+1})$ 		      \\ \hline

\end{tabular}
\caption{Summary of variables describing outputs of the layer $\ell$, with the \emph{quantum} algorithm.}
\label{variable_summary_2}
\end{table}

\begin{table}[H]
\centering
\begin{tabular}{|c|c|c|c|}
\hline
Data                                   & Variable             & Dimensions                                                           & Indices                                              \\ \hline

\multirow{2}{*}{Output of Classical Convolution} 

				     & $f(Y^{\ell+1})$         & $(H^{\ell+1}W^{\ell+1})\times D^{\ell+1}$                            		& $(p,q)$         	                      \\ \cline{2-4} 
				     & $f(X^{\ell+1})$         & $H^{\ell+1}\times W^{\ell+1}\times D^{\ell+1}$                       		& $(i^{\ell+1},j^{\ell+1},d^{\ell+1})$         	     \\ \hline

Output of Classical Pooling           & $\tilde{X}^{\ell+1}$ & $\frac{H^{\ell+1}}{P} \times \frac{W^{\ell+1}}{P} \times D^{\ell+1}$ &   $(\tilde{i}^{\ell+1},\tilde{j}^{\ell+1},\tilde{d}^{\ell+1})$     \\ \hline

\end{tabular}
\caption{Summary of variables describing outputs of the layer $\ell$, with the \emph{classical} algorithm.} 
\label{variable_summary_3}
\end{table}

Classical and quantum algorithms can be compared with these two diagrams:

\begin{equation}
\begin{cases}

\text{Quantum convolution layer : }
X^{\ell} \rightarrow 
\ket{\overline{X}^{\ell+1}} \rightarrow 
\ket{f(\overline{X}^{\ell+1})} \rightarrow 
\mathcal{X}^{\ell+1} \rightarrow 
\tilde{\mathcal{X}}^{\ell+1}\\

\text{Classical convolution layer : }
X^{\ell} \rightarrow 
X^{\ell+1} \rightarrow 
f(X^{\ell+1}) \rightarrow 
\tilde{X}^{\ell+1}

\end{cases}
\end{equation} 

~\\
We finally provide some remarks that could clarify some notations ambiguity:

- Formally, the output of the quantum algorithm is $\tilde{\mathcal{X}}^{\ell+1}$. It is used as input for the next layer $\ell+1$. But we consider that all variables' names are \emph{reset} when starting a new layer: $X^{\ell+1} \leftarrow \tilde{\mathcal{X}}^{\ell+1}$.

- For simplicity, we have sometimes replaced the indices $(i^{\ell+1},j^{\ell+1},d^{\ell+1})$ by $n$ to index the elements of the output.

- In Section \ref{pooling}, the input for layer $\ell+1$ is stored as $A^{\ell+1}$, for which the elements are indexed by $(p',r')$.

\section{Quantum Backprogation Algorithm}\label{backpropagation}

The entire QCNN is made of multiple layers. For the last layer's output, we expect only one possible outcome, or a few in the case of a classification task, which means that the dimension of the quantum output is very small. A full tomography can be performed on the last layer's output in order to calculate the outcome. The loss $\mathcal{L}$ is then calculated, as a measure of correctness of the predictions compared to the ground truth. As the classical CNN, our QCNN should be able to perform the optimization of its weights (elements of the kernels) to minimize the loss by an iterative method.

\begin{theorem}\label{theorem2}{ (Quantum Backpropagation for Quantum CNN) \\} 
Given the forward pass quantum algorithm in Algorithm \ref{QCNNLayer}, the input matrix $A^{\ell}$ and the kernel matrix $F^{\ell}$ stored in the QRAM for each layer $\ell$, and a loss function $\mathcal{L}$, there is a quantum backpropagation algorithm that 
estimates, for any precision $\delta> 0$, the gradient tensor $\frac{\partial \mathcal{L}}{\partial F^{\ell}}$ and update each element to perform gradient descent such that
\begin{equation}
\forall (s,q), \left|\frac{\partial \mathcal{L}}{\partial F^{\ell}_{s,q}} - \overline{\frac{\partial \mathcal{L}}{\partial F^{\ell}_{s,q}}} \right| 
\leq 2\delta \norm{\frac{\partial \mathcal{L}}{\partial F^{\ell}}}_{2}
\end{equation}
Let $\frac{\partial \mathcal{L}}{\partial Y^{\ell}}$ be the gradient with respect to the $\ell^{th}$ layer. The running time of a single layer $\ell$ for quantum backpropagation is given by
\begin{equation}
O\left(\left(\left(\mu(A^{\ell})+\mu(\frac{\partial \mathcal{L}}{\partial Y^{\ell+1}})\right)\kappa(\frac{\partial \mathcal{L}}{\partial F^{\ell}})+\left(\mu(\frac{\partial \mathcal{L}}{\partial Y^{\ell+1}})+\mu(F^{\ell})\right)\kappa(\frac{\partial \mathcal{L}}{\partial Y^{\ell}})\right) \frac{\log{1/\delta}}{\delta^{2}}\right)
\end{equation}
where for a matrix $V$, $\kappa(V)$ is the condition number and $\mu(V)$ is defined in Equation (\ref{mudefinition}).
\end{theorem}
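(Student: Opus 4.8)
The plan is to reduce quantum backpropagation to two matrix--matrix multiplications that can be carried out with the quantum linear algebra procedures of Theorem \ref{QuantumLinearAlgebra}, followed by the $\ell_{\infty}$ tomography of Theorem \ref{thm:tom} to read out each gradient entry with the desired relative precision. First I would recall the classical backpropagation identities for the matrix form $Y^{\ell+1}=A^{\ell}F^{\ell}$ of the convolution: the gradient with respect to the kernel is $\frac{\partial \mathcal{L}}{\partial F^{\ell}} = (A^{\ell})^{T}\frac{\partial \mathcal{L}}{\partial Y^{\ell+1}}$, while the gradient that must be sent to the previous layer comes from $\frac{\partial \mathcal{L}}{\partial A^{\ell}} = \frac{\partial \mathcal{L}}{\partial Y^{\ell+1}}(F^{\ell})^{T}$ after folding the overlapping receptive-field entries of $A^{\ell}$ back into the tensor layout to recover $\frac{\partial \mathcal{L}}{\partial Y^{\ell}}$. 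Each identity is a product of two matrices that are either already in QRAM ($A^{\ell}$ and $F^{\ell}$) or available there after processing the layer above ($\frac{\partial \mathcal{L}}{\partial Y^{\ell+1}}$, obtained recursively).

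Next, for each product I would build the quantum state proportional to the result. Writing the target columnwise, e.g.\ $\left(\frac{\partial \mathcal{L}}{\partial F^{\ell}}\right)_{q} = (A^{\ell})^{T}\left(\frac{\partial \mathcal{L}}{\partial Y^{\ell+1}}\right)_{q}$, I apply Theorem \ref{QuantumLinearAlgebra} to map a column-index superposition to a state proportional to the product, arranging the computation so that the factor parameters $\mu(A^{\ell})$ and $\mu(\frac{\partial \mathcal{L}}{\partial Y^{\ell+1}})$ enter additively and the resolution of the output spectrum is governed by $\kappa(\frac{\partial \mathcal{L}}{\partial F^{\ell}})$, while the second part of the same theorem supplies an estimate of $\norm{\frac{\partial \mathcal{L}}{\partial F^{\ell}}}_{2}$. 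I then invoke Theorem \ref{thm:tom} with precision $\delta$ on this state: rescaling the $\ell_{\infty}$ guarantee on the unit vector by the norm estimate yields $\left|\frac{\partial \mathcal{L}}{\partial F^{\ell}_{s,q}} - \overline{\frac{\partial \mathcal{L}}{\partial F^{\ell}_{s,q}}}\right| \leq 2\delta\norm{\frac{\partial \mathcal{L}}{\partial F^{\ell}}}_{2}$ for every entry, the factor two absorbing the state-preparation error. The identical construction applied to $\frac{\partial \mathcal{L}}{\partial Y^{\ell+1}}(F^{\ell})^{T}$ yields the second running-time term, with $\mu(\frac{\partial \mathcal{L}}{\partial Y^{\ell+1}})+\mu(F^{\ell})$ and $\kappa(\frac{\partial \mathcal{L}}{\partial Y^{\ell}})$. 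Multiplying the per-copy cost of the linear algebra (the $\mu\kappa$ factors, times $\log(1/\delta)$ for the precision of the amplitude and median steps) by the $O(\log d/\delta^{2})$ copies that tomography requires reproduces the stated $\frac{\log(1/\delta)}{\delta^{2}}$ dependence. Finally I would apply a gradient-descent step to each kernel entry and rewrite it into the QRAM, as in the forward-pass update.

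The main obstacle I anticipate is the folding step that turns $\frac{\partial \mathcal{L}}{\partial A^{\ell}}$ into $\frac{\partial \mathcal{L}}{\partial Y^{\ell}}$: because the expansion that builds $A^{\ell}$ duplicates every pixel shared by overlapping receptive fields, each tensor gradient is a sum of several matrix-gradient entries, and I must check that carrying out this summation on the tomographed (and possibly thresholded) entries keeps the per-entry error proportional to $\norm{\frac{\partial \mathcal{L}}{\partial Y^{\ell}}}_{2}$ rather than amplifying it by the overlap multiplicity. A secondary technical point is to justify that the combined multiplication subroutine really produces the additive parameters $\mu(A^{\ell})+\mu(\frac{\partial \mathcal{L}}{\partial Y^{\ell+1}})$ and that the output condition numbers $\kappa(\frac{\partial \mathcal{L}}{\partial F^{\ell}})$ and $\kappa(\frac{\partial \mathcal{L}}{\partial Y^{\ell}})$ are the correct quantities controlling the subnormalization; this amounts to tracking the normalization carefully through the columnwise state preparation and the norm-estimation step.
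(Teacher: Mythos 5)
Your reduction to the two matrix products, the columnwise use of Theorem \ref{QuantumLinearAlgebra}, the $\ell_{\infty}$ tomography read-out combined with the norm estimate via a triangle inequality to obtain the $2\delta\norm{\cdot}_{2}$ bound, and the summation of the two $\mu\kappa\log(1/\delta)/\delta^{2}$ terms all coincide with the paper's own proof (Sections \ref{quantumalgobackprop} and \ref{gradientdescenterror}). However, there is a genuine gap: you backpropagate through the layer as if the forward pass were the pure convolution $Y^{\ell+1}=A^{\ell}F^{\ell}$, whereas the theorem is stated relative to Algorithm \ref{QCNNLayer}, whose forward pass also applies the capReLu non-linearity, keeps only the positions retained by the $\ell_{\infty}$ tomography (all others are set to $0$), and pools. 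The chain rule therefore requires that, \emph{before} the matrix products, the entries of $\frac{\partial \mathcal{L}}{\partial Y^{\ell+1}}$ corresponding to positions that were clipped, not sampled, or not selected during pooling be set to zero; this is Step 1 of Algorithm \ref{QBackpropagation}, formalized in Equations (\ref{approximationnonlinearitybackpropagation}) and (\ref{quantumpoolingbackpropagation}). Without it, your algorithm computes the gradient of a different (purely linear-convolutional) network. This step is also where the only genuinely quantum difficulty of the backpropagation lies: because the non-linearity was applied in superposition, one cannot record at which positions it was active, and the paper's resolution is to use the set of sampled positions as a proxy (unsampled entries get zero gradient, since clipped-to-zero entries are essentially never sampled; entries above the cap $C$ are sampled with high probability and keep their gradient as if a plain ReLu had been used), while pooling selections \emph{can} be recorded classically during the QRAM update and handled exactly as in Equation (\ref{poolingbackpropagation}).

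Separately, the obstacle you flag about folding $\frac{\partial \mathcal{L}}{\partial A^{\ell}}$ into $\frac{\partial \mathcal{L}}{\partial Y^{\ell}}$ is legitimate but is not resolved in the paper either: the paper applies Equation (\ref{updateX}) and does not analyze whether summing several tomographed entries of $\frac{\partial \mathcal{L}}{\partial Y^{\ell+1}}(F^{\ell})^{T}$ amplifies the per-entry error by the overlap multiplicity. Note, though, that the displayed error guarantee in the theorem is asserted only for $\frac{\partial \mathcal{L}}{\partial F^{\ell}}$, where no folding occurs, so this issue affects the quality of the recursively propagated gradient rather than the stated inequality.
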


Theorem \ref{theorem2} is proved in Section \ref{quantumalgobackprop} for the running time and Section \ref{gradientdescenterror} for the error guarantees.

\begin{algorithm} [H]
\caption{Quantum Backpropagation} \label{QBackpropagation}
\begin{algorithmic}[1]
\REQUIRE   Precision parameter $\delta$. Data matrices $A^{\ell}$ and kernel matrices $F^{\ell}$ stored in QRAM for each layer $\ell$.
\ENSURE Outputs gradient matrices $\frac{\partial \mathcal{L}}{\partial F^{\ell}}$ and $\frac{\partial \mathcal{L}}{\partial Y^{\ell}}$ for each layer $\ell$.\\
\vspace{10pt} 
\STATE Calculate the gradient for the last layer $L$ using the outputs and the true labels: $\frac{\partial \mathcal{L}}{\partial Y^{L}}$ 
\FOR{ $\ell = L-1, \cdots, 0$}

\STATE {\bf Step 1 : Modify the gradient}\\
With $\frac{\partial \mathcal{L}}{\partial Y^{\ell+1}}$ stored in QRAM, set to 0 some of its values to take into account pooling, tomography and non linearity that occurred in the forward pass of layer $\ell$. These values correspond to positions that haven't been sampled nor pooled, since they have no impact on the final loss.

\STATE {\bf Step 2 : Matrix-matrix multiplications}\\
With the modified values of $\frac{\partial \mathcal{L}}{\partial Y^{\ell+1}}$, use quantum linear algebra (Theorem \ref{QuantumLinearAlgebra}) to perform the following matrix-matrix multiplications
\begin{equation}
\begin{cases}
(A^{\ell})^{T} \cdot \frac{\partial L}{\partial Y^{\ell+1}}\\
\frac{\partial \mathcal{L}}{\partial Y^{\ell+1}} \cdot (F^{\ell})^T
\end{cases}
\end{equation}
to obtain quantum states corresponding to $\frac{\partial \mathcal{L}}{\partial F^{\ell}}$ and $\frac{\partial \mathcal{L}}{\partial Y^{\ell}}$.

\STATE {\bf Step 3 : $\ell_{\infty}$ tomography}\\
Using the $\ell_{\infty}$ tomography procedure given in Algorithm \ref{alg:tom}, estimate each entry of $\frac{\partial \mathcal{L}}{\partial F^{\ell}}$ and $\frac{\partial \mathcal{L}}{\partial Y^{\ell}}$ with errors $\delta \norm{\frac{\partial \mathcal{L}}{\partial F^{\ell}}}$ and $\delta \norm{\frac{\partial \mathcal{L}}{\partial Y^{\ell}}}$ respectively. 
Store all elements of $\frac{\partial \mathcal{L}}{\partial F^{\ell}}$ in QRAM. 
\STATE {\bf Step 4 : Gradient descent }\\
Perform gradient descent using the estimates from step 3 to update the values of $F^{\ell}$ in QRAM:
\begin{equation}
F^{\ell}_{s,q} \gets F^{\ell}_{s,q} - 
\lambda \left(\frac{\partial \mathcal{L}}{\partial F^{\ell}_{s,q}} 
\pm 2\delta\norm{\frac{\partial \mathcal{L}}{\partial F^{\ell}}}_{2} \right)
\end{equation}
\ENDFOR
\end{algorithmic}
\end{algorithm}

\subsection{Classical Backpropagation}
After each forward pass, the outcome is compared to the true labels and a suitable loss function is computed. We can update our weights by gradient descent to minimize this loss, and iterate. The main idea behind the backpropagation is to compute the derivatives of the loss $\mathcal{L}$, layer by layer, starting from the last one.

At layer $\ell$, the derivatives needed to perform the gradient descent are $\frac{\partial \mathcal{L}}{\partial F^{\ell}}$ and $\frac{\partial \mathcal{L}}{\partial Y^{\ell}}$. The first one represents the gradient of the final loss $\mathcal{L}$ with respect to each kernel element, a matrix of values that we will use to update the kernel weights $F^{\ell}_{s,q}$. The second one is the gradient of $\mathcal{L}$ with respect to the layer itself and is only needed to calculate the gradient $\frac{\partial \mathcal{L}}{\partial F^{\ell-1}}$ at layer $\ell-1$.

\subsubsection{Convolution Product}
We first consider a classical convolution layer without non linearity or pooling. Thus the output of layer $\ell$ is the same tensor as the input of layer $\ell+1$, namely $X^{\ell+1}$ or equivalently $Y^{\ell+1}$. Assuming we know $\frac{\partial \mathcal{L}}{\partial X^{\ell+1}}$ or equivalently $\frac{\partial \mathcal{L}}{\partial Y^{\ell+1}}$, both corresponding to the derivatives of the $(\ell + 1)^{th}$ layer's input, we will show how to calculate $\frac{\partial \mathcal{L}}{\partial F^{\ell}}$, the matrix of derivatives with respect to the elements of the previous kernel matrix $F^{\ell}$. This is the main goal in order to optimize the kernel's weights. 

The details of the following calculations can be found in \cite{CNNIntro}. We will use the notation $vec(X)$ to represents the vectorized form of any tensor $X$.

Recall that $A^{\ell}$ is the matrix expansion of the tensor $X^{\ell}$, whereas $Y^{\ell}$ is a matrix reshaping of $X^{\ell}$. By applying the chain rule $\frac{\partial \mathcal{L}}{\partial vec(F^{\ell})^T} = \frac{\partial \mathcal{L}}{\partial vec(X^{\ell+1})^T}\frac{\partial vec(X^{\ell+1})}{\partial vec(F^{\ell})^T}$, we can obtain (See \cite{CNNIntro} for calculations details):
\begin{equation}\label{updateF}
\frac{\partial \mathcal{L}}{\partial F^{\ell}} = (A^{\ell})^{T} \frac{\partial L}{\partial Y^{\ell+1}}
\end{equation}
Equation (\ref{updateF}) shows that, to obtain the desired gradient, we can just perform a matrix-matrix multiplication between the transposed layer itself ($A^{\ell}$) and the gradient with respect to the previous layer ($\frac{\partial L}{\partial Y^{\ell+1}}$).

Equation (\ref{updateF}) explains also why we will need to calculate $\frac{\partial \mathcal{L}}{\partial Y^{\ell}}$ in order to backpropagate through layer $\ell-1$. To calculate it, we use the chain rule again for $\frac{\partial \mathcal{L}}{\partial vec(X^{\ell})^T} = \frac{\partial \mathcal{L}}{\partial vec(X^{\ell+1})^T}\frac{\partial vec(X^{\ell+1})}{\partial vec(X^{\ell})^T}$. Recall that a point in $A^{\ell}$, indexed by the pair $(p,r)$, can correspond to several triplets $(i^{\ell},j^{\ell},d^{\ell})$ in $X^{\ell}$. We will use the notation $(p,r) \leftrightarrow ({i^{\ell},j^{\ell},d^{\ell}})$ to express formally this relation. One can show that $\frac{\partial \mathcal{L}}{\partial Y^{\ell+1}}(F^{\ell})^T$ is a matrix of same shape as $A^{\ell}$, and that the chain rule leads to a simple relation to calculate $\frac{\partial \mathcal{L}}{\partial Y^{\ell}}$ (See \cite{CNNIntro} for calculations details):

\begin{equation}\label{updateX}
\left[\frac{\partial \mathcal{L}}{\partial X^{\ell}}\right]_{i^{\ell},j^{\ell},d^{\ell}} = \sum_{(p,r) \leftrightarrow ({i^{\ell},j^{\ell},d^{\ell}})}\left[\frac{\partial \mathcal{L}}{\partial Y^{\ell+1}}(F^{\ell})^T\right]_{p,r}
\end{equation}

We have shown how to obtain the gradients with respect to the kernels $F^{\ell}$ and to the layer itself $Y^{\ell}$ (or equivalently $X^{\ell}$).

\subsubsection{Non Linearity}

The activation function has also an impact on the gradient. In the case of the ReLu, we should only cancel gradient for points with negative values. For points with positive value,  the derivatives remain the same since the function is the identity. A formal relation can be given by

\begin{equation}\label{relubackpropagation}
\left[\frac{\partial \mathcal{L}}{\partial X^{\ell+1}}\right]_{i^{\ell+1},j^{\ell+1},d^{\ell+1}} = 
\begin{cases}
    \left[\frac{\partial \mathcal{L}}{\partial f(X^{\ell+1})}\right]_{i^{\ell+1},j^{\ell+1},d^{\ell+1}} \text{ if } X^{\ell+1}_{i^{\ell+1},j^{\ell+1},d^{\ell+1}} \geq 0\\
    0 \text{ otherwise}\\
\end{cases}
\end{equation}

\subsubsection{Pooling}

If we take into account the pooling operation, we must change some of the gradients. Indeed, a pixel that hasn't been selected during pooling has no impact on the final loss, thus should have a gradient equal to 0.  We will focus on the case of Max Pooling (Average Pooling relies on similar idea). To state a formal relation, we will use the notations of Section \ref{pooling}: an element in the output of the layer, the tensor $f(X^{\ell+1})$, is located by the triplet $(i^{\ell+1},j^{\ell+1},d^{\ell+1})$. The tensor after pooling is noted $\tilde{X}^{\ell+1}$ and its points are located by the triplet $(\tilde{i}^{\ell+1},\tilde{j}^{\ell+1},\tilde{d}^{\ell+1})$. During backpropagation, after the calculation of $\frac{\partial \mathcal{L}}{\partial \tilde{X}^{\ell+1}}$, some of the derivatives of $f(X^{\ell+1})$ should be set to zero with the following rule:

\begin{equation}\label{poolingbackpropagation}
\left[\frac{\partial \mathcal{L}}{\partial f(X^{\ell+1})}\right]_{i^{\ell+1},j^{\ell+1},d^{\ell+1}} = 
\begin{cases}
    \left[\frac{\partial \mathcal{L}}{\partial \tilde{X}^{\ell+1}}\right]_{\tilde{i}^{\ell+1},\tilde{j}^{\ell+1},\tilde{d}^{\ell+1}} \text{ if $(i^{\ell+1},j^{\ell+1},d^{\ell+1})$ was selected during pooling}\\
    0 \text{ otherwise}\\
\end{cases}
\end{equation}

\subsection{Quantum Algorithm for Backpropagation}\label{quantumalgobackprop}

In this section, we want to give a quantum algorithm to perform backpropagation on a layer $\ell$, and detail the impact on the derivatives, given by the following diagram:

\begin{equation}\label{backpropagationdiagram}
\begin{cases}
    \frac{\partial \mathcal{L}}{\partial X^{\ell}}\\
    \frac{\partial \mathcal{L}}{\partial F^{\ell}}\\
\end{cases}
 \leftarrow 
\frac{\partial \mathcal{L}}{\partial \overline{X}^{\ell+1}}  \leftarrow 
\frac{\partial \mathcal{L}}{\partial f(\overline{X}^{\ell+1})}  \leftarrow 
\frac{\partial \mathcal{L}}{\partial \mathcal{X}^{\ell+1}}  \leftarrow 
\frac{\partial \mathcal{L}}{\partial \tilde{\mathcal{X}}^{\ell+1}}  =
\frac{\partial \mathcal{L}}{\partial X^{\ell+1}} 
\end{equation}

We assume that backpropagation has been done on layer $\ell+1$. 
This means in particular that $\frac{\partial \mathcal{L}}{\partial X^{\ell+1}}$ is stored in QRAM. 
However, as shown on Diagram (\ref{backpropagationdiagram}), $\frac{\partial \mathcal{L}}{\partial X^{\ell+1}}$ corresponds formally to $\frac{\partial \mathcal{L}}{\partial \tilde{\mathcal{X}}^{\ell+1}}$, and not $\frac{\partial \mathcal{L}}{\partial \overline{X}^{\ell+1}}$. 
Therefore, we will have to modify the values stored in QRAM to take into account non linearity, tomography and pooling. 
We will first consider how to implement $\frac{\partial \mathcal{L}}{\partial X^{\ell}}$ and $\frac{\partial \mathcal{L}}{\partial F^{\ell}}$ through backpropagation, considering only convolution product, as if $\frac{\partial \mathcal{L}}{\partial \overline{X}^{\ell+1}}$ and $\frac{\partial \mathcal{L}}{\partial X^{\ell+1}}$ where the same. Then we will detail how to simply modify $\frac{\partial \mathcal{L}}{\partial X^{\ell+1}}$ \emph{a priori}, by setting some of its values to 0.

\subsubsection{Quantum Convolution Product}
In this section we consider only the quantum convolution product without non linearity, tomography nor pooling, hence writing its output directly as $X^{\ell+1}$. Regarding derivatives, the quantum convolution product is equivalent to the classical one. Gradient relations (\ref{updateF}) and (\ref{updateX}) remain the same. Note that the $\epsilon$-approximation from Section \ref{registerencoding} doesn't participate in gradient considerations.

The gradient relations being the same, we still have to specify the quantum algorithm that implements the backpropagation and outputs classical description of $ \frac{\partial \mathcal{L}}{\partial X^{\ell}}$ and $\frac{\partial \mathcal{L}}{\partial F^{\ell}}$.
 We have seen that the two main calculations (\ref{updateF}) and (\ref{updateX}) are in fact matrix-matrix multiplications both involving $\frac{\partial \mathcal{L}}{\partial Y^{\ell+1}}$, the reshaped form of $\frac{\partial \mathcal{L}}{\partial X^{\ell+1}}$. For each, the classical running time is $O(H^{\ell+1}W^{\ell+1}D^{\ell+1}HWD^{\ell})$. 
We know from Theorem \ref{QuantumLinearAlgebra} and Theorem \ref{thm:tom} a quantum algorithm to perform efficiently a matrix-vector multiplication and return a classical state with $\ell_{\infty}$ norm guarantees. For a matrix $V$ and a vector $b$, both accessible from the QRAM, the running time to perform this operation is 
\begin{equation}
O\left(\frac{\mu(V) \kappa(V) \log{1/\delta}}{\delta^{2}}\right)
\end{equation}
where $\kappa(V)$ is the condition number of the matrix and $\mu(V)$ is a matrix parameter defined in Equation (\ref{mudefinition}). Precision parameter $\delta > 0$ is the error committed in the approximation for both Theorems \ref{QuantumLinearAlgebra} and \ref{thm:tom}.

We can therefore apply theses theorems to perform matrix-matrix multiplications, by simply decomposing them in several matrix-vector multiplications. For instance, in Equation (\ref{updateF}), the matrix could be $ (A^{\ell})^{T}$ and the different vectors would be each column of $\frac{\partial L}{\partial Y^{\ell+1}}$. The global running \cite{CGJ18} time to perform quantumly Equation (\ref{updateF}) is obtained by replacing $\mu(V)$ by $\mu(\frac{\partial \mathcal{L}}{\partial Y^{\ell+1}})+\mu(A^{\ell})$ and $\kappa(V)$ by $\kappa((A^{\ell})^T\cdot \frac{\partial \mathcal{L}}{\partial Y^{\ell+1}})$. Likewise, for Equation (\ref{updateX}), we have $\mu(\frac{\partial \mathcal{L}}{\partial Y^{\ell+1}})+\mu(F^{\ell})$ and $\kappa(\frac{\partial \mathcal{L}}{\partial Y^{\ell+1}}\cdot (F^{\ell})^T)$.

Note that the dimension of the matrix doesn't appear in the running time since we tolerate a $\ell_{\infty}$ norm guarantee for the error, instead of a $\ell_{2}$ guarantee (see Section \ref{proof:tom} for details). The reason why $\ell_\infty$ tomography is the right approximation here is because the result of these linear algebra operations are rows of the gradient matrices, that are not vectors in an euclidean space, but a series of numbers for which we want to be $\delta$-close to the exact values. See next section for more details. 

It is a open question to see if one can apply the same sub-sampling technique as in the forward pass (Section \ref{singlequantumlayer}) and sample only the highest derivatives of $\frac{\partial \mathcal{L}}{\partial X^{\ell}}$, to reduce the computation cost while maintaining a good optimization.

We then have to understand which elements of $\frac{\partial \mathcal{L}}{\partial X^{\ell+1}}$ must be set to zero to take into account the effects the non linearity, tomography and pooling.

\subsubsection{Quantum Non Linearity and Tomography}
To include the impact of the non linearity, one could apply the same rule as in (\ref{relubackpropagation}), and simply replace ReLu by capReLu. After the non linearity, we obtain $f(\overline{X}^{\ell+1})$, and the gradient relation would be given by 

\begin{equation}\label{fakecaprelubackpropagation}
\left[\frac{\partial \mathcal{L}}{\partial \overline{X}^{\ell+1}}\right]_{i^{\ell+1},j^{\ell+1},d^{\ell+1}} = 
\begin{cases}
    \left[\frac{\partial \mathcal{L}}{\partial f(\overline{X}^{\ell+1})}\right]_{i^{\ell+1},j^{\ell+1},d^{\ell+1}} \text{ if }  0\leq \overline{X}^{\ell+1}_{i^{\ell+1},j^{\ell+1},d^{\ell+1}} \leq C\\
    0 \text{ otherwise}\\
\end{cases}
\end{equation}

If an element of $\overline{X}^{\ell+1}$ was negative or bigger than the cap $C$, its derivative should be zero during the backpropagation. However, this operation was performed in quantum superposition. In the quantum algorithm, one cannot record at which positions $(i^{\ell+1},j^{\ell+1},d^{\ell+1})$ the activation function was selective or not. The gradient relation (\ref{fakecaprelubackpropagation}) cannot be implemented \emph{a posteriori}.

We provide a partial solution to this problem, using the fact that quantum tomography must also be taken into account for some derivatives. Indeed, only the points $(i^{\ell+1},j^{\ell+1},d^{\ell+1})$ that have been sampled should have an impact on the gradient of the loss. Therefore we replace the previous relation by

\begin{equation}\label{approximationnonlinearitybackpropagation}
\left[\frac{\partial \mathcal{L}}{\partial \overline{X}^{\ell+1}}\right]_{i^{\ell+1},j^{\ell+1},d^{\ell+1}} = 
\begin{cases}
    \left[\frac{\partial \mathcal{L}}{\partial \mathcal{X}^{\ell+1}}\right]_{i^{\ell+1},j^{\ell+1},d^{\ell+1}} \text{ if $(i^{\ell+1},j^{\ell+1},d^{\ell+1})$ was sampled }\\
    0 \text{ otherwise}\\
\end{cases}
\end{equation}

Nonetheless, we can argue that this approximation will be tolerable:

In the first case where $\overline{X}^{\ell+1}_{i^{\ell+1},j^{\ell+1},d^{\ell+1}} < 0$, the derivatives can not be set to zero as they should. But in practice, their values will be zero after the activation function and such points would not have a chance to be sampled. In conclusion their derivatives would be zero as required. 

In the other case where $\overline{X}^{\ell+1}_{i^{\ell+1},j^{\ell+1},d^{\ell+1}} > C$, the derivatives can not be set to zero as well but the points have a high probability of being sampled. Therefore their derivative will remain unchanged, as if we were using a ReLu instead of a capReLu. However in cases where the cap $C$ is high enough, this shouldn't be a source of disadvantage in practice.

\subsubsection{Quantum Pooling}
From relation (\ref{approximationnonlinearitybackpropagation}), we can take into account the impact of quantum pooling (see Section \ref{pooling}) on the derivatives. This case is easier since one can record the selected positions during the QRAM update. Therefore, applying the backpropagation is similar to the classical setting with Equation (\ref{poolingbackpropagation}).

\begin{equation}\label{quantumpoolingbackpropagation}
\left[\frac{\partial \mathcal{L}}{\partial \mathcal{X}^{\ell+1}}\right]_{i^{\ell+1},j^{\ell+1},d^{\ell+1}} = 
\begin{cases}
    \left[\frac{\partial \mathcal{L}}{\partial \tilde{\mathcal{X}}^{\ell+1}}\right]_{\tilde{i}^{\ell+1},\tilde{j}^{\ell+1},\tilde{d}^{\ell+1}} \text{ if $(i^{\ell+1},j^{\ell+1},d^{\ell+1})$ was selected during pooling}\\
    0 \text{ otherwise}\\
\end{cases}
\end{equation}

Note that we know $\frac{\partial \mathcal{L}}{\partial \tilde{\mathcal{X}}^{\ell+1}}$ as it is equal to $\frac{\partial \mathcal{L}}{\partial X^{\ell+1}}$, the gradient with respect to the input of layer $\ell+1$, known by assumption and stored in the QRAM.

\subsection{Conclusion and Running Time}

In conclusion, given $\frac{\partial \mathcal{L}}{\partial Y^{\ell+1}}$ in the QRAM, the quantum backpropagation first consists in applying the relations (\ref{quantumpoolingbackpropagation}) followed by (\ref{approximationnonlinearitybackpropagation}). The effective gradient now take into account non linearity, tomography and pooling that occurred during layer $\ell$. We can know use apply the quantum algorithm for matrix-matrix multiplication that implements relations (\ref{updateX}) and (\ref{updateF}).

Note that the steps in Algorithm \ref{QBackpropagation} could also be reversed: during backpropagation of layer $\ell+1$, when storing values for each elements of $\frac{\partial \mathcal{L}}{\partial Y^{\ell+1}}$ in the QRAM, one can already take into account (\ref{quantumpoolingbackpropagation}) and (\ref{approximationnonlinearitybackpropagation}) of layer $\ell$. In this case we directly store $\frac{\partial \mathcal{L}}{\partial \overline{X}^{\ell+1}}$, at no supplementary cost.

Therefore, the running time of the quantum backpropagation for one layer $\ell$, given as Algorithm \ref{QBackpropagation}, corresponds to the sum of the running times of the circuits for implementing relations (\ref{updateF}) and (\ref{updateX}). We finally obtain
\begin{equation}
O\left(\left(\left(\mu(A^{\ell})+\mu(\frac{\partial \mathcal{L}}{\partial Y^{\ell+1}})\right)\kappa((A^{\ell})^T \cdot \frac{\partial \mathcal{L}}{\partial Y^{\ell+1}})+\left(\mu(\frac{\partial \mathcal{L}}{\partial Y^{\ell+1}})+\mu(F^{\ell})\right)\kappa(\frac{\partial \mathcal{L}}{\partial Y^{\ell+1}}\cdot (F^{\ell})^T)\right) \frac{\log{1/\delta}}{\delta^{2}}\right), 
\end{equation}
which can be rewritten as
\begin{equation}
O\left(\left(\left(\mu(A^{\ell})+\mu(\frac{\partial \mathcal{L}}{\partial Y^{\ell+1}})\right)\kappa(\frac{\partial \mathcal{L}}{\partial F^{\ell}})+\left(\mu(\frac{\partial \mathcal{L}}{\partial Y^{\ell+1}})+\mu(F^{\ell})\right)\kappa(\frac{\partial \mathcal{L}}{\partial Y^{\ell}})\right) \frac{\log{1/\delta}}{\delta^{2}}\right). 
\end{equation}
Besides storing $\frac{\partial \mathcal{L}}{\partial X^{\ell}}$, the main output is a classical description of $\frac{\partial \mathcal{L}}{\partial F^{\ell}}$, necessary to perform gradient descent of the parameters of $F^{\ell}$.

\subsection{Gradient Descent and Classical equivalence}\label{gradientdescenterror}
In this part we will see the impact of the quantum backpropagation compared to the classical case, which can be reduced to a simple noise addition during the gradient descent. Recall that gradient descent, in our case, would consist in applying the following update rule

\begin{equation}
F^{\ell} \leftarrow F^{\ell} -\lambda \frac{\partial \mathcal{L}}{\partial F^{\ell}}
\end{equation}
with the learning rate $\lambda$.

Let's note $x = \frac{\partial \mathcal{L}}{\partial F^{\ell}}$ and its elements $x_{s,q} = \frac{\partial \mathcal{L}}{\partial F^{\ell}_{s,q}}$. From the first result of Theorem \ref{QuantumLinearAlgebra} with error $\delta<0$, and the tomography procedure from Theorem \ref{thm:tom}, with same error $\delta$, we can obtain a classical description of $\frac{\overline{x}}{\norm{\overline{x}}_2}$ with $\ell_{\infty}$ norm guarantee, such that:
$$
\norm{\frac{\overline{x}}{\norm{\overline{x}}_2} - \frac{x}{\norm{x}_2}}_{\infty} \leq \delta
$$ 
in time $\widetilde{O}(\frac{\kappa(V)\mu(V)\log(\delta)}{\delta^2})$, where we note $V$ is the matrix stored in the QRAM that allows to obtain $x$, as explained in Section \ref{quantumalgobackprop}. The $\ell_{\infty}$ norm tomography is used so that the error $\delta$ is at most the same for each component
$$
\forall (s,q), \left | \frac{\overline{x_{s,q}}}{\norm{\overline{x}}_2} - \frac{x_{s,q}}{\norm{x}_2} \right | \leq \delta
$$ 
From the second result of the Theorem \ref{QuantumLinearAlgebra} we can also obtain an estimate $\norm{\overline{x}}_2$ of the norm, for the same error $\delta$, such that
$$
|\norm{\overline{x}}_2 - \norm{x}_2| \leq \delta \norm{x}_2
$$ 
in time $\widetilde{O}(\frac{\kappa(V)\mu(V)}{\delta}\log(\delta))$ (which does not affect the overall asymptotic running time). Using both results we can obtain an unnormalized state close to $x$ such that, by the triangular inequality 
$$
\norm{\overline{x}-x}_{\infty} = \norm{\frac{\overline{x}}{\norm{\overline{x}}_2}\norm{\overline{x}}_2 - \frac{x}{\norm{x}_2}\norm{x}_2}_{\infty}
$$
$$
\leq 
\norm{\frac{\overline{x}}{\norm{\overline{x}}_2}\norm{\overline{x}}_2 - \frac{\overline{x}}{\norm{\overline{x}}_2}\norm{x}_2}_{\infty} + 
\norm{\frac{\overline{x}}{\norm{\overline{x}}_2}\norm{x}_2 - \frac{x}{\norm{x}_2}\norm{x}_2}_{\infty}
$$
$$
\leq 
1 \cdot  |\norm{\overline{x}}_2 - \norm{x}_2| +
\norm{x}_2 \cdot \norm{\frac{\overline{x}}{\norm{\overline{x}}_2} - \frac{x}{\norm{x}_2}}_{\infty}
$$
$$
\leq \delta \norm{x}_2  + \norm{\overline{x}}_2 \delta
\leq 2\delta\norm{x}_2
$$
in time $\widetilde{O}(\frac{\kappa(V)\mu(V)\log(\delta)}{\delta^2})$. In conclusion, with $\ell_{\infty}$ norm guarantee, having also access to the norm of the result is costless.

Finally, the noisy gradient descent update rule, expressed as $F^{\ell}_{s,q} \gets F^{\ell}_{s,q} - \lambda \overline{\frac{\partial \mathcal{L}}{\partial F^{\ell}_{s,q}}}$ can written in the worst case with
\begin{equation}
\overline{\frac{\partial \mathcal{L}}{\partial F^{\ell}_{s,q}}} =  
\frac{\partial \mathcal{L}}{\partial F^{\ell}_{s,q}} \pm 2\delta\norm{\frac{\partial \mathcal{L}}{\partial F^{\ell}}}_{2}
\end{equation}

To summarize, using the quantum linear algebra from Theorem \ref{QuantumLinearAlgebra} with $\ell_{\infty}$ norm tomography from Theorem \ref{thm:tom}, both with error $\delta$, along with norm estimation with relative error $\delta$ too, we can obtain classically the unnormalized values $\overline{\frac{\partial \mathcal{L}}{\partial F^{\ell}}}$ such that $\norm{\overline{\frac{\partial \mathcal{L}}{\partial F^{\ell}}}-\frac{\partial \mathcal{L}}{\partial F^{\ell}}}_{\infty} \leq 2\delta\norm{\frac{\partial \mathcal{L}}{\partial F^{\ell}}}_{2}$ or equivalently
\begin{equation}
\forall (s,q), \left|\overline{\frac{\partial \mathcal{L}}{\partial F^{\ell}_{s,q}}}-\frac{\partial \mathcal{L}}{\partial F^{\ell}_{s,q}}\right|  \leq  2\delta\norm{\frac{\partial \mathcal{L}}{\partial F^{\ell}}}_{2}
\end{equation}
Therefore the gradient descent update rule in the quantum case becomes $F^{\ell}_{s,q} \gets F^{\ell}_{s,q} - \lambda \overline{\frac{\partial \mathcal{L}}{\partial F^{\ell}_{s,q}}}$, which in the worst case becomes
\begin{equation}\label{gradientdescent}
F^{\ell}_{s,q} \gets F^{\ell}_{s,q} - 
\lambda \left(\frac{\partial \mathcal{L}}{\partial F^{\ell}_{s,q}} 
\pm 2\delta\norm{\frac{\partial \mathcal{L}}{\partial F^{\ell}}}_{2} \right)
\end{equation}

This proves the Theorem \ref{theorem2}. This update rule can be simulated by the addition of a random relative noise given as a gaussian centered on 0, with standard deviation equal to $\delta$. This is how we will simulate quantum backpropagation in the next Section.

Compared to the classical update rule, this corresponds to the addition of noise during the optimization step. This noise decreases as $\norm{\frac{\partial \mathcal{L}}{\partial F^{\ell}}}_{2}$, which is expected to happen while converging. Recall that the gradient descent is already a stochastic process. Therefore, we expect that such noise, with acceptable values of $\delta$, will not disturb the convergence of the gradient, as the following numerical simulations tend to confirm.

\section{Numerical Simulations}\label{NumericalSimulations}
As described above, the adaptation of the CNNs to the quantum setting implies some modifications that could alter the efficiency of the learning or classifying phases. We now present some experiments to show that such modified CNNs can converge correctly, as the original ones. 

The experiment, using the PyTorch library \cite{paszke2017automatic}, consists of training classically a small convolutional neural network for which we have added a ``quantum" sampling after each convolution, as in Section \ref{qramupdate}. Instead of parametrizing it with the precision $\eta$, we have choosed to use the sampling ratio $\sigma$ that represents the number of samples drawn during tomography. This two definitions are equivalent, as shown in Section \ref{tomographyconvolution}, but the second one is more intuitive regarding the running time and the simulations. 

We also add a noise simulating the amplitude estimation (Section \ref{registerencoding}, parameter: $\epsilon$), followed by a capReLu instead of the usual ReLu (Section \ref{capReLu}, parameter: $C$), and a noise during the backpropagation (Section \ref{gradientdescenterror}, parameter: $\delta$). In the following results, we observe that our quantum CNN is able to learn and classify visual data from the widely used MNIST dataset. This dataset is made of 60.000 training images and 10.000 testing images of handwritten digits. Each image is a 28x28 grayscale pixels between 0 and 255 (8 bits encoding), before normalization.

Let's first observe the ``quantum" effects on an image of the dataset. In particular, the effect of the capped non linearity, the introduction of noise and the quantum sampling.

\begin{figure}[H]
\minipage{0.25\textwidth}
  \includegraphics[width=\linewidth]{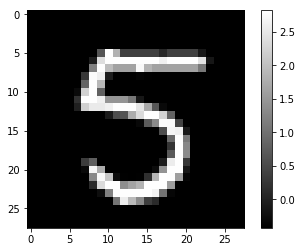}
  \includegraphics[width=\linewidth]{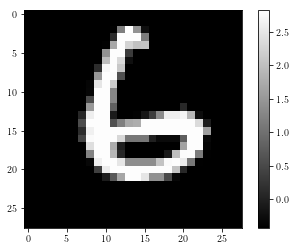}
\endminipage\hfill
\minipage{0.25\textwidth}
  \includegraphics[width=\linewidth]{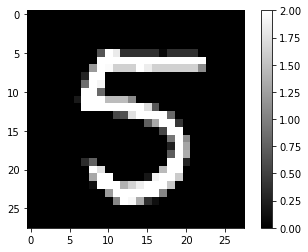}
  \includegraphics[width=\linewidth]{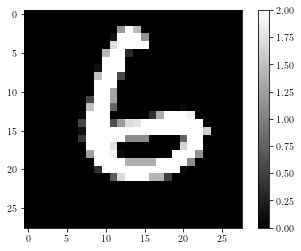}
\endminipage\hfill
\minipage{0.25\textwidth}
  \includegraphics[width=\linewidth]{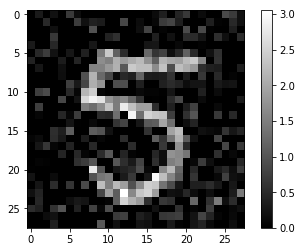}
  \includegraphics[width=\linewidth]{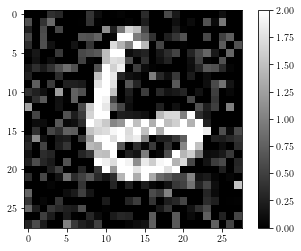}
\endminipage
\minipage{0.25\textwidth}
  \includegraphics[width=\linewidth]{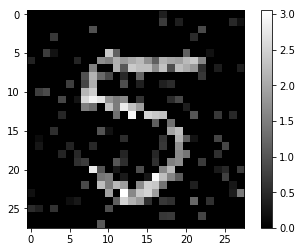}
  \includegraphics[width=\linewidth]{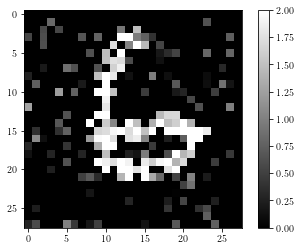}
\endminipage
\caption{Effects of the QCNN on a 28x28 input image. From left to right: original image, image after applying a capReLu activation function with a cap $C$ at $2.0$, introduction of a strong noise during amplitude estimation with $\epsilon=0.5$, quantum sampling with ratio $\sigma=0.4$ that samples the highest values in priority. The useful information tends to be conserved in this example. The side gray scale indicates the value of each pixel. Note that during the QCNN layer, a convolution is supposed to happen before the last image but we chose not to perform it for better visualization.}
\label{fig:quantumeffects}
\end{figure}

We now present the full simulation of our quantum CNN. In the following, we use a simple network made of 2 convolution layers, and compare our quantum CNN to the classical one. The first and second layers are respectively made of 5 and 10 kernels, both of size 7x7. A three-layer fully connected network is applied at the end and a softmax activation function is applied on the last layer to detect the predicted outcome over 10 classes (the ten possible digits). Note that we didn't introduce pooling, being equivalent between quantum and classical algorithms and not improving the results on our CNN. The objective of the learning phase is to minimize the loss function, defined by the negative log likelihood of the classification on the training set. The optimizer used was a built-in Stochastic Gradient Descent. 

Using PyTorch, we have been able to implement the following quantum effects (the first three points are shown in Figure \ref{fig:quantumeffects}):
\begin{itemize}
\item The addition of a noise, to simulate the approximation of amplitude estimation during the forward quantum convolution layer, by adding gaussian noise centered on 0 and with standard deviation $2M\epsilon$, with $M = \max_{p,q}{\norm{A_{p}}\norm{F_{q}}}$, as given by Equation (\ref{errorAEfinal}). 
\item A modification of the non linearity: a ReLu function that becomes constant above the value $T$ (the cap).
\item A sampling procedure to apply on a tensor with a probability distribution proportional to the tensor itself, reproducing the quantum sampling with ratio $\sigma$.
\item The addition of a noise during the gradient descent, to simulate the quantum backpropagation, by adding a gaussian noise centered on 0 with standard deviation $\delta$, multiplied by the norm of the gradient, as given by Equation (\ref{gradientdescent}).\\
\end{itemize}

The CNN used for this simulation may seem ``small" compared to the standards AlexNet \cite{krizhevsky2012imagenet} or VGG-16 \cite{simonyan2014very}, or those used in industry. However simulating this small QCNN on a classical computer was already very computationally intensive and time consuming, due to the``quantum" sampling task, apparently not optimized for a classical implementation in PyTorch. Every single training curve showed in Figure \ref{experiment_1_2} could last for 4 to 8 hours. Hence adding more convolutional layers wasn't convenient. Similarly, we didn't compute the loss on the whole testing set (10.000 images) during the the training to plot the testing curve. However we have computed the test losses and accuracies once the model trained (see Table \ref{testset_metrics_1}), in order to detect potential overfitting cases.

We now present the result of the training phase for a quantum version of this CNN, where partial quantum sampling is applied, for different sampling ratio (number of samples taken from the resulting convolution). Since the quantum sampling gives more probability to observe high value pixels, we expect to be able to learn correctly even with small ratio ($\sigma \leq 0.5$). We compare these training curve to the classical one. The learning has been done on two epochs, meaning that the whole dataset is used twice. The following plots show the evolution of the loss $\mathcal{L}$ during the iterations on batches. This is the standard indicator of the good convergence of a neural network learning phase. We can compare the evolution of the loss between a classical CNN and our QCNN for different parameters.

\begin{figure}[H]
\centering
\includegraphics[width=120mm] {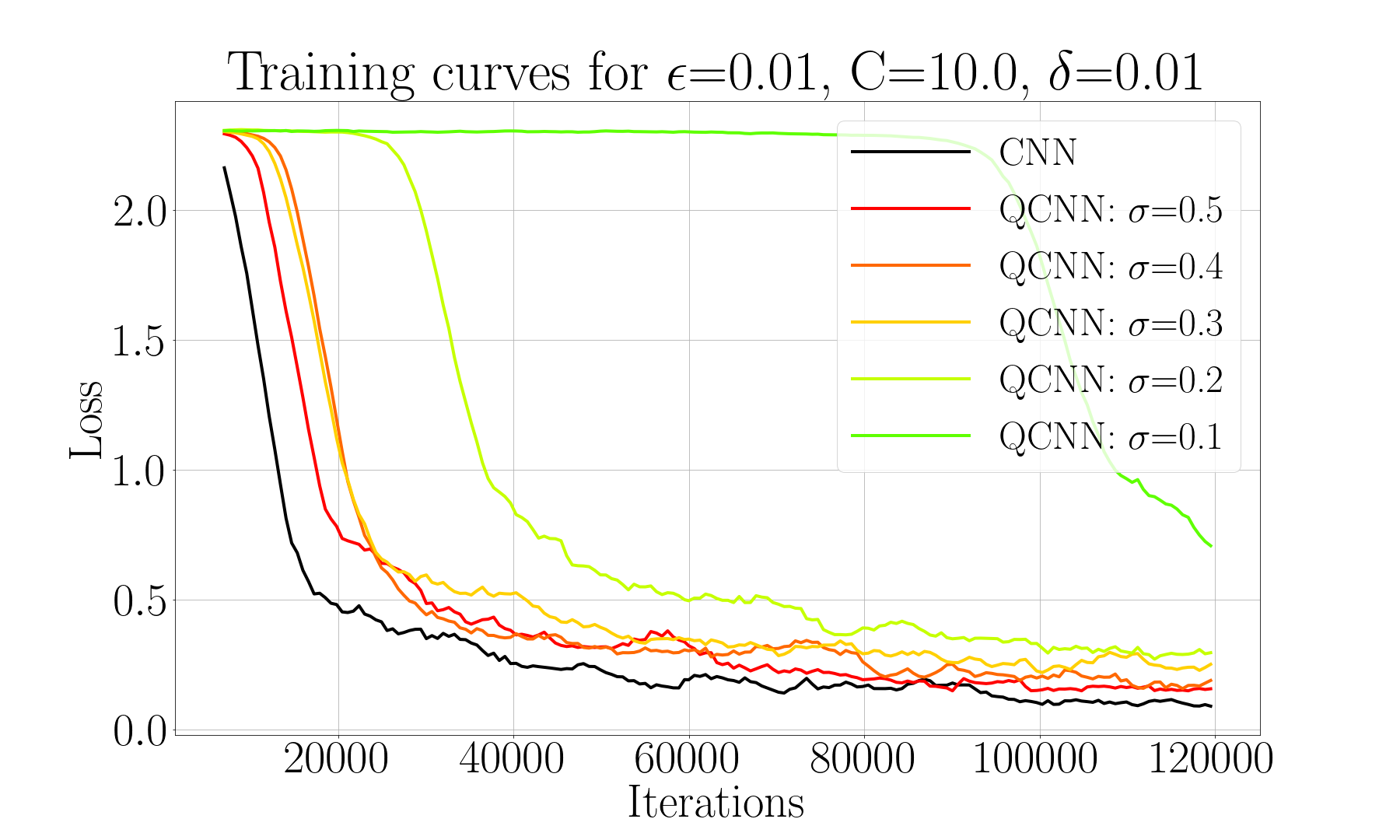} 
\captionsetup{justification=raggedright, margin=1cm}
\caption{Training curves comparison between the classical CNN and the Quantum CNN (QCNN) for $\epsilon=0.01$, $C=10$, $\delta=0.01$ and the sampling ratio $\sigma$ from $0.1$ to $0.5$. We can observe a learning phase similar to the classical one, even for a weak sampling of 20\% or 30\% of each convolution output, which tends to show that the meaningful information is distributed only at certain location of the images, coherently with the purpose of the convolution layer. Even for a very low sampling ratio of 10\%, we observe a convergence despite a late start.}
\label{Training curves }
\end{figure}

\begin{figure}[H]
\begin{multicols}{2}
    \includegraphics[width=\linewidth]{training_epsilon001_T10_delta001.png}\par
    \includegraphics[width=\linewidth]{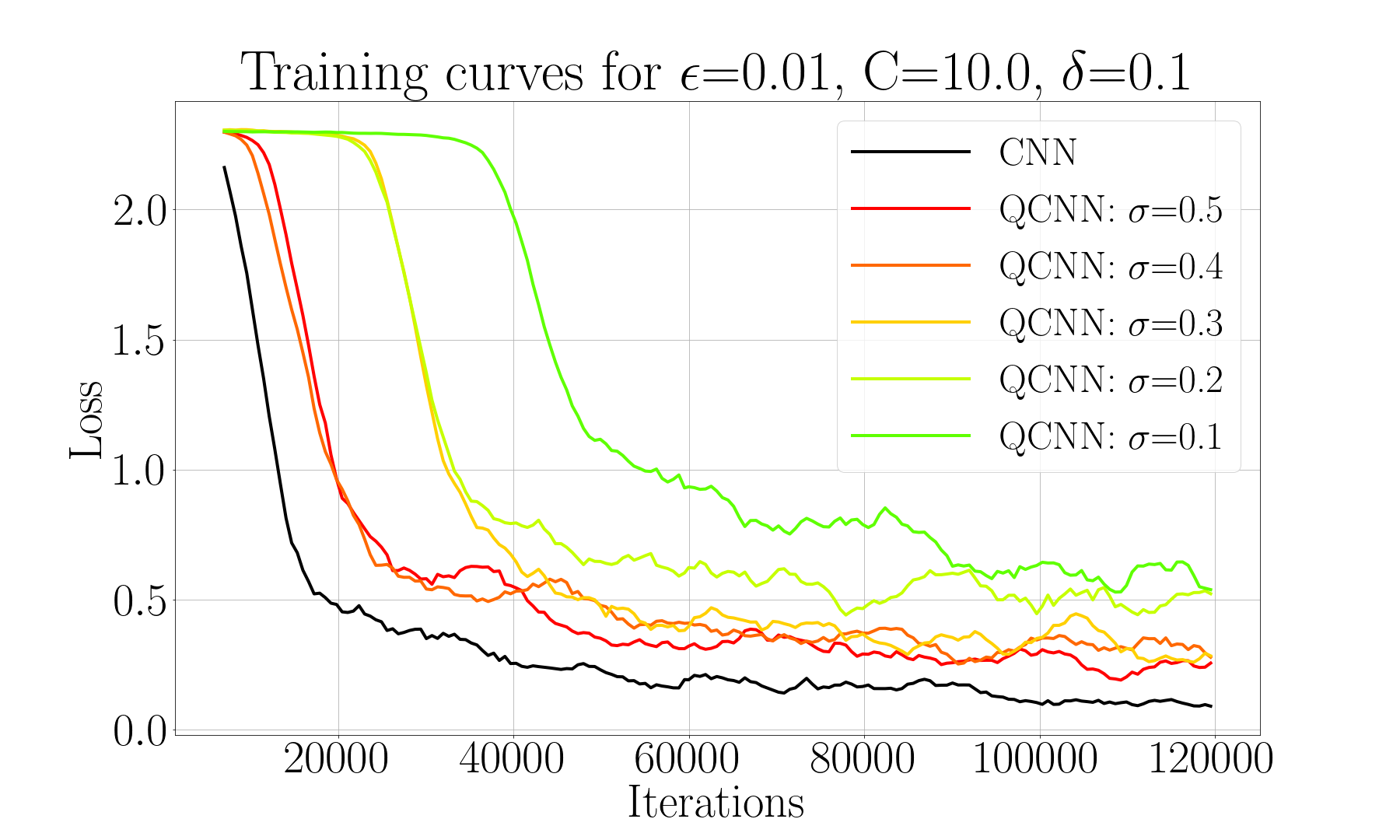}\par 
\end{multicols}
\begin{multicols}{2}
    \includegraphics[width=\linewidth]{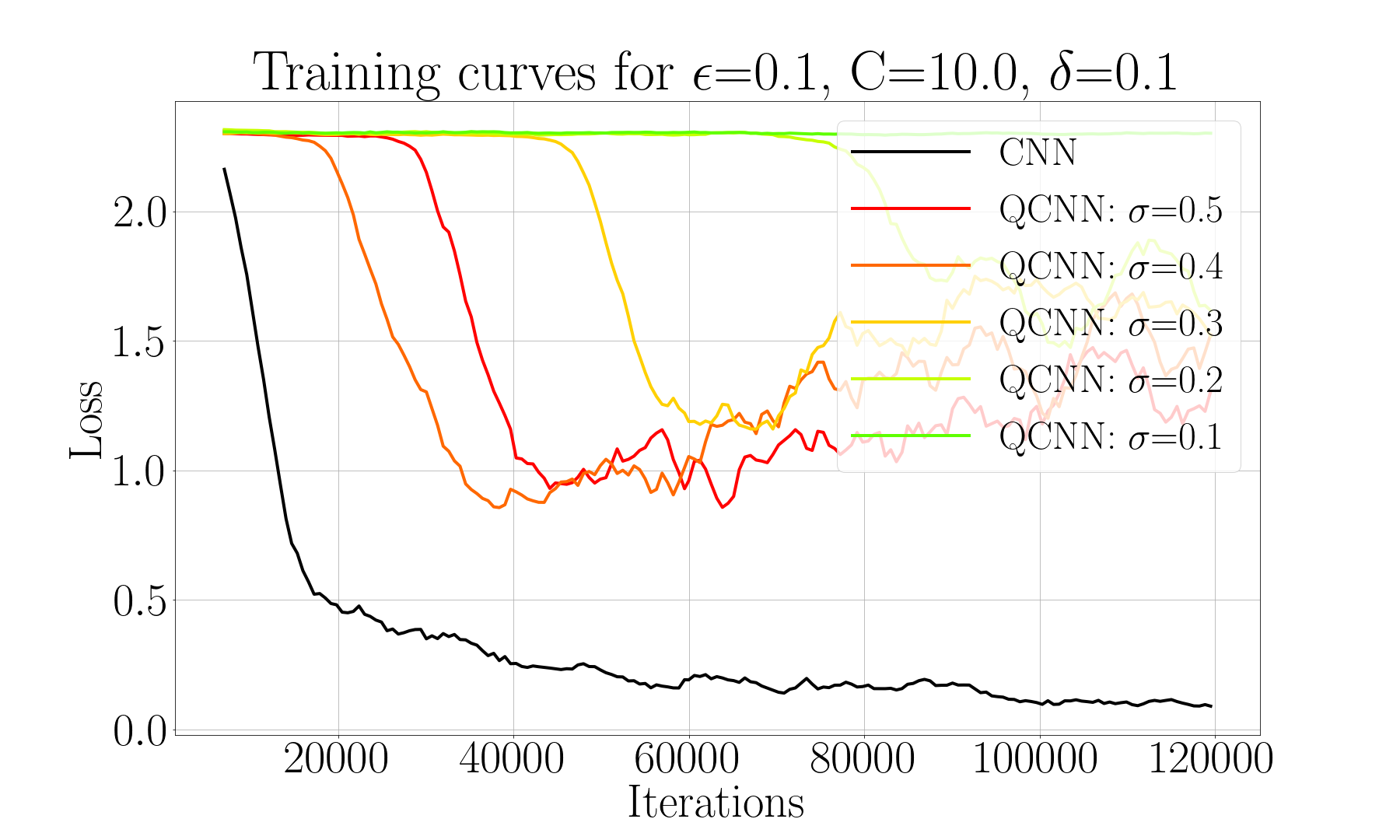}\par
    \includegraphics[width=\linewidth]{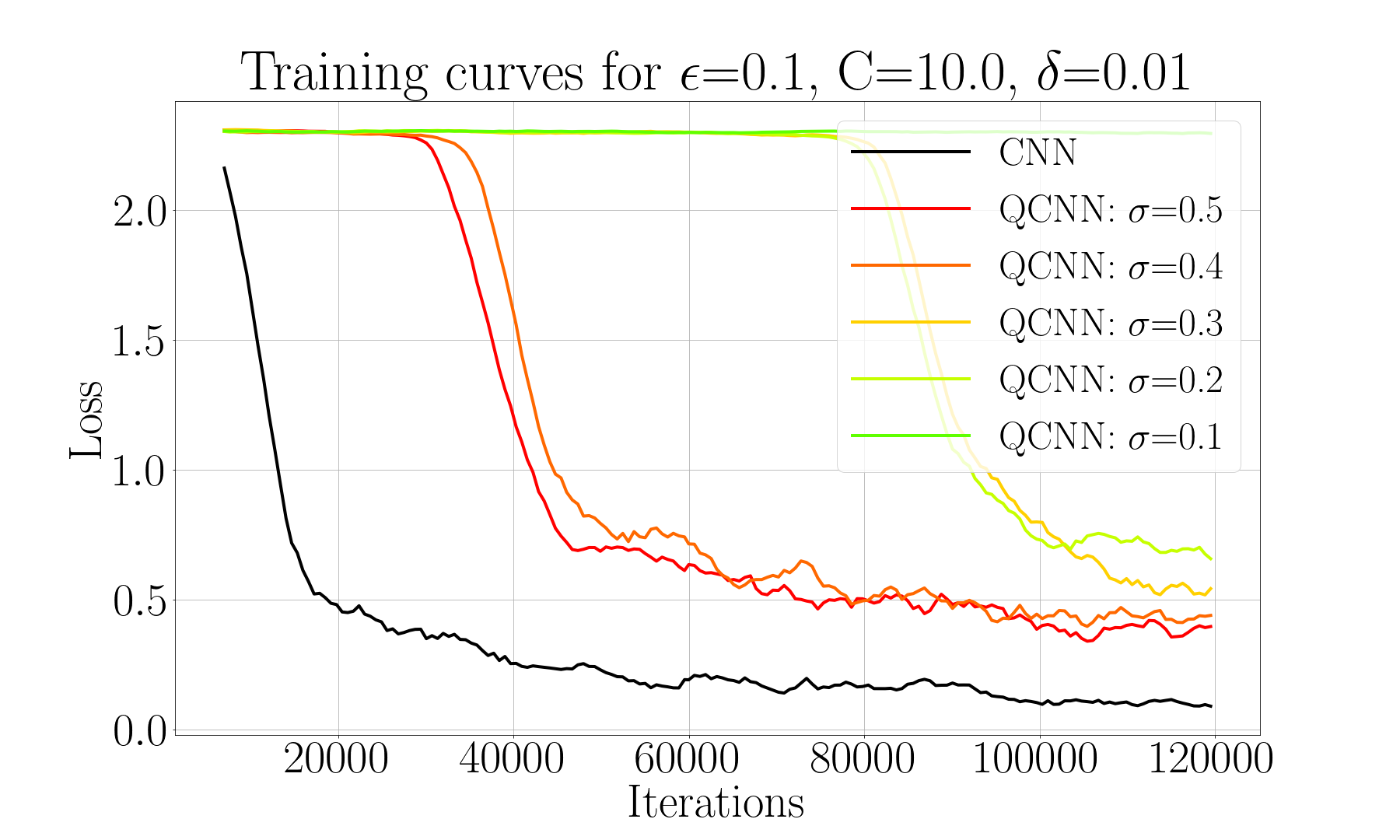}\par 
\end{multicols}
\begin{multicols}{2}
    \includegraphics[width=\linewidth]{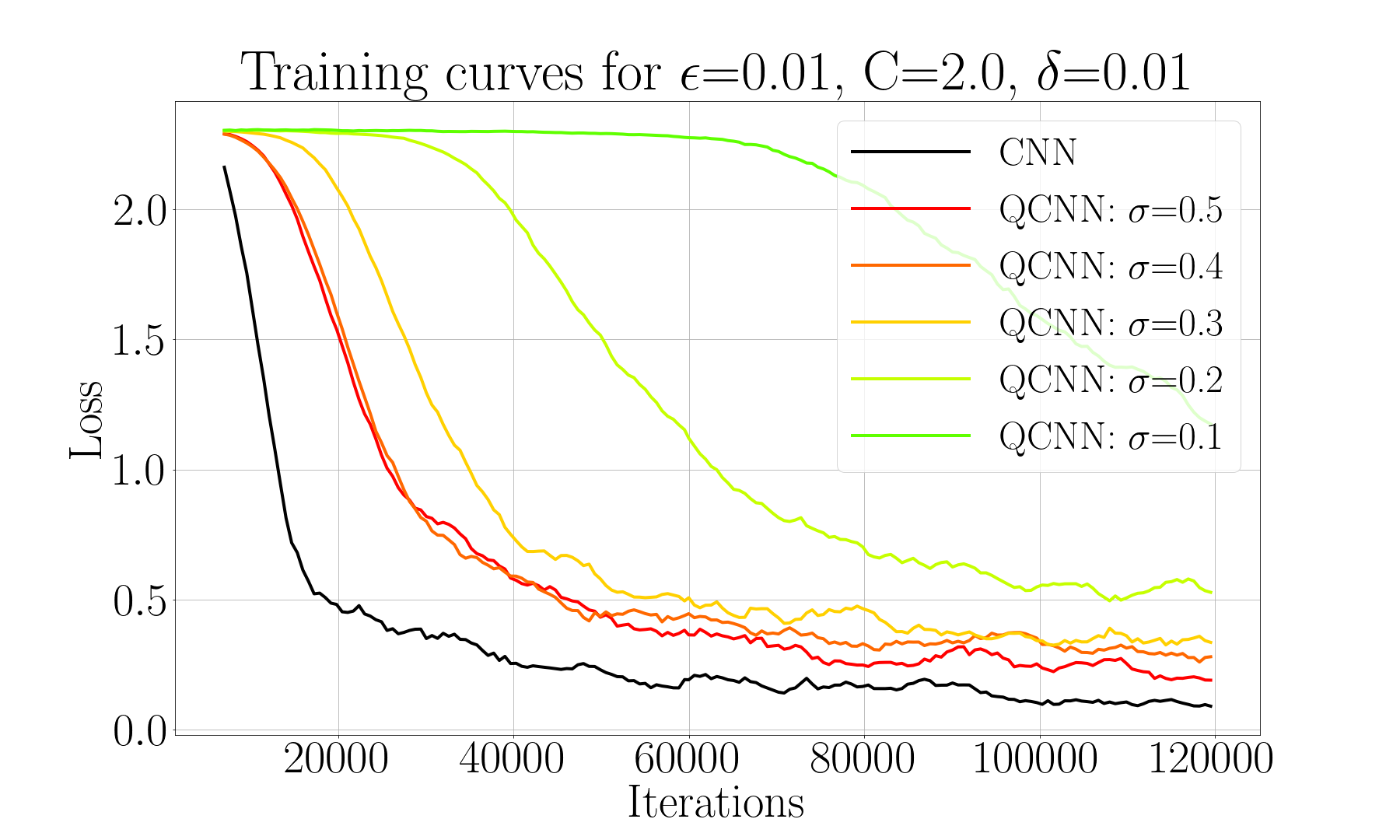}\par 
    \includegraphics[width=\linewidth]{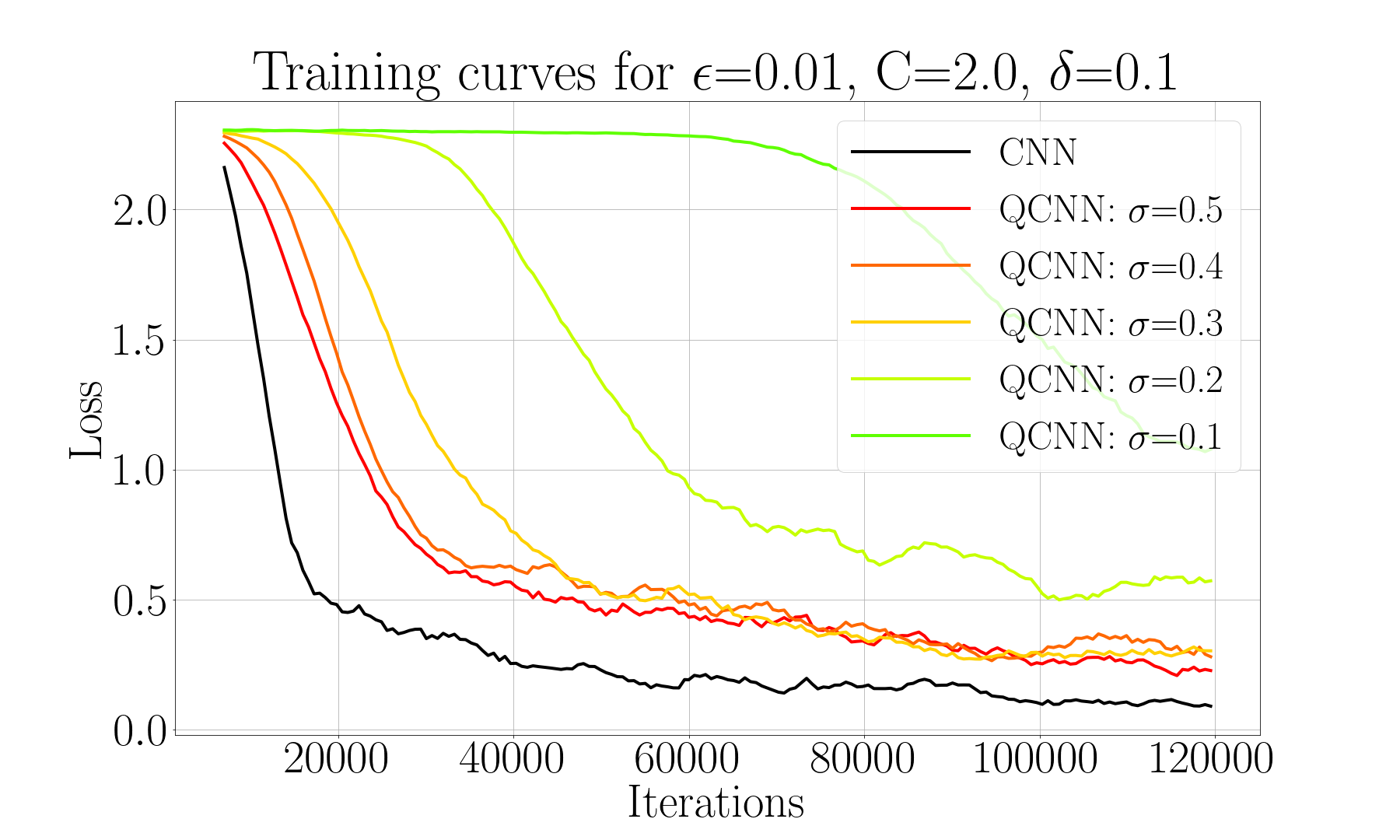}\par 
    \end{multicols}
\caption{Numerical simulations of the training of the QCNN. These training curves represent the evolution of the Loss $\mathcal{L}$ as we iterate through the MNIST dataset. For each graph, the amplitude estimation error $\epsilon$ $(0.1,0.01)$, the non linearity cap $C$ $(2,10)$, and the backpropagation error $\delta$ $(0.1,0.01)$ are fixed whereas the quantum sampling ratio $\sigma$ varies from 0.1 to 0.5. We can compare each training curve to the classical learning (CNN). Note that these training curves are smoothed, over windows of 12 steps, for readability.}
\label{experiment_1_2}
\end{figure}

In the following we report the classification results of the QCNN when applied on the test set (10.000 images). We distinguish to use cases: in Table \ref{testset_metrics_1} the QCNN has been trained quantumly as described in this paper, whereas in Table \ref{testset_metrics_2} we first have trained the classical CNN, then transferred the weights to the QCNN only for the classification. This second use case has a global running time worst than the first one, but we see it as another concrete application: quantum machine learning could be used only for faster classification from a classically generated model, which could be the case for high rate classification task (e.g. for autonomous systems, classification over many simultaneous inputs). We report the test loss and accuracy for different values of the sampling ratio $\sigma$, the amplitude estimation error $\epsilon$, and for the backpropagation noise $\delta$ in the first case. The cap $C$ is fixed at 10. These values must be compared to the classical CNN classification metrics, for which the loss is 0.129 and the accuracy is 96.1\%. Note that we used a relatively small CNN and hence the accuracy is just over $96\%$, lower than the best possible accuracy with larger CNN. \\

\begin{table}[h]
\centering
\begin{tabular}{|c|c|c|c|c|c|}
\hline
\multicolumn{6}{|c|}{QCNN Test - Classification}                                                          \\ \hline
\multirow{2}{*}{$\sigma$} & $\epsilon$ & \multicolumn{2}{c|}{0.01} & \multicolumn{2}{c|}{0.1} \\ \cline{2-6} 
                                  & $\delta$   & 0.01         & 0.1        & 0.01        & 0.1        \\ \hline
\multirow{2}{*}{0.1}              & Loss       & 0.519            & 0.773          & 2.30           & 2.30          \\
                                  & Accuracy   & 82.8\%          & 74.8\%        & 11.5\%         & 11.7\%        \\ \hline
\multirow{2}{*}{0.2}              & Loss       & 0.334            & 0.348          & 0.439           & 1.367          \\
                                  & Accuracy   & 89.5\%          & 89.0\%        & 86.2\%         & 54.1\%        \\ \hline
\multirow{2}{*}{0.3}              & Loss       & 0.213            & 0.314          & 0.381           & 0.762          \\
                                  & Accuracy   & 93.4\%          & 90.3\%        & 87.9\%         & 76.8\%        \\ \hline
\multirow{2}{*}{0.4}              & Loss       & 0.177            & 0.215          & 0.263           & 1.798          \\
                                  & Accuracy   & 94.7\%          & 93.3\%        & 91.8\%         & 34.9\%        \\ \hline
\multirow{2}{*}{0.5}              & Loss       & 0.142            & 0.211          & 0.337           & 1.457          \\
                                  & Accuracy   & 95.4\%          & 93.5\%        & 89.2\%         & 52.8\%        \\ \hline
\end{tabular}
\caption{QCNN trained with quantum backpropagation on MNIST dataset. With $C=10$ fixed.}
\label{testset_metrics_1}
\end{table}

\begin{table}[h]
\centering
\begin{tabular}{|c|c|c|c|}
\hline
\multicolumn{4}{|c|}{QCNN Test - Classification}   \\ \hline
$\sigma$             & $\epsilon$ & 0.01 & 0.1 \\ \hline
\multirow{2}{*}{0.1} & Loss       & 1.07    & 1.33   \\
                     & Accuracy   & 86.1\%  & 78.6\% \\ \hline
\multirow{2}{*}{0.2} & Loss       & 0.552    & 0.840   \\
                     & Accuracy   & 92.8\%  & 86.5\% \\ \hline
\multirow{2}{*}{0.3} & Loss       & 0.391    & 0.706   \\
                     & Accuracy   & 94,3\%  & 85.8\% \\ \hline
\multirow{2}{*}{0.4} & Loss       & 0.327    & 0.670   \\
                     & Accuracy   & 94.4\%  & 84.0\% \\ \hline
\multirow{2}{*}{0.5} & Loss       & 0.163    & 0.292   \\
                     & Accuracy   & 95.9\%  & 93.5\% \\ \hline
\end{tabular}
\caption{QCNN created from a classical CNN trained on MNIST dataset. With $\delta = 0.01$ and $C=10$ fixed.} 
\label{testset_metrics_2}
\end{table}

Our simulations show that the QCNN is able to learn despite the introduction of noise, tensor sampling and other modifications. In particular it shows that only a fraction of the information is meaningful for the neural network, and that the quantum algorithm captures this information in priority. This learning can be more or less efficient depending on the choice of the key parameters. For reasonable values of these parameters, the QCNN is able to converge during the training phase. It can then classify correctly on both training and testing set, indicating that it does not overfit the data.  

We notice that the learning curves sometimes present a late start before the convergence initializes, in particular for small sampling ratio. This late start can be due to the random initialization of the kernel weights, that performs a meaningless convolution, a case where the quantum sampling of the output is of no interest. However it is very interesting to see that despite this late start, the kernel can start converging once they have found a good combination.

Overall, it is possible that the QCNN presents some behaviors that do not have a classical equivalent. Understanding their potential effects, positive or negative, is an open question, all the more so as the effects of the classical CNN's hyperparameters are already a topic an active research \cite{samek2017explainable}. Note also that the size of the neural network used in this simulation is rather small. A following experiment would be to simulate a quantum version of a standard deeper CNN (AlexNet or VGG-16), eventually on more complex dataset, such as CIFAR-10 \cite{krizhevsky2009learning} or Fashion MNIST \cite{xiao2017fashion}.

\section{Conclusions}\label{conclusions}
We have presented a quantum algorithm for evaluating and training convolutional neural networks (CNN). At the core of this algorithm, we have developed the first quantum algorithm for computing a convolution product between two tensors, with a substantial speed up. This technique could be reused in other signal processing tasks that would benefit an enhancement by a quantum computer. Layer by layer, convolutional neural networks process and extract meaningful information. Following this idea of learning foremost important features, we have proposed a new approach of quantum tomography where the most meaningful information is sampled with higher probability, hence reducing the complexity of our algorithm. 

Our Quantum CNN is complete in the sense that almost all classical architectures can be implemented in a quantum fashion: any (non negative an upper bounded) non linearity, pooling, number of layers and size of kernels are available. Our circuit is shallow, indeed one could repeat the main loop many times on the same shallow circuit, since performing the convolution product uses shallow linear algebra techniques, and is similar for all layers. The pooling and non linearity are included in the loop. Our building block approach, layer by layer, allows a high modularity, and can be combined with previous works on quantum feedforward neural network \cite{kerenidis2018neural}. 

The running time presents a speedup compared to the classical algorithm, due to fast linear algebra when computing the convolution product, and by only sampling the important values from the resulting quantum state. This speedup can be highly significant in cases where the number of channels $D^{\ell}$ in the input tensor is high (high dimensional time series, videos sequences, games play) or when the number of kernels $D^{\ell+1}$ is big, allowing deep architectures for CNN, which was the case in the recent breakthrough of DeepMind AlphaGo algorithm \cite{silver2016mastering}. The Quantum CNN also allows larger kernels, that could be used for larger input images, since the size the kernels must be a contant fraction of the input in order to recognize patterns. However, despite our new techniques to reduce the complexity, applying a non linearity and reusing the result of a layer for the next layer make register encoding and state tomography mandatory, hence preventing from having an exponential speedup on the number of input parameters. 

Finally we have presented a backpropagation algorithm that can also be implemented as a quantum circuit. The numerical simulations on a small CNN show that despite the introduction of noise and sampling, the QCNN can efficiently learn to classify visual data from the MNIST dataset, performing a similar accuracy than the classical CNN. 

\paragraph{Acknowledgments} The work has been partially supported by projects ANR quBIC, ANR quData and QuantERA QuantAlgo.

\bibliographystyle{IEEEtran} 
\bibliography{Q-CNN} 

\newpage

\section{Appendix : Algorithm and Proof for $\ell_{\infty}$ norm tomography}\label{proof:tom}
We prove the Theorem \ref{thm:tom} introduced in this paper, and present the algorithm that allows to obtain a tomography on a quantum state with $\ell_{\infty}$ norm guarantee. The algorithm and the proof closely follows the work of \cite{KP18} on the $\ell_2$-norm tomography, but requires exponentially less resources. In the following we consider a quantum state $\ket{x} = \sum_{i \in [d]} x_i \ket{i}$, with $x \in \R^d$ and $\norm{x}=1$.

\begin{algorithm} [h]
\caption{$\ell_{\infty}$ norm tomography} \label{alg:tom}
\begin{algorithmic}[1]
\REQUIRE   Error $\delta > 0$, access to unitary $U : \ket{0} \mapsto \ket{x} = \sum_{i \in [d]}x_i\ket{i}$, the controlled version of $U$, QRAM access. 
\ENSURE Classical vector $\widetilde{X} \in \R^{d}$, such that $\norm{\widetilde{X}}=1$ and $\norm{\widetilde{X}-x}_{\infty} < \delta$.
\vspace{10pt} 
\STATE Measure $N = \frac{36\ln{d}}{\delta^2}$ copies of $\ket{x}$ in the standard basis and count $n_i$, the number of times the outcome $i$ is observed. Store $\sqrt{p_i} = \sqrt{n_i/N}$ in QRAM data structure. 
\STATE Create $N = \frac{36\ln{d}}{\delta^2}$ copies of the state $\frac{1}{\sqrt{2}}\ket{0}\sum_{i \in [d]}x_i\ket{i} + \frac{1}{\sqrt{2}}\ket{1}\sum_{i \in [d]}\sqrt{p_i}\ket{i}$. 
\STATE Apply an Hadamard gate on the first qubit to obtain\\
$$
\ket{\phi} = \frac{1}{2}\sum_{i \in [d]}\left( (x_i + \sqrt{p_i})\ket{0,i} + (x_i - \sqrt{p_i})\ket{1,i}\right)
$$
 \STATE Measure both registers of each copy in the standard basis, and count $n(0,i)$ the number of time the outcome $(0,i)$ is observed. 
 \STATE Set $\sigma(i)=+1$ if $n(0,i) > 0.4Np_i$ and $\sigma(i)=-1$ otherwise. 
 \STATE Output the unit vector $\widetilde{X}$ such that $\forall i \in [N], \widetilde{X}_i = \sigma_i\sqrt{p_i}$

\end{algorithmic}
\end{algorithm}

\noindent The following version of the Chernoff Bound will be used for analysis of algorithm \ref{alg:tom}. 

\begin{theorem} (Chernoff Bound)
Let $X_j$, for $j \in [N]$, be independent random variables such that $X_j \in [0,1]$ and let $X = \sum_{j \in [N]}X_j$. We have the three following inqualities:
\begin{enumerate}
  \item For $0<\beta<1,  \PP[ X < (1-\beta)\E[X]] \leq e^{-\beta^2\E[X]/2}$
  \item For $\beta>0,  \PP[ X > (1+\beta)\E[X]] \leq e^{-\frac{\beta^2}{2+\beta}\E[X]}$
  \item For $0<\beta<1,  \PP[ |X - \E[X]| \geq \beta\E[X]] \leq e^{-\beta^2\E[X]/3}$, by composing $1.$ and $2.$\\
\end{enumerate}
\end{theorem}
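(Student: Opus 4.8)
The plan is to prove all three inequalities by the standard exponential-moment (Chernoff) method: apply Markov's inequality to $e^{tX}$ and exploit independence. The single ingredient feeding all three bounds is a uniform estimate of the moment generating function of each $X_j$. Since $X_j \in [0,1]$ and $y \mapsto e^{ty}$ is convex, I would use $e^{tx} \leq 1 + x(e^t-1)$ for $x \in [0,1]$, take expectations, and then apply $1+y \leq e^y$ to get $\E[e^{tX_j}] \leq e^{\E[X_j](e^t-1)}$. By independence this multiplies up to $\E[e^{tX}] \leq e^{\mu(e^t-1)}$, where $\mu = \E[X]$. This holds for every real $t$ and is the only place the hypotheses $X_j \in [0,1]$ and independence enter.

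For the upper tail (inequality 2) I would fix $t>0$ and apply Markov: $\PP[X > (1+\beta)\mu] = \PP[e^{tX} > e^{t(1+\beta)\mu}] \leq e^{-t(1+\beta)\mu}\,\E[e^{tX}] \leq e^{\mu(e^t - 1 - t(1+\beta))}$. Minimizing the exponent over $t>0$ yields $t = \ln(1+\beta)$ and the classical bound $\big(e^{\beta}/(1+\beta)^{1+\beta}\big)^{\mu}$. The remaining step is purely analytic: I must show $(1+\beta)\ln(1+\beta) - \beta \geq \frac{\beta^2}{2+\beta}$ for $\beta > 0$, which turns the bound into $e^{-\beta^2\mu/(2+\beta)}$. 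For the lower tail (inequality 1) I would run the same argument on $-X$: for $t>0$, $\PP[X < (1-\beta)\mu] \leq e^{t(1-\beta)\mu}\,\E[e^{-tX}] \leq e^{\mu(e^{-t} - 1 + t(1-\beta))}$, optimize at $t = -\ln(1-\beta)>0$, and then invoke the analytic inequality $(1-\beta)\ln(1-\beta) \geq -\beta + \frac{\beta^2}{2}$, valid for $0<\beta<1$, to obtain $e^{-\beta^2\mu/2}$.

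Inequality 3 then follows by composing the first two via the union bound: $\PP[|X-\mu|\geq \beta\mu] \leq \PP[X<(1-\beta)\mu] + \PP[X>(1+\beta)\mu] \leq e^{-\beta^2\mu/2} + e^{-\beta^2\mu/(2+\beta)}$. For $0<\beta<1$ we have $2+\beta < 3$, so $\frac{\beta^2}{2+\beta} > \frac{\beta^2}{3}$, and likewise $\frac{\beta^2}{2} > \frac{\beta^2}{3}$; hence each term is at most $e^{-\beta^2\mu/3}$ and their sum is at most $2\,e^{-\beta^2\mu/3}$, the standard two-sided Chernoff bound (statement 3 records this up to the universal factor $2$).

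The main obstacle is not the probabilistic machinery, which is routine, but the two scalar inequalities $(1+\beta)\ln(1+\beta)-\beta \geq \frac{\beta^2}{2+\beta}$ and $(1-\beta)\ln(1-\beta) \geq -\beta + \frac{\beta^2}{2}$ used to clean up the optimized exponents. I would establish each by setting $g(\beta)$ equal to the difference of the two sides, verifying $g(0)=0$, and proving $g'(\beta) \geq 0$ on the relevant interval. For the lower tail this reduces to a direct Taylor comparison; for the upper tail the derivative condition becomes $\ln(1+\beta) \geq \frac{\beta(4+\beta)}{(2+\beta)^2}$, which again follows from equality at $\beta=0$ together with the monotonicity of a further derivative.
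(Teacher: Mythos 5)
Your proof is correct, and it actually supplies more than the paper does: the paper states this Chernoff bound as a standard tool \emph{without proof}, justifying item 3 only by the phrase ``by composing 1.\ and 2.'' Your exponential-moment derivation---bounding $\E[e^{tX_j}] \leq e^{\E[X_j](e^t-1)}$ via convexity of $x \mapsto e^{tx}$ on $[0,1]$, applying Markov's inequality to $e^{tX}$, optimizing at $t=\ln(1+\beta)$ (resp.\ $t=-\ln(1-\beta)$), and cleaning up with the scalar inequalities $(1+\beta)\ln(1+\beta)-\beta \geq \frac{\beta^2}{2+\beta}$ and $(1-\beta)\ln(1-\beta)\geq -\beta+\frac{\beta^2}{2}$---is the canonical route, and both scalar inequalities do hold by exactly the derivative arguments you sketch: for the upper tail, $g'(\beta)=\ln(1+\beta)-\frac{\beta(4+\beta)}{(2+\beta)^2}$ vanishes at $0$ and $g''(\beta)=\frac{1}{1+\beta}-\frac{8}{(2+\beta)^3}\geq 0$ because $(2+\beta)^3-8(1+\beta)=4\beta+6\beta^2+\beta^3\geq 0$; for the lower tail, the Taylor coefficients of $(1-\beta)\ln(1-\beta)+\beta-\frac{\beta^2}{2}$ are $\frac{1}{n(n-1)}>0$ for $n\geq 3$. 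You also correctly flagged the one genuine blemish in the statement as printed: composing items 1 and 2 via the union bound yields $e^{-\beta^2\E[X]/2}+e^{-\beta^2\E[X]/(2+\beta)} \leq 2e^{-\beta^2\E[X]/3}$, so item 3 is missing the standard factor of $2$. This is harmless for the paper's use of the bound (the tomography analysis in the appendix only needs failure probabilities of order $d^{-c}$, which absorb constant factors), but your version with the factor $2$ is the accurate one.
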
 

\begin{theorem} 
Algorithm \ref{alg:tom} produces an estimate $\widetilde{X} \in \R^{d}$ such that $\norm{\widetilde{X}-x}_{\infty} < (1+\sqrt{2}) \delta$ with probability at least $1 - \frac{1}{ d^{0.83}}$. 
\end{theorem}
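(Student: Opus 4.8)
The plan is to analyze Algorithm~\ref{alg:tom} one coordinate at a time and finish with a union bound over the $d$ coordinates, so the goal reduces to showing that a single coordinate violates the bound with probability at most $\approx d^{-1.83}$ (so that $d$ times this is $d^{-0.83}$). Writing $a_i = |x_i|$, $s_i = \mathrm{sign}(x_i)$ and $\widehat a_i = \sqrt{p_i}$, the output is $\widetilde X_i = \sigma_i \widehat a_i$, and I would first record the basic dichotomy: if $\sigma_i = s_i$ then $|\widetilde X_i - x_i| = |\widehat a_i - a_i|$ (a pure magnitude error), while if $\sigma_i \neq s_i$ then $|\widetilde X_i - x_i| = \widehat a_i + a_i$. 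Thus controlling the $\ell_\infty$ error splits into two tasks: the sampling frequencies $\sqrt{p_i}$ must approximate the magnitudes $a_i$, and the Hadamard-test step must recover the sign $s_i$ whenever getting it wrong would be expensive.

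For the magnitudes I would use that $n_i \sim \mathrm{Bin}(N, a_i^2)$ with $\E[n_i] = N a_i^2$ and $N = 36\ln d/\delta^2$, together with the identity $|\widehat a_i - a_i| = |p_i - a_i^2|/(\widehat a_i + a_i)$. For ``large'' coordinates $a_i \ge \delta$ the denominator is at least $a_i$, so it suffices that $|p_i - a_i^2| \le \delta a_i = (\delta/a_i)\,a_i^2$; applying the third Chernoff inequality with $\beta = \delta/a_i \le 1$ gives failure probability at most $e^{-\delta^2 N/3} = d^{-12}$. For ``small'' coordinates $a_i < \delta$ I only need the weaker statement $\widehat a_i < \sqrt 2\,\delta$, i.e.\ $n_i < 2\delta^2 N = 72\ln d$; since $\E[n_i] < \delta^2 N = 36\ln d$, the second Chernoff inequality again gives failure at most $d^{-12}$. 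Hence with probability $1 - O(d^{-11})$ every magnitude is accurate to $\delta$ on large coordinates and bounded by $\sqrt 2\,\delta$ on small ones; in particular, on a small coordinate even a wrong sign yields error at most $\widehat a_i + a_i < \sqrt 2\,\delta + \delta = (1+\sqrt 2)\delta$, which is exactly why the sign only needs to be certified on coordinates of non-negligible magnitude.

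For the sign I would analyze the state $\ket{\phi}$ of Step~3: measuring $(0,i)$ has probability $q_i = \tfrac14(x_i + \widehat a_i)^2$, so $n(0,i) \sim \mathrm{Bin}(N, q_i)$, with $q_i \approx p_i$ when $s_i = +1$ but $q_i \approx 0$ when $s_i = -1$. The decision rule compares $n(0,i)$ to the threshold $0.4\,N p_i$, which sits strictly between these two regimes. The key observation is that a wrong sign is only harmful when $\widehat a_i + a_i \ge (1+\sqrt 2)\delta$, and combined with the magnitude guarantee $|\widehat a_i - a_i| \le \delta$ this forces both $\widehat a_i \ge \delta/\sqrt 2$ and $a_i \ge \delta/\sqrt 2$, hence $p_i \ge \delta^2/2$ is bounded below. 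In this regime $q_i$ lies on the correct side of $0.4\,p_i$ — above it when $s_i=+1$, below it when $s_i=-1$, at least once $a_i$ is a fixed constant factor above $\delta$ — so the first and second Chernoff inequalities applied to $n(0,i)$, whose mean $N q_i = \Theta(\ln d)$, place it correctly except with probability $d^{-\Omega(1)}$.

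The hard part will be the boundary window $a_i \approx \delta$: there the relevant binomial mean $N q_i$ is only a constant times $\ln d$, so the Chernoff exponent is itself only a constant times $\ln d$, and a crude ``condition on $|\widehat a_i - a_i|\le\delta$ and then worst-case the magnitude error'' argument loses a factor that would push the per-coordinate bound above $d^{-1}$, rendering the union bound vacuous. The resolution I would pursue is a \emph{joint} analysis of the two sampling stages rather than sequential conditioning: a harmful sign flip forces $\widehat a_i$ to deviate from $a_i$ in a prescribed direction by an atypically large amount, an event already exponentially unlikely under $\mathrm{Bin}(N,a_i^2)$, so the magnitude deviation and the sign flip cannot both occur cheaply. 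Carefully optimizing the Chernoff parameters (and the constant $0.4$ in the threshold) over this window is precisely what fixes the tolerated error at $(1+\sqrt 2)\delta$ and the per-coordinate failure at $\approx d^{-1.83}$; summing over the $d$ coordinates then yields the claimed $1 - 1/d^{0.83}$, which completes the proof.
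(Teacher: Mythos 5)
Your setup is sound and matches the paper's: the same dichotomy (correct sign gives a pure magnitude error, wrong sign gives $\widehat a_i + a_i$), the same Chernoff bound with $\beta = \delta/|x_i|$ giving per-coordinate failure $d^{-12}$ for the magnitudes of large coordinates, and the same observation that coordinates with $|x_i| < \delta$ satisfy $\sqrt{p_i} < \sqrt{2}\,\delta$ with probability $1 - d^{-12}$, so that a wrong sign there costs at most $(1+\sqrt 2)\delta$; this is exactly the paper's Case 2. The genuine gap is that you never actually prove sign correctness on the large coordinates, which is the crux of the theorem: you describe the difficulty in the window $|x_i| \approx \delta$, announce that ``a joint analysis of the two sampling stages'' with ``carefully optimized Chernoff parameters'' would resolve it, and stop. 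A proof plan for the hardest step is not a proof of it, and as written the per-coordinate sign-failure bound of $\approx d^{-1.83}$ is asserted, not established.

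Moreover, the obstruction you identify is an artifact of your choice of conditioning event, and the fix is simpler than a joint analysis. You condition on the additive guarantee $|\widehat a_i - a_i| \le \delta$, under which $\widehat a_i$ can exceed $a_i$ by a factor close to $2$ when $a_i \approx \delta$; worst-casing over that event indeed leaves too thin a margin between $\E[n(0,i)] = \frac N4 (a_i + \widehat a_i)^2$ and the threshold $0.4 N p_i$. The paper instead conditions, for each $i$ with $|x_i| > \delta$, on the \emph{relative}-error event $|p_i - x_i^2| \le x_i^2/2$, which by Chernoff (part 3, $\beta = 1/2$) fails with probability at most $e^{-N x_i^2/12} \le d^{-3}$ since $N x_i^2 \ge 36 \ln d$. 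Under this event $\sqrt{2 p_i/3} \le |x_i| \le \sqrt{2 p_i}$, hence (for $x_i > 0$)
\begin{equation*}
\E[n(0,i)] \;=\; \tfrac N4\bigl(x_i + \sqrt{p_i}\bigr)^2 \;\ge\; \tfrac N4\bigl(1 + \sqrt{2/3}\bigr)^2 p_i \;\ge\; 0.82\, N p_i \;\ge\; 14.7 \ln d,
\end{equation*}
so Chernoff (part 1, $\beta = 1/2$) gives $\PP\bigl[n(0,i) \le 0.41\, N p_i\bigr] \le e^{-\E[n(0,i)]/8} \le d^{-1.83}$, and $0.41\,Np_i$ is above the decision threshold $0.4\,N p_i$. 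Sequential conditioning therefore works uniformly over all of $S = \{i : |x_i| > \delta\}$, including your boundary window, with per-coordinate failure $d^{-3} + d^{-1.83}$; the union bound over at most $d$ coordinates then yields the stated $1 - 1/d^{0.83}$. This is the piece your proposal is missing.
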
 
\begin{proof} 
\noindent Proving $\norm{x-\widetilde{X}}_{\infty} \leq O(\delta)$ is equivalent to show that for all $i \in [d]$, we have $|x_i - \widetilde{X}_i| = |x_i - \sigma(i)\sqrt{p_i}| \leq O(\delta)$. Let $S$ be the set of indices defined by $S = \{i \in [d] ; |x_i|>\delta\} $. We will separate the proof for the two cases where $i \in S$ and $i \notin S$.

\paragraph{Case 1 : $i \in S$. \\}
We will show that if  $i \in S$, we correctly have $\sigma(i) = sgn(x_i)$ with high probability. Therefore  we will need to bound $|x_i - \sigma(i)\sqrt{p_i}| = ||x_i| - \sqrt{p_i}|$.

We suppose that $x_i>0$. The value of $\sigma(i)$ correctly determines $sgn(x_i)$ if the number of times we have measured $(0,i)$ at Step 4 is more than half of the measurements, \emph{i.e.} $n(0,i) > \frac{1}{2}\E[n(0,i)]$. If $x_i<0$, the same arguments holds for $n(1,i)$. We consider the random variable that represents the outcome of a measurement on state $\ket{\phi}$. The Chernoff Bound (part 1) with $\beta=1/2$ gives

\begin{equation}\label{eq:exp}
\PP[n(0,i)\leq \frac{1}{2}\E[n(0,i)]]\leq e^{-\E[n(0,i)]/8}
\end{equation}
\noindent From the definition of $\ket{\phi}$ we have $\E[n(0,i)] = \frac{N}{4}(x_i +\sqrt{p_i})^2$. We will lower bound this value with the following argument.

For the $k^{th}$ measurement of $\ket{x}$, with $k \in [N]$, let $X_k$ be a random variable such that $X_k=1$ if the outcome is $i$, and $0$ otherwise. We define $X = \sum_{k \in [N]}X_k$. Note that $X = n_i = N p_i$ and $\E[X] = N x_i^2$. We can apply the Chernoff Bound, part 3 on $X$ for $\beta = 1/2$ to obtain, 

\begin{equation}\label{eq:chernoff3}
\PP[|X - \E[X]| \geq  \E[X]/2] \leq e^{-\E[X]/12}
\end{equation}

$$
\PP[|x_i^2 - p_i| \geq x_i^2/2] \leq e^{-Nx^2_i/12}
$$

We have $N = \frac{36 \ln{d}}{\delta^2}$ and by assumption $x_i^2 > \delta^2$ (since $i \in S$). Therefore,

$$
\PP[|x_i^2 - p_i| \geq x_i^2/2] \leq e^{-36\ln{d}/12} = 1/d^3
$$

This proves that the event $|x_i^2 - p_i| \leq x_i^2/2$ occurs with probability at least $1-\frac{1}{d^3}$ if $i \in S$. This previous inequality is equivalent to $\sqrt{2p_i/3} \leq |x_i| \leq \sqrt{2p_i}$. Thus, with high probability we have 
$\E[n(0,i)] = \frac{N}{4}(x_i +\sqrt{p_i})^2 \geq 0.82 N p_i$, since $\sqrt{2p_i/3} \leq |x_i|$. Moreover, since $|p_i| \leq x_i^2/2$, $\E[n(0,i)] \geq 0.82 Nx_i^2/2 \geq 14.7\ln{d}$. Therefore, equation \eqref{eq:exp} becomes

$$
\PP[n(0,i)\leq 0.41 N p_i]\leq e^{-1.83\ln{d}} = 1/d^{1.83}
$$
We conclude that for $i \in S$, if $n(0,i) > 0.41 N p_i$, the sign of $x_i$ is determined correctly by $\sigma(i)$ with high probability $1- \frac{1}{d^{1.83}}$, as indicated in Step 5. 

We finally show $|x_i - \sigma(i)\sqrt{p_i}| = ||x_i| - \sqrt{p_i}|$ is bounded. Again by the Chernoff Bound (3.) we have, for $0<\beta<1$:

$$
\PP[|x_i^2 - p_i| \geq \beta x_i^2] \leq e^{\beta^2Nx^2_i/3}
$$

By the identity $|x_i^2 - p_i| = (|x_i| - \sqrt{p_i})(|x_i| + \sqrt{p_i}) $ we have

$$
\PP\left[\Big||x_i| - \sqrt{p_i}\Big| \geq \beta \frac{x_i^2}{|x_i| + \sqrt{p_i}}\right] \leq e^{\beta^2Nx^2_i/3}
$$

Since $\sqrt{p_i} > 0$, we have $\beta \frac{x_i^2}{|x_i| + \sqrt{p_i}} \leq \beta \frac{x_i^2}{|x_i|} = \beta |x_i|$, therefore $\PP\left[\Big||x_i| - \sqrt{p_i}\Big| \geq \beta |x_i|\right] \leq e^{\beta^2Nx^2_i/3}$. Finally, by chosing $\beta = \delta/|x_i| < 1$ we have

$$
\PP\left[\Big||x_i| - \sqrt{p_i}\Big| \geq \delta \right] \leq e^{36\ln{d}/3} = 1/d^{12}
$$

We conclude that, if $i \in S$, we have $|x_i - \tilde{X}_i| \leq \delta$ with high probability. 

Since $|S| \leq d$, the probability for this result to be true for all $i \in S$ is $1 - \frac{1}{d^{0.83}}$. This follows from the Union Bound on the correctness of $\sigma(i)$.

\paragraph{Case 2 : $i \notin S$. \\}
If $i \notin S$, we need to separate again in two cases. When the estimated sign is wrong, \emph{i.e.} $\sigma(i) = -sgn(x_i)$, we have to bound $|x_i - \sigma(i)\sqrt{p_i}| = ||x_i| + \sqrt{p_i}|$. On the contrary, if it is correct, \emph{i.e.} $\sigma(i) = sgn(x_i)$, we have to bound $|x_i - \sigma(i)\sqrt{p_i}| = ||x_i| - \sqrt{p_i}| \leq ||x_i| + \sqrt{p_i}|$. Therefore only one bound is necessary. 

We use Chernoff Bound (2.) on the random variable X with $\beta > 0$ to obtain

$$
\PP[ p_i > (1+\beta)x_i^2] \leq e^{\frac{\beta^2}{2+\beta}Nx_i^2}
$$

We chose $\beta = \delta^2/x_i^2 $ and obtain $\PP[ p_i > x_i^2 + \delta^2] \leq e^{\frac{\delta^4}{3\delta^2}N} = 1/d^{12}$. Therefore, if $i \notin S$, with very high probability $1 - \frac{1}{d^{12}}$ we have  $p_i \leq x_i^2 + \delta^2 \leq 2\delta^2$. We can conclude and bound the error:
$$
|x_i - \tilde{X}_i| \leq ||x_i| + \sqrt{p_i}| \leq \delta+\sqrt{2}\delta = (1+\sqrt{2})\delta
$$
Since $|\overline{S}| \leq d$, the probability for this result to be true for all $i \notin S$ is $1 - \frac{1}{d^{11}}$. This follows from applying the Union Bound on the event $p_i > x_i^2 + \delta^2$. 
\end{proof}

\section{Quantum-Inspired Classical Algorithm}

Recent work \cite{tang2018quantum} has provided quantum inspired classical algorithms for clustering that rely on $\ell_2$ sampling using classical analogs of the binary search tree (BST) data structure to efficiently estimate inner products. Indeed the inner product can be efficiently approximated classically, analogous to quantum inner product estimation. As shown in \cite{kerenidis2018neural}, if $x,y \in \R^{n}$ are stored in $\ell_2$-BST then, with probability at least $1 - \Delta$, a value $s$ that approximates the inner product $\braket{x}{y}$ can be computed with the following guarantees, 
\begin{equation}
|s - \braket{x}{y}| \leq 
\begin{cases}
\epsilon \qquad\qquad \text{in time} \quad  \widetilde{O}\left( \frac{\log(1/\Delta)}{\epsilon^2}\frac{\norm{x}^2\norm{y}^2}{|\braket{x}{y}|} \right) \\
\epsilon |\braket{x}{y}| \quad \text{in time} \quad  \widetilde{O}\left( \frac{\log(1/\Delta)}{\epsilon^2}\norm{x}^2\norm{y}^2 \right)
\end{cases}
\end{equation}

The running time is similar to the quantum inner product estimation presented in Section \ref{innerproduct}, but presents a quadratic overhead on the precision $\epsilon$ and the norms of the vectors $x$ and $y$, which in our case would be $A^{\ell}_{p}$ and $F^{\ell}_{q}$, input and kernel vectors. Similarly, the steps corresponding to the amplitude amplification of Section \ref{amplitudeamplification} can be done classically with a quadratically worse dependance on the parameters. 

It is therefore possible to define a classical algorithm inspired from this work, if the matrices $A^{\ell}$ and $F^{\ell}$ are stored in classical $\ell_2$ BST. Using the above result, and applying classically non linearity and pooling, would give a forward pass algorithm with running time,  

\begin{equation}
\widetilde{O}\left(H^{\ell +1}W^{\ell +1}D^{\ell+1} \cdot \frac{M^2 C}{\epsilon^2 \mathbb{E}(f(\overline{X}^{\ell+1}))} \right).
\end{equation}

Similar to the quantum algorithm's running time (\ref{final_runtime_2}), we obtain a poly-logarithmic dependance on the size of the kernels. We however see a quadratically worse dependance with respect to $\epsilon$, $M = \max_{p,q}{\norm{A_{p}}\norm{F_{q}}}$, $C$ the upper bound of the non linearity, and the average value of $f(\overline{X}^{\ell+1})$, too. Recent numerical experiments \cite{arrazola2019quantum, kerenidis2018neural} showed that such quantum inspired algorithms were less performant than the quantum ones, and even than the standard classical algorithms for performing the same tasks. 

Finally, the quantum inspired algorithm doesn't provide the speedup characterized by $\sigma \in [0,1]$, the fraction of sampled elements among $H^{\ell +1}W^{\ell +1}D^{\ell+1}$ of them. Indeed, the probabilistic importance sampling described in Section \ref{tomographyconvolution}, that allows to sample the highest values of the convolution product output does not have a classical analog. The importance sampling does not offer an asymptotic speedup, however it could offer constant factor savings that are relevant in practice.

\end{document}